\newcommand{\iP}{\widetilde{\mathcal{P}}}
\newcommand{\hP}{\widehat{\mathcal{P}}}
\newcommand{\cM}{\mathcal{M}}
\newcommand{\cD}{\mathcal{D}}
\newcommand{\cB}{\mathcal{B}}
\newcommand{\bg}{\bar{\gamma}}
\newcommand{\link}{\mathrm{link}}
\newcommand{\enc}{\mathrm{enc}}
\newcommand{\tup}{\mathrm{tup}}
\newcommand{\wt}{\mathrm{wt}}
\newcommand{\sgn}{\mathrm{sgn}}
\newcommand{\PO}{\mathrm{PO}}
\newcommand{\PPO}{\mathrm{PPO}}
\newcommand{\Lyndon}{\mathrm{L}_2(n)}
\newcommand{\diag}{\mathrm{diag}}
\newcommand{\A}{\mathbf{A}}
\newcommand{\U}{\mathbf{U}}
\newcommand{\ML}{\mathbf{L}}
\newcommand{\UI}{\mathrm{\bf I}}  
\newcommand{\VS}{\boldsymbol{\sigma}}
\newcommand{\BS}{\boldsymbol{\Sigma}}
\newcommand{\T}{\mathbf{T}}
\newcommand{\B}{\mathbf{B}}
\renewcommand{\S}{\mathbf{S}}
\renewcommand{\H}{\mathbf{H}}
\newcommand{\F}{\mathbf{F}}
\newcommand{\D}{\mathbf{D}}
\newcommand{\1}{\mathbf{1}}
\newtheorem{theorem}{Theorem}
\newtheorem{corollary}{Corollary}
\newtheorem{proposition}{Proposition}
\newtheorem{lemma}{Lemma}
\begin{document}

\title[The characteristic polynomial of binary quantum graphs]{Complete dynamical evaluation of the characteristic polynomial of binary quantum graphs}

\author{J.M. Harrison$^1$ and T. Hudgins$^2$}

\address{$^1$ Baylor University, Department of Mathematics,
	Sid Richardson Building, 1410 S. 4th Street, Waco, TX 76706, USA}
\address{$^2$ University of Kansas, Department of Mathematics,
	405 Snow Hall, 1460 Jayhawk Blvd., Lawrence, KS 66045, USA}
\ead{jon\_harrison@baylor.edu, thudgins@ku.edu}
\vspace{10pt}
\begin{indented}
\item[]May 2022
\end{indented}

\begin{abstract}
We evaluate the variance of coefficients of the characteristic polynomial for binary quantum graphs using a dynamical approach.  This is the first example where a spectral statistic can be evaluated in terms of periodic orbits for a system with chaotic classical dynamics without taking the semiclassical limit, which here is the limit of large graphs.  The variance depends on the sizes of particular sets of primitive pseudo orbits (sets of distinct primitive periodic orbits): the set of primitive pseudo orbits without self-intersections and the sets of primitive pseudo orbits with a fixed number of self-intersections, all of which consist of two arcs of the pseudo orbit crossing at a single vertex. To show other pseudo orbits do not contribute we give two arguments. The first is based on a reduction of the variance formula from a sum over pairs of primitive pseudo orbits to a sum over pseudo orbits where no bonds are repeated.  The second employs a parity argument for the Lyndon decomposition of words.  
For families of binary graphs, in the semiclassical limit, we show the pseudo orbit formula approaches a universal constant independent of the coefficient of the polynomial.  This is obtained by counting the total number of primitive pseudo orbits of a given length.  
\end{abstract}

%
\vspace{2pc}
\noindent{\it Keywords}: quantum graphs, quantum chaos, semiclassical methods, Lyndon words


\section{Introduction}
\label{sec:intro}
The dynamical approach to spectral problems started fifty years ago with Gutzwiller's trace formula \cite{G71,G90}.  Gutzwiller derived a semiclassical expansion for the density of states as a sum over classical periodic orbits for a wide class of quantum systems, including those where the classical dynamics are chaotic.  The trace formula has formed the basis to study spectral properties of quantum chaotic systems.  For instance, it provides a mechanism to analyze small parameter asymptotics of the form factor, the Fourier transformation of the two point correlation function.  Zeroth order contributions in this expansion were evaluated by Berry \cite{B85} using the Hannay and Ozorio de Almeida sum rule \cite{HA84}.  It took a further fifteen years before Sieber and Sieber and Richter \cite{S02,SR01} obtained first order terms in the expansion for quantum billiards by considering figure eight orbits with a single self-intersection.  Subsequently, higher order terms were evaluated in quantum graphs by including orbits with multiple self-intersections by Berkolaiko, Schanz and Whitney \cite{BSW02,BSW03} and the scheme was expanded to obtain all orders in the form factor expansion by M\"uller, Heusler, Braun, Haake, and Altland \cite{Metal04}.  
Note that, while we mentioned some results for the form factor, this kind of dynamical analysis, in the semiclassical limit, has been applied in many contexts, see e.g. \cite{BBK01,BHN08,BK12,BH03,DOW05,KM00,KMM01}.  

In this dynamical approach to spectral statistics, a significant role has been played by quantum graph models.  
A quantum graph is a quasi-one-dimensional network of bonds connected at vertices.
Quantum graphs were introduced as a model of a quantum system with chaotic classical dynamics by Kottos and Smilansky \cite{KS99}. They derive a trace formula for quantum graphs that plays the role of the Gutzwiller trace formula, expressing the density of states as a sum over periodic orbits.  However, for quantum graphs, this trace formula is exact. It holds for graphs of finite size rather than only in the semiclassical limit of increasing spectral density, which for quantum graphs is the limit of a sequence of graphs with increasing numbers of bonds.  The first quantum graph trace formula was obtained by Roth \cite{R83}, and general quantum graph trace formulas appear in \cite{BE09,KPS07}.  While the trace formula for quantum graphs is exact, to evaluate contributions to the form factor expansion still requires the semiclassical limit.  Quantum graphs are currently employed in many areas of mathematical physics from Anderson localization and carbon nanotubes, to mesoscopic physics and waveguides; see \cite{BK13} for an introduction to quantum graphs and \cite{GS06} for a review of quantum graphs in quantum chaos. 

In \cite{KS99} Kottos and Smilansky also investigated the coefficients of the characteristic polynomial of a quantum graph.  The characteristic polynomial of a quantum graph with $B$ bonds is,
\begin{equation} 
\label{charpoly}
\det(\U (k) - \zeta \UI) = \sum\limits_{n=0}^{B} a_n \zeta^{B-n} \ .
\end{equation}
The matrix $\U (k)$ is the quantum evolution operator of the graph, a product of the bond scattering matrix and a matrix of phases $\rme^{\rmi k\ML }$ acquired when traversing the bonds where $\ML$ is a diagonal matrix of bond lengths.\footnote{The bond scattering matrix and quantum evolution operator are often written as $2B\times 2B$ matrices for a graph with $B$ bonds which is consistent with the notation here when two directed bonds run between every pair of connected vertices, one in each direction.}  The equation $ \det(\U (k) - \UI)=0$ is a secular equation for the quantum graph; solutions are the $k$-spectrum of the graph, square roots of the eigenvalues \cite{KS99}.
Significantly, the coefficients $a_n$ can also be expressed in terms of periodic orbits on the graph, see section \ref{sec:orbits}. The formula for the coefficients $a_n$, like the graph trace formula, does not require a semiclassical limit.  The $n$-th coefficient of the characteristic polynomial is written as a sum over collections of periodic orbits where the total number of bonds in the set of orbits is $n$.  These collections of periodic orbits are referred to as pseudo orbits, so coefficients are expressed as a sum over a finite collection of pseudo orbits.  Kottos and Smilansky describe this in \cite{KS99} without an explicit formula, and a minimal pseudo orbit expansion was obtained in \cite{BHJ12}.  Semiclassical pseudo orbit expansions derived from the Gutzwiller trace formula have been applied to evaluate correlators of the level density in chaotic quantum systems \cite{Metal05,NM09}.

Averaging $a_n$ over the spectral parameter $k$, one obtains zero (except for $a_0$ where the average is one).  In this article we show that the first nontrivial moment of the coefficients, the variance, can be fully evaluated dynamically for families of binary graphs without taking a semiclassical limit.  Binary graphs were introduced by Tanner \cite{T00,T01,T02}, see section \ref{sec:binary}. They are directed graphs with $V=p\cdot 2^r$ vertices and $B= p\cdot 2^{r+1}$ bonds with two incoming and two outgoing bonds at each vertex; $p$ is odd.
When $p=1$ there is a natural labeling of the vertices by binary words of length $r$ and of edges by binary words of length $r+1$.  The first $r$ letters of an edge label is the initial vertex of the bond and the last $r$ letters are the terminal vertex. 
In \cite{T02} the asymptotics of the variance of the coefficients of the characteristic polynomial are investigated in the semiclassical limit for families of binary graphs with $p$ fixed.  They are seen to approach a family-dependent constant independent of the coefficient of the characteristic polynomial.  This constant is obtained from a diagonal approximation using a matrix permanent.   In \cite{BHS19} the variance of families of $q$-nary graphs with $p=1$ was investigated semiclassically via pseudo orbits by evaluating a diagonal contribution to the variance.  

We now describe the main results.  For binary graphs, the variance of the coefficients of the characteristic polynomial is determined by the sizes of certain sets of primitive pseudo orbits.  A primitive periodic orbit is an orbit that is not a repetition of a shorter orbit, and a primitive pseudo orbit is a set of distinct primitive periodic orbits.  A pseudo orbit has an encounter, or self-intersection, if a section of the orbit is repeated in the orbit or in another periodic orbit in the set.  The repeated section could include one or more bonds in the graph, so the orbit intersects itself and follows the original path for some number of bonds, or the repeated section can be a single vertex where the periodic orbit or orbits intersect, in which case we say the encounter has length zero.  An $\ell$-encounter is a section that is repeated $\ell$ times in one or more orbits, see section \ref{sec:enc} for more details.  Orbits and pseudo orbits can have multiple encounters with possibly different values of $\ell$ and different lengths for each encounter.  However, it turns out that the variance only depends on the sizes of sets of primitive pseudo orbits with particular types of encounters.   

\begin{proposition}\label{thm:main}  Consider a binary graph with $V=p\cdot 2^r$ vertices where $p$ is odd.  The variance of the $n$-th coefficient of the characteristic polynomial is, 
	\begin{equation}
	\label{variancefinal}
	= \frac{1}{2^{n}} \left( |\mathcal{P}_0^n| + \sum_{N=1}^n 2^N \, |\hP_{N}^n| \right) \ ,
	\end{equation}
	where $\mathcal{P}_0^n$ is the set of primitive pseudo orbits with $n$ bonds and no self-intersections and $\hP_{N}^n$ is the set of primitive pseudo orbits with $n$ bonds and $N$ self-intersections, all of which are $2$-encounters of length zero.
\end{proposition}
Comparing the variance evaluated via pseudo orbits using proposition \ref{thm:main} to numerical results we find agreement to four decimal places, see section \ref{sec:examples}.

While this result shares some features of the semiclassical expansion of the form factor it is surprisingly different.
  As in semiclassical expansions of the form factor, an important role is played by orbits with self-intersections.  However, in the form factor expansion all orbits contribute to zeroth order and higher order contributions are obtained from all orbits with progressively more self-intersections.   In contrast, in proposition \ref{thm:main}, only pseudo orbits where all the self-intersections are of length zero contribute.

In \cite{BHS19} Band, Harrison and Sepanski evaluate a diagonal contribution to the variance dynamically for the family of binary graphs with $p=1$ by counting the total number of primitive pseudo orbits of length $n$ using Lyndon decompositions of binary words of length $n$.  In the semiclassical limit of large graphs the variance converges to this diagonal contribution, in agreement with \cite{T02}.  It is therefore important to understand how this is possible, given that many pseudo orbits do not appear in proposition \ref{thm:main}.    
 In the semiclassical limit of large graphs, for a family of graphs defined by the choice of $p$, the sum in parenthesis in (\ref{variancefinal}) converges to the total number of primitive pseudo orbits on the graph, see section \ref{sec:semiclassical}.  
 So while the variance is determined only by the numbers of primitive pseudo orbits with self-intersections of zero length weighted by certain factors, it turns out that the sizes of these sets combined with the weights produces, in the semiclassical limit, the total number of pseudo orbits.

 To evaluate this semiclassical variance for $p\neq 1$ we count primitive orbits and pseudo orbits on binary graphs.
\begin{proposition}
	\label{thmNumPPOs}
	Let $p>0$ be an odd number and consider the binary graph with $V=p \cdot 2^r$ vertices.  Then the number of primitive pseudo orbits of length $n > p$ is 
	\begin{equation}
	\PPO_{p}(n) = C_{p} \cdot 2^{n-1} \ ,
	\label{eq:PPOpq}
	\end{equation}
	where $C_p$ is a constant, $1 \leq C_{p} \leq \frac{3}{2} (p-1) $ for $p>1$, and $C_1=1$.
\end{proposition}
The family-dependent constant $C_p$ is evaluated from the cycle decomposition of a generalized $p\times p$ permutation matrix in section \ref{sec:counting prim PPOs}.  
This leads to the following corollary.
\begin{corollary}\label{cor:asymp}
	Let $p>0$ be an odd number and consider the family of binary graphs with $V=p\cdot 2^r$ vertices and $B=p\cdot 2^{r+1}$ bonds. Fixing the ratio $n/B$,
	\begin{equation}
	\lim_{r\to\infty}
	\langle |a_n|^2 \rangle_k = \frac{C_{p}}{2} \ .
	\label{cor:varriance assymp}
	\end{equation}
	Consequently, for the family of binary graphs with $p=1$, we have
	$\lim_{r\to\infty} \langle |a_n|^2 \rangle_k = 1/2$.
\end{corollary}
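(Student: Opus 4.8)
The plan is to read off the corollary by combining the exact variance formula of Theorem~\ref{thm:main} with the pseudo orbit count of Theorem~\ref{thmNumPPOs}, linked by the asymptotic identification of the bracketed sum in \eqref{variancefinal} with the total number of primitive pseudo orbits. Fixing $n/B$ with $B=p\cdot 2^{r+1}$ forces $n\to\infty$ as $r\to\infty$, so $n>p$ for all large $r$ and Theorem~\ref{thmNumPPOs} applies, giving $\PPO_p(n)=C_p\cdot 2^{n-1}$. Granting that the bracket $|\mathcal{P}_0^n|+\sum_{N\ge 1}2^N|\widehat{\mathcal{P}}_{N,0}^n|$ is asymptotic to $\PPO_p(n)$, formula \eqref{variancefinal} immediately yields $\langle|a_n|^2\rangle_k\to 2^{-n}\,\PPO_p(n)=C_p/2$, and the stated value for $p=1$ follows from $C_1=1$.

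The real work is the middle claim, that $2^{-n}$ times the bracket converges to the diagonal value $2^{-n}\,\PPO_p(n)$. This is the content delegated to Section~\ref{sec:asymptotic variance}, and it makes precise the statement that the full variance converges to the semiclassical diagonal approximation of \cite{T01,BHS19}, in which the variance equals the number of primitive pseudo orbits divided by $2^n$. The point to establish is that the complete sum in \eqref{variancefinal}, including the self-intersection terms weighted by $2^N$, reassembles exactly this count in the limit of large graphs.

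I would organise the argument around the reconnection structure at a length-zero $2$-encounter: where two orbit strands meet at a vertex they occupy both incoming and both outgoing bonds, and there are precisely two ways to pair incoming with outgoing, so a bond configuration with $N$ such encounters carries a factor of $2^N$ while leaving the underlying bond set and the set of self-intersections unchanged. The aim is then to show, configuration by configuration, that the weighted bracket reproduces the total set of primitive pseudo orbits enumerated by $\PPO_p(n)$. The principal obstacle is that this cannot be done by treating self-intersections as rare: with $n/B$ fixed the number of degree-two vertices, and hence the typical number of encounters $N$, grows with the graph, so that almost every pseudo orbit self-intersects (indeed $\mathcal{P}_0^n$ is already empty once $n$ exceeds the number of vertices). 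The convergence must instead come from exact reconnection bookkeeping, controlling those reconnections that fail primitivity by producing a repeated orbit and showing they form a vanishing proportion as $r\to\infty$ with $n/B$ held fixed; this estimate is the heart of Section~\ref{sec:asymptotic variance} and the crux of the corollary.
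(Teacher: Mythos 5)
Your outer skeleton is exactly the paper's: Theorem \ref{thm:main}, plus the identification of the bracketed sum in \eqref{variancefinal} with $\PPO_{p}(n)$, plus the count $\PPO_p(n)=C_p\cdot 2^{n-1}$ of Theorem \ref{thmNumPPOs final}, yields $C_p/2$, with $C_1=1$ for the de Bruijn family. The gap is in your proposed proof of the middle step, which is the only step with real content. You organize it as exact reconnection bookkeeping at length-zero $2$-encounters, with the sole error terms being reconnections that destroy primitivity. Two problems. First, those error terms do not exist: a pseudo orbit in $\widehat{\mathcal{P}}_{N,0}^n$ traverses every bond at most once (a bond traversed twice would force a maximal repeated sequence of positive length, i.e.\ a positive-length encounter), and reconnection at the $N$ encounter vertices preserves the bond set; a non-primitive outcome --- a repeated orbit in the collection, or an orbit that repeats a shorter one --- would have to repeat bonds, so all $2^N$ reconnections are automatically primitive and lie again in $\widehat{\mathcal{P}}_{N,0}^n$. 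The "crux" you single out is vacuous. Second, and fatally, precisely because reconnection preserves the bond configuration, your bookkeeping never leaves the class of orbits whose self-intersections are all $2$-encounters of length zero; it cannot produce any pseudo orbit with a positive-length encounter or an $\ell\geq 3$ encounter, since those have bond configurations (with repeated bonds) disjoint from every configuration appearing in the bracket. For fixed $n/B$ these excluded orbits are not a small correction: as you yourself observe, the typical number of encounters $N$ grows linearly in $n$, and the all-length-zero orbits are a fraction of order $2^{-N}$ of $\mathcal{P}_N^n$, i.e.\ an exponentially small minority of $\PPO_p(n)$. No configuration-by-configuration identity can reassemble the total count out of them.

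The compensation has to come from exactly the per-encounter estimates you reject, and your reason for rejecting them misreads the paper. Proposition \ref{thmBGvariance} does not treat self-intersections as rare; it makes two claims that are insensitive to how large the typical $N$ is: (i) within $\widehat{\mathcal{P}}_N^n$ each $2$-encounter has length zero with probability $1/2$, so $|\widehat{\mathcal{P}}_{N,0}^n|\approx 2^{-N}|\widehat{\mathcal{P}}_N^n|$, which the weight $2^N$ in \eqref{variancefinal} cancels term by term in $N$; and (ii) pseudo orbits containing any $\ell\geq 3$ encounter form a vanishing fraction of $\mathcal{P}_N^n$ (a $1/B$ suppression, using that binary graphs are mixing), so $|\widehat{\mathcal{P}}_N^n|\approx|\mathcal{P}_N^n|$. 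Summing over $N$ then gives the bracket $\approx\sum_N|\mathcal{P}_N^n|=\PPO_p(n)$, after which Theorem \ref{thmNumPPOs final} finishes the proof as you describe. Without (i) and (ii), or a substitute for them, your argument has no mechanism connecting the weighted bracket to the total number of primitive pseudo orbits, so the heart of the corollary is missing.
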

While it was observed in \cite{T02} and \cite{BHS19} that the variance for families of binary quantum graphs approaches a constant value, the argument to obtain the constant, in both cases, makes some form of diagonal assumption, although the arguments are different in each case.  The corollary establishes the result without making such an approximation.


The results presented are a significant advance, evaluating a spectral statistic for a classically chaotic system without using the semiclassical limit.  However, it is important also to point out why it was possible to obtain this for binary quantum graphs while such results are elusive in other systems.  Firstly, quantum graphs have a very strong form of symmetry that allows the coefficients of the characteristic polynomial and also the trace formula, to be expressed as a sum of dynamical quantities without taking the semiclassical limit.  This is why they were proposed as a model for quantum chaos by Kottos and Smilansky \cite{KS99}.  For quantum graphs the coefficients of the characteristic polynomial also exhibit an exact Riemann-Siegel lookalike formula.  The Riemann-Siegel lookalike formula originally discovered by Berry and Keating \cite{BK90,BK92,K92} and in another form by Bogomolny \cite{B92} is a remarkable resummation procedure that connects long and short orbits.  
Having chosen a setting and statistic for which there is an unusually precise dynamical formula, we are then able to exploit a choice of graph, binary graphs, where a lot of dynamical information can be evaluated.   
Tantalizingly, the precise dynamical expression for the variance in proposition \ref{thm:main} suggests the intuition of Sieber and Sieber and Richter \cite{S02,SR01}, when they started by finding partner orbits with $2$-encounters, is also key to unlocking the quantum spectral statistics of classically chaotic systems without the semiclassical limit.  

The article is organized as follows.  In section \ref{sec:background} we introduce binary quantum graphs and the expansion of coefficients of the characteristic polynomial as a sum over primitive pseudo orbits.  Section \ref{sec:variance} proves proposition \ref{thm:main} by evaluating the contribution to the variance from an irreducible pseudo orbit and its partner orbits of the same length.  
In section \ref{sec:contributions} we present a cancellation scheme for the primitive but not irreducible pseudo orbits using a parity argument.  
In section \ref{sec:semiclassical} we take the semiclassical limit of proposition \ref{thm:main} for families of binary graphs which relates the variance of the $n$-th coefficient of the characteristic polynomial to the total number of primitive pseudo orbits with $n$ bonds.
In section \ref{sec:counting} we prove proposition \ref{thmNumPPOs} for the number of primitive pseudo orbits of $n$ bonds which, when combined with the semiclassical formula from section \ref{sec:semiclassical}, establishes corollary \ref{cor:asymp}. We illustrate the results with a number of examples in section \ref{sec:examples} which we compare with numerical calculations of the variance.  
We summarize the results in section \ref{sec:conclusions}.  The appendices describe how the sizes of the sets of pseudo orbits used to compare numerical calculations of the variance with proposition \ref{thm:main} were evaluated.

\section{Background}
\label{sec:background}

\subsection{Binary Graphs}
\label{sec:binary}

A \textit{directed graph} (or graph), $\Gamma$, is a set of \textit{vertices}, $\mathscr{V}= \{0,1,\dots,V-1\}$, connected by a set of \textit{bonds} $\mathscr{B}$.  A bond $b\in \mathscr{B}$ is an ordered pair of vertices $ b=(i,j)$ with $i,j \in\mathscr{V}$.
We denote the number of vertices by $V=|\mathscr{V}|$ and the number of bonds by $B=|\mathscr{B}|$.
The \textit{origin} and \textit{terminus} of a bond can be specified via functions $o,t:\mathscr{B} \to \mathscr{V}$ such that $b=(o(b), t(b))$.
Two vertices, $i,j\in\mathscr{V}$, are \textit{adjacent} if at least one of the ordered pairs $(i,j)$ or $(j,i)$ is in $\mathscr{B}$; we write $i\sim j$.  A bond $b$ is \textit{outgoing} at $v$ if $o(b)=v$ and $b$ is \textit{incoming} at $v$ if $t(b)=v$.
The number of incoming and outgoing bonds at $v$ are denoted $d_v^{\textrm{in}}$ and $d_v^{\textrm{out}}$, respectively, and $d_v = d_v^{\textrm{in}} + d_v^{\textrm{out}}$ is the \textit{degree} of vertex $v$.  
For our purposes we will always consider a graph to be such a directed graph and we use bond rather than edge as a reminder that the graph is directed.  Figure \ref{figBGp1} shows a (directed) graph with four vertices and eight bonds.

\begin{figure}[htb]
	\centering
	\includegraphics[scale=1, trim=125 510 250 125, clip]{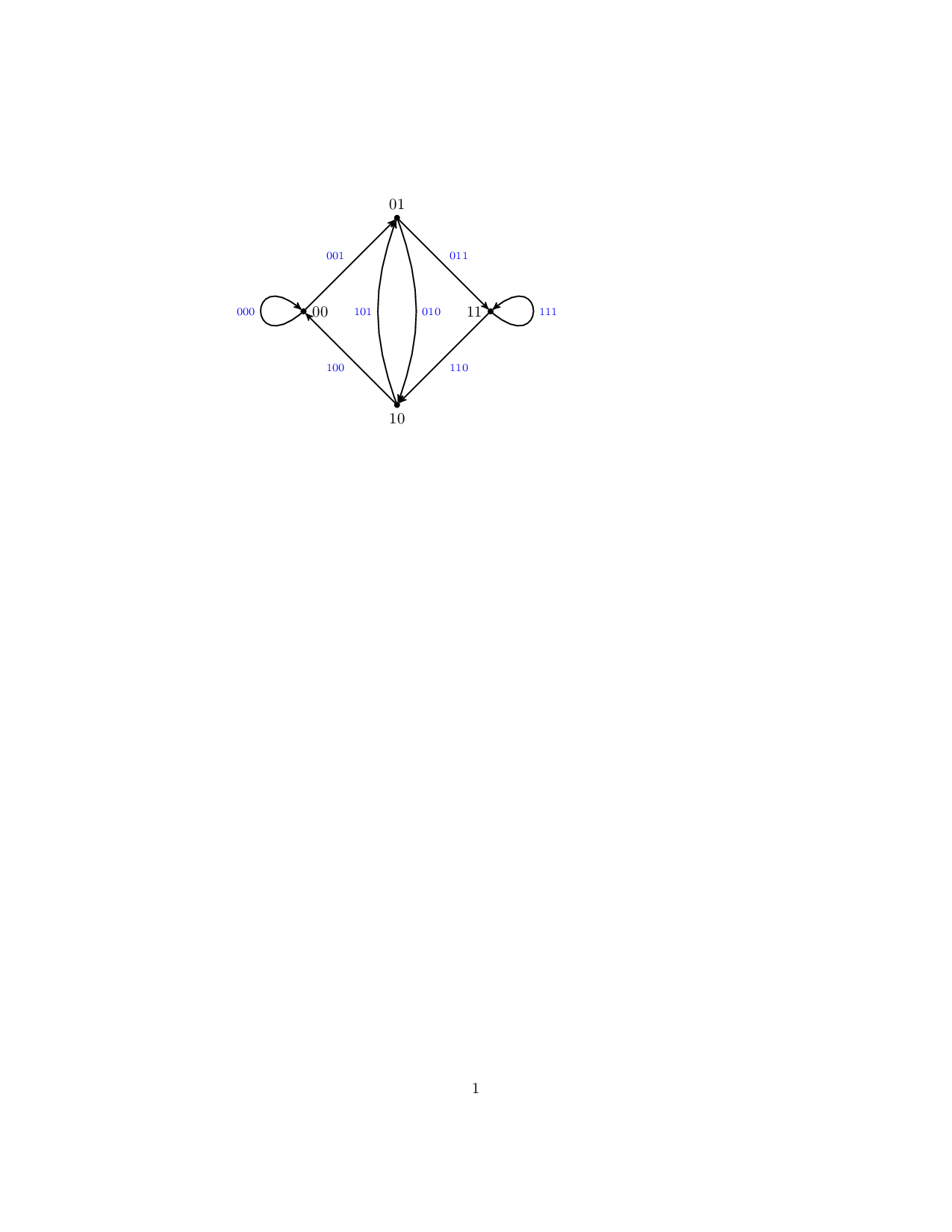}
	\caption{The binary graph with $V=2^2$ vertices and $B=2^3$ bonds} 
	\label{figBGp1}
\end{figure}

A \textit{binary graph}, see \cite{BHS19,T00,T01,T02}, has $V=p\cdot 2^r$ vertices and $B=p\cdot 2^{r+1}$ bonds, where $p$ is odd and $r\in\mathbb{N}$.
The $V\times V$ adjacency matrix, which indicates the presence of a bond connecting vertex $i$ to vertex $j$, has the form,
\begin{equation}\label{eq:defn A_V}
[\A_V]_{i,j} = 
\cases{\delta_{2i,j}+\delta_{2i+1,j}  & for $0\leq i <V/2$ \\ \delta_{2i-V,j}+\delta_{2i+1-V,j} & for $V/2 \leq i < V$\\}
\end{equation}
where $\delta_{i,j}$ is the Kronecker delta and $0\leq j <V$.
So $d_v^{\textrm{in}} = d_v^{\textrm{out}} = 2$, and the graphs are $4$-regular.
The binary graph with $V=2^2$ vertices, see figure \ref{figBGp1}, has the adjacency matrix,
\begin{equation}
\A_4 = \left( \begin{array}{ c c c c }
1 & 1 & 0 & 0 \\
0 & 0 & 1 & 1 \\
1 & 1 & 0 & 0 \\
0 & 0 & 1 & 1 
\end{array} \right) \ .
\end{equation}

When $p=1$ and thus $V=2^r$, a vertex $w\in \{0,1,\dots,V-1\}$ can be written as a word of length $r$ in the binary alphabet $\{0,1\}$,
\begin{equation}
w = w_0 w_1 \cdots w_{r-1}\ . 
\end{equation} 
A vertex labeled by $w_0 w_1 \dots w_{r-1}$ is connected by an outgoing bond to $w_1 w_2 \dots w_{r-1} 0$ and $w_1 w_2 \dots w_{r-1} 1$.
To see this let,
\begin{eqnarray}
\label{ibinary}
i &= w_0 2^{r-1} + w_1 2^{r-2} + \cdots w_{r-2} 2 + w_{r-1}\ ,\\
\label{jbinary}
j & = w_1 2^{r-1} + w_2 2^{r-2} + \cdots w_{r-1} 2 + w_r \ .
\end{eqnarray}
Then $2i + w_r \pmod{V} = j$, so $[\A]_{i,j} =1$. 
So we can label vertices with words of length $r$ and bonds with words of length $r+1$, where the bond labeled by $w_0 w_1 \dots w_{r-1} w_r$ connects the vertex $w_0 w_1 \dots w_{r-1}$ to the vertex $w_1 \dots w_{r-1} w_r$. 
Figure \ref{figBGp1} shows the vertex and bond labels of the binary graph with $V=2^2$.

\subsection{Binary Quantum Graphs}
\label{sec:binaryQG}

A graph becomes a \textit{metric graph} when we assign to each bond $b$ a positive \textit{length} $L_b$.
There are two methods that are commonly used to define a \textit{quantum graph} given a metric graph; details of both can be found in \cite{BK13,GS06}.
Firstly one can define a self-adjoint differential operator acting on functions defined on a set of intervals corresponding to the bond lengths.
The alternative approach, which we adopt here, is to associate to each vertex a unitary \emph{vertex scattering matrix}, $\VS^{(v)}$, that assigns scattering amplitudes from incoming to outgoing bonds at each vertex on the graph.
For a $q$-regular graph, a popular choice is the $q \times q$ Discrete Fourier Transform (DFT) matrix \cite{BHS19,BK13,T00,T01,T02}.  Such a choice is democratic, in the sense that the probability of scattering from incoming bond $b$ to outgoing bond $b'$, which is $|\VS^{(v)}_{b',b}|^2$, is equal for all pairs of bonds $b,b'$ meeting at $v$. For some dimensions, it is also possible to generate unitary scattering matrices with the democratic property from vertex conditions of a self-adjoint Laplace operator on a metric graph \cite{TC11}. 
A vertex scattering matrix which corresponds to Neumann-like (or standard) vertex conditions of a Laplace operator on a metric graph \cite{KS99}, is
\begin{equation}
\VS_{b',b}^{(v)} = \frac{2}{d_v} - \delta_{b',b} \ . 
\end{equation}
Such a scattering matrix distinguishes backscattering from other transitions.
 Other choices of $\VS^{(v)}$ include equi-transmitting matrices, which retain democratic scattering probabilities but prohibit backscattering  \cite{HSW07,KOR14}.

To quantize a binary graph we employ the $2\times 2$ DFT matrix, 
\begin{equation}
\VS^{(v)} = \frac{1}{\sqrt{2}} \left( 
\begin{array}{ c c }
1 & 1 \\
1 & -1 \\
\end{array}
\right)\ .
\label{DFTmatrix}
\end{equation}
The probability of scattering from $b$ to $b'$ is then,
\begin{equation}
\label{eq:democratic}
|\VS_{b',b}^{(v)}|^2 = 
\cases{2^{-1} & if  $t(b)=o(b')$ \\
0	& otherwise  \\} \ .
\end{equation} 
Note that, we can index the scattering matrices by the incoming and outgoing bonds, $\VS_{b',b}^{(v)}$, or by the terminus and origin vertices of these bonds, $\VS_{t(b'),o(b)}^{(v)}$.  

The \textit{bond scattering matrix} $\BS$ is a $B \times B$ matrix that collects all the vertex scattering amplitudes,
\begin{equation}
\label{bondscattmatrix}
\BS_{b',b} = \delta_{o(b'),v}\ \delta_{t(b),v}\   \VS_{b',b}^{(v)} \ . 
\end{equation}
We define a diagonal matrix of bond lengths $\ML=\diag\{ L_1,\dots,L_B \}$. 
The unitary matrix $\U(k)=\BS \rme^{\rmi k\ML }$ is the \textit{quantum evolution map}, see \cite{BK13,GS06}.
The one-parameter family of unitary matrices, $\BS \rme^{\rmi k\ML }$ over $k$, is a unitary-stochastic ensemble \cite{T01} where the associated stochastic matrix is the matrix of classical transition probabilities from bond $b$ to $b'$,
\begin{equation}
\label{transitionmatrix}
\T_{b',b} = |\BS_{b',b}|^2 \ . 
\end{equation}
We note that $\T$ is doubly stochastic, as each of the incoming and outgoing degrees at any vertex is $2$.
The $k$-spectrum of the quantum graph is the set of solutions of the \textit{secular equation},
\begin{equation}
\label{seculareqn}
\det(\BS \rme^{\rmi k \ML}-\UI)=0 \ .
\end{equation}
For a bond scattering matrix associated with a Laplace operator on the graph, if $k\neq 0$ is a solution of the secular equation, then $k^2$ is an eigenvalue of the Laplace operator with the same multiplicity.  According to the Weyl law, for a $k$-independent scattering matrix, the mean spacing of points in the $k$-spectrum is $2\pi/L_{\Gamma}$ where $L_{\Gamma}$ is the total length of the graph \cite{BK13}.

For a $B\times B$ matrix $\U(k)$, the \textit{characteristic polynomial} is,
\begin{equation} 
\label{charpoly matrix}
F_\zeta(k) = \det(\U (k) - \zeta \UI) = \sum\limits_{n=0}^{B} a_n(k) \zeta^{B-n} \ . 
\end{equation}
Note that, when $\U (k)$ is the quantum evolution map $\BS \rme^{\rmi k\ML }$ for a quantum graph, then $F_{1}(k)=0$ is the secular equation (\ref{seculareqn}). 
The unitarity of the quantum evolution map $\U (k)$ produces a Riemann-Siegel lookalike formula \cite{KS99}
associating pairs of coefficients of the characteristic polynomial,
\begin{equation}\label{eq:Riemann-Siegel}
a_n=\overline{a}_{B-n} \ .
\end{equation} 

The coefficients of the characteristic polynomial will be our main object of study.  It is well known that a quantum graph has an exact trace formula for the $k$-spectrum as a sum over periodic orbits on the graph \cite{KS99,BE09}.  Coefficients of the characteristic polynomial can be expressed as a sum over a finite set of pseudo orbits, collections of periodic orbits, where the pseudo orbits visit precisely $n$ bonds \cite{KS99,Aetal00,BHJ12}.  Hence all the spectral information is encoded in orbits of, at most, half the number of bonds of the graph. 

\subsection{Periodic Orbits and Pseudo Orbits}
\label{sec:orbits}

A \textit{path} of length $n$ is a sequence of bonds, 
\begin{equation*}
\label{pathbonds}
(b_1, b_2, \dots, b_n) \ , 
\end{equation*}
such that $t(b_i)=o(b_{i+1})$ for each $1\leq i\leq n-1$.  
Equivalently, we can define a path by a sequence of adjacent vertices,  
\begin{equation*}
\label{pathvertices}
(v_1, v_2, \dots, v_n, v_{n+1})\ , 
\end{equation*}
where $b_j=(v_j, v_{j+1})$.
A \textit{closed path} starts and ends on the same vertex, so that $t(b_n)=o(b_1)$ and $v_{n+1} = v_1$.  

On binary graphs with $V=2^r$, a path of length $n$ can be labeled by a word of length $n+r$,
\begin{equation*}
w_0 w_1 \cdots w_{n-1} w_n \cdots w_{n+r-1}\ ,
\end{equation*}
with $w_j\in \{ 0, 1\}$.  
The first $r+1$ entries label the bond $b_1$, and the consecutive groupings of $r+1$ adjacent letters label the sequence of bonds in the path.  

Closed paths of length $n$ are labeled by words of length $n$, 
\begin{equation*}
w_0 w_1 \cdots w_{n-1} \ ,
\end{equation*}
where the $n$ consecutive groupings of $r+1$ adjacent entries are read cyclically.
A \textit{periodic orbit} is an equivalence class of closed paths under cyclic permutations and
a \textit{primitive periodic orbit} is a periodic orbit that is not a repetition of a shorter periodic orbit.
We denote the number of primitive periodic orbits of length $n$ on a binary graph with $V$ vertices by $\PO_V(n)$.

To define \emph{lexicographic order}, consider a totally ordered alphabet $\mathcal{A}$ and two words,
\begin{eqnarray}
w&=a_1a_2\dots a_r \\
w'&=a_1'a_2'\dots a_s' 
\end{eqnarray}
with $a_i,a'_j \in \mathcal{A}$.  Then $w\triangleright w'$ if and only if there exists $i\leq \min \{r,s\} $ such that $a_1=a_1'$, $\dots$, $ a_{i-1}=a_{i-1}'$ and $a_i>a_i'$ or $r>s$ and $a_1=a_1'$, $\dots$, $a_s=a_s'$.
A \textit{Lyndon word} is a word that is strictly less than all of its cyclic permutations, in lexicographic order \cite{L83}.
So, for example, the binary Lyndon words of length $\leq 3$ arranged in lexicographic order are,
\begin{displaymath}
0\triangleleft 001 \triangleleft 01 \triangleleft 011 \triangleleft 1 \ .
\end{displaymath}
Thus, a Lyndon word can serve as a distinct representative of its equivalence class of closed paths.  
In addition, because a Lyndon word cannot be the same as any of its rotations, it labels a periodic orbit that is not a repetition of a shorter orbit.
Hence, primitive periodic orbits of length $n$ on a binary graph with $V=2^r$ vertices are in bijection with the set of all Lyndon words of the same length over the alphabet $\{ 0, 1\}$, independent of $r$.
So when $V=2^r$, we have $\PO_V(n) = \Lyndon$, the number of binary Lyndon words of length $n$.  We denote the set of all Lyndon words $\mathcal{L}$. 

For a periodic orbit $\gamma=(b_1, b_2, \dots, b_n)$, we denote the number of bonds in $\gamma$, the \textit{topological length} of $\gamma$, by $B_{\gamma}=n$, and the \textit{metric length} of $\gamma$ by, 
\begin{equation}
\label{pometriclength}
L_{\gamma} = \sum_{j=1}^n L_{b_j} \ . 
\end{equation}
A periodic orbit also has an associated \emph{stability amplitude}, the product of the scattering amplitudes around the orbit,
\begin{equation}
\label{postabamp}
A_{\gamma} =\BS_{b_1, b_n} 
\BS_{b_n, b_{n-1}}
\cdots
\BS_{b_3, b_2}
\BS_{b_2, b_1} \ . 
\end{equation}

A collection of periodic orbits is a \textit{pseudo orbit}, $\bg= \{\gamma_1, \gamma_2, \dots, \gamma_M\}$.  
A \textit{primitive pseudo orbit} is a pseudo orbit that contains only distinct primitive periodic orbits, so none of the primitive periodic orbits are repeated in the pseudo orbit.  Another class of pseudo orbits that will be significant are irreducible pseudo orbits \cite{BHJ12}.  An \emph{irreducible pseudo orbit} is a pseudo orbit where none of the bonds in the pseudo orbit are repeated.  Consequently all irreducible pseudo orbits are primitive pseudo orbits.    

Keeping with the periodic orbit notation, a pseudo orbit has a topological length,
\begin{equation}
\label{psotoplength}
B_{\bg} = \sum_{j=1}^M B_{\gamma_j} \ , 
\end{equation}
a metric length,
\begin{equation}
\label{psometriclength}
L_{\bg} = \sum_{j=1}^M L_{\gamma_j} \ , 
\end{equation}
and a stability amplitude,
\begin{equation}
\label{psostabamp}
A_{\bg} = \prod_{j=1}^M A_{\gamma_j} \ . 
\end{equation}
The number of periodic orbits in $\bg$ is denoted $m_{\bg}=M$.
The number of primitive pseudo orbits of length $n$ on a binary graph with $V$ vertices will be denoted $\PPO_V(n)$.

In \cite{BHJ12} the coefficients of the characteristic polynomial $F_\zeta(k)$ are written as a sum over primitive pseudo orbits where the topological length is the index of the coefficient.
\begin{theorem}\label{thm:coeffs}
	The coefficients of the characteristic polynomial $F_{\zeta}(k)$ are given by
	\begin{equation}
	a_n = 
	\sum_{\bg | B_{\bg}=n} 
	(-1)^{m_{\bg}} 
	A_{\bg} 
	\rme^{\rmi kL_{\bg}} \ ,
	\label{charpolycoeff}
	\end{equation} 
	where $\bg$ is a primitive pseudo orbit on the quantum graph.
\end{theorem}
In \cite{BHJ12} it is also shown that, following various cancellation mechanisms (see also \cite{Aetal00} for a related cancellation scheme), the sum can be further reduced to sum only over irreducible pseudo orbits $\bg$ with $B_{\bg}=n$.

The average of $a_n$ over $k$ is zero for $n\geq 1$.
Consequently, the first interesting statistic is the variance,
\begin{eqnarray}
\langle |a_n|^2 \rangle_k
&= \sum_{\bg, \bg' | B_{\bg} = B_{\bg'} = n} 
(-1)^{m_{\bg}+m_{\bg'}} 
A_{\bg} \bar{A}_{\bg'}   
\lim\limits_{K\to\infty} \frac{1}{K} \int_0^K
\rme^{\rmi k(L_{\bg}-L_{\bg'})}\ \rmd k  \\
&= \sum_{\bg, \bg' | B_{\bg} = B_{\bg'} = n} 
(-1)^{m_{\bg}+m_{\bg'}} 
A_{\bg} \bar{A}_{\bg'}   
\delta_{L_{\bg}, L_{\bg'}} \ .
\label{avgvarcoeff} 
\end{eqnarray}		
To contribute to the variance, a pair of primitive pseudo orbits $\bg, \bg'$ must have the same topological and metric lengths.
When the bond lengths are incommensurate, in order to contribute,  both pseudo orbits must visit each bond the same number of times.

To evaluate this sum, one must identify pairs of pseudo orbits with the same topological and metric lengths.  
The simplest way in which to obtain a pair of this type is to couple a primitive pseudo orbit with itself, which we call the \emph{diagonal contribution}.  (Binary quantum graphs lack time-reversal symmetry by construction, as the direction of the orbits cannot be reversed.)  
The diagonal contribution to (\ref{avgvarcoeff}) is,
\begin{equation}
\langle |a_n|^2 \rangle_{\diag} 
= \sum_{\bg | B_{\bg}= n}  |A_{\bg}|^2 
= \left( \frac{1}{2} \right)^n \PPO_{V}(n) \ ,
\label{diagpart}
\end{equation}
as $|A_{\bg}|^2 = 2^{-n}$ using (\ref{eq:democratic}).  This was evaluated for $q$-nary quantum graphs with $V=q^r$ in \cite{BHS19}.

\section{The Variance of the Characteristic Polynomial's Coefficients}
\label{sec:variance}

\subsection{Self-Intersections}
\label{sec:enc}

For two different pseudo orbits to visit the same bonds the same number of times we must be able to reorder the bonds in one pseudo orbit to produce the other.  This requires a pseudo orbit that has one or more self-intersections.
Given a primitive pseudo orbit $\bg = \{ \gamma_1, \gamma_2, \dots, \gamma_{m_{\bg}} \}$, each $\gamma_j$ is a primitive periodic orbit that can be written as a sequence of vertices and bonds, for $1 \leq j \leq m_{\bg}$.
A \emph{self-intersection}, or an \emph{$\ell$-encounter}, is a repeated subsequence of vertices and/or bonds $\enc = (v_0, v_1, \dots, v_{\tilde{n}-1}, v_{\tilde{n}})$ of maximal length that appears exactly $\ell$ times in the primitive pseudo orbit.  Such a repeated sequence can occur inside a single periodic orbit or it can be a sequence appearing in multiple orbits or a combination of these.
The vertices $s_1, s_2, \dots s_{\ell}$ immediately preceding $v_0$ and the vertices $f_1, f_2, \dots, f_{\ell}$ immediately following $v_{\tilde{n}}$ for each of the repetitions are distinct for some pair, so $s_i \neq s_j$ for some $1 \leq i < j \leq \ell$ and $f_{i'} \neq f_{j'}$ for some $1 \leq i' < j' \leq \ell$.  Equivalently the $s_j$ are not all equal and the $f_j$ are not all equal.  Note that, for a binary graph, there are only two distinct choices of incoming vertices adjacent to $v_0$ (and similarly for the outgoing vertices).  
The \emph{encounter length} is the number of bonds $\tilde{n}$ in the self-intersection; if $\tilde{n} = 0$, then the $\ell$-encounter is a single vertex, an encounter of length zero.
The sequences of vertices and bonds that connect encounters 
are called \emph{links}.

We will classify primitive pseudo orbits by the number of self-intersections, the number of repetitions of each self-intersection, and the lengths of the self-intersections. To illustrate the structure of pseudo orbits with self-intersections, we consider a few examples that we use subsequently.

\subsubsection{A Single $2$-Encounter}
\label{sec:one 2-encounter}
If $\bg$ contains a single self-intersection, then the periodic orbits not containing the encounter sequence cannot intersect any other orbits.
If a primitive pseudo orbit $\bg$ contains a single $2$-encounter, then the vertices preceding $v_0$ are distinct, $s_1 \neq s_2$, as are the vertices $f_1 \neq f_2$ following $v_{\tilde{n}}$.
Also no subsequence of the $2$-encounter is repeated three or more times.  Moreover, the links must differ; otherwise, $\bg$ is not primitive.  Figure \ref{figfigure8enc} shows a subgraph from which we construct examples of primitive pseudo orbits with a single $2$-encounter.

\begin{figure}
	\begin{center}
	\includegraphics[scale=1, trim=145 510 125 125, clip]{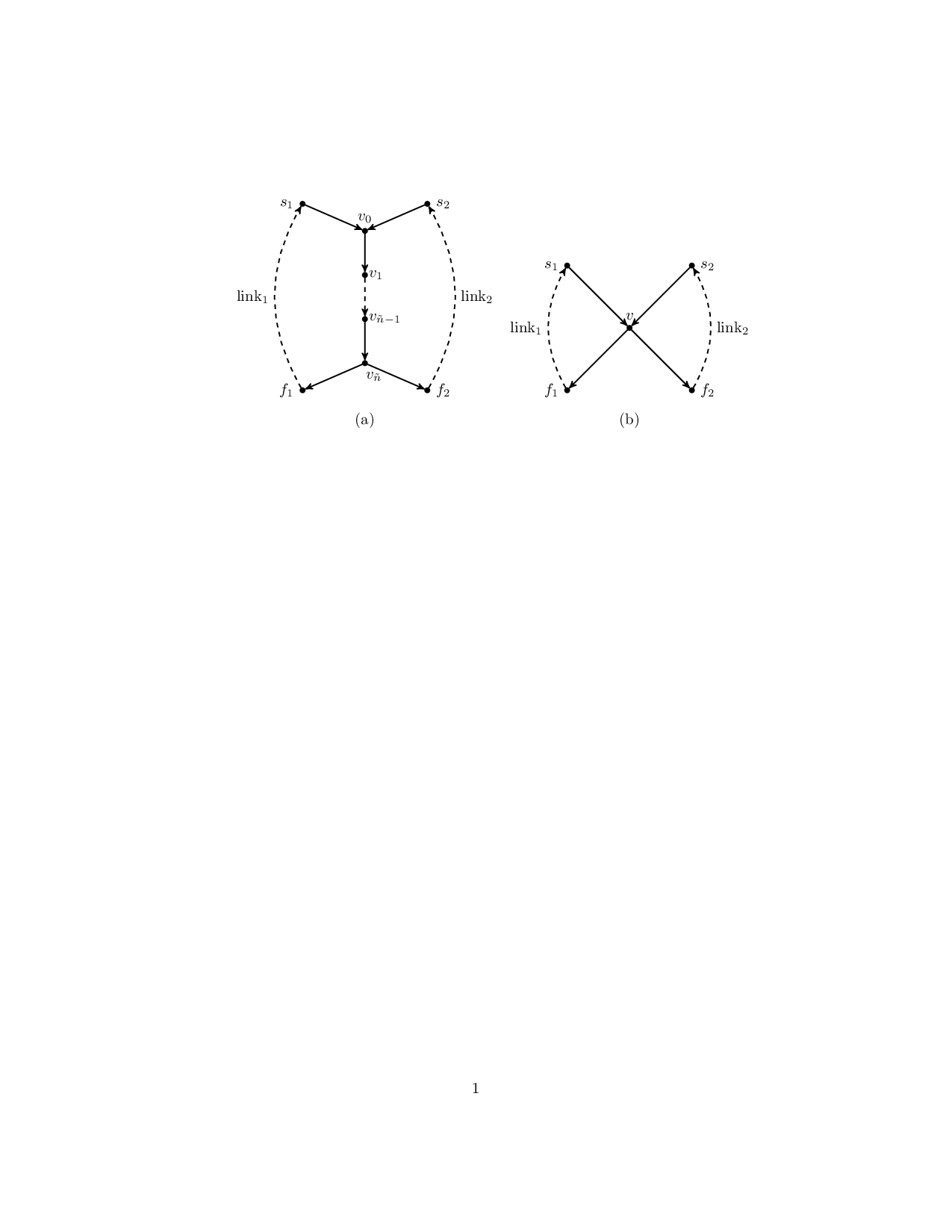}
	\caption{Subgraphs from which primitive pseudo orbits containing a figure eight orbit can be constructed. Such primitive pseudo orbits contain a self-intersection (a) along $\tilde{n} > 0$ bonds or (b) where the intersection is a single vertex $v$, so $\tilde{n} = 0$.} \label{figfigure8enc}
	\end{center}	
\end{figure}

	Let $\bg = \{ \gamma_1, \gamma_2, \dots, \gamma_{m_{\bg}} \} $ be a primitive pseudo orbit where $\gamma_1$ is a closed sequence of vertices and bonds shown in the subgraph of figure \ref{figfigure8enc}(a) such that
	\begin{equation}
	\gamma_1 =  (s_1, \enc, f_2, \dots, s_2, \enc, f_1, \dots, s_1)  \ .
	\end{equation}
	The primitive periodic orbits $\gamma_2, \dots, \gamma_{m_{\bg}}$ do not contain any vertices in $\gamma_1$, and they collectively contain no vertex more than once.
	The periodic orbit $\gamma_1$ is often called a figure eight orbit, as introduced in \cite{S02,SR01}.  
	Then the vertex sequence $\enc$, which consists of the bonds $(v_0, v_1)$, $(v_1, v_2), \dots,$ $(v_{\tilde{n}-1}, v_{\tilde{n}})$ is a $2$-encounter.
	The self-intersection is entered from each of the two distinct vertices $s_1, s_2$; after entering from $s_1$ (or $s_2$) the orbit exits the self-intersection to vertex $f_2$ (or $f_1$, respectively) with each exit vertex $f_1, f_2$ distinct.
	We refer to the remaining sequences as links; in \ref{figfigure8enc}(a), for $i=1,2$,
	\begin{equation}
	\link_i = (v_{\tilde{n}}, f_i, \dots, s_i, v_0) \ . 
	\end{equation}

The primitive pseudo orbit $\bg' = \{ \gamma', \gamma'', \gamma_2, \dots, \gamma_{m_{\bg}}  \}$ where 
\begin{eqnarray}
\gamma' &= (s_1, \enc, f_1, \dots, s_1) \ , \label{eq:gamma'}\\
\gamma'' &= (s_2, \enc, f_2, \dots, s_2) \ , \label{eq:gamma''}
\end{eqnarray}
has the same topological and metric lengths as $\bg$, as each of $\gamma'$ and $\gamma''$ contains one link of the figure eight in $\gamma_1$ and one traversal of the encounter sequence.
Thus in both $\bg$ and $\bg'$ all bonds are used the same number of times and the encounter sequence is used twice in total.
Note that, this is the only way to reorder the sequence in such a way as to pair $\bg$ with a partner $\bg' \neq \bg$ of the same topological and metric lengths, and that $m_{\bg'} = m_{\bg} + 1$, as we have split one orbit in $\bg$ into two orbits in $\bg'$.

Similarly, we can reverse the roles of $\bg$ and $\bg'$ so the pseudo orbit $\bg$ contains the vertex sequence $\enc$ as a $2$-encounter with 
\begin{eqnarray}
\gamma_1 &= (s_1, \enc, f_1, \dots, s_1) \ , \\
\gamma_2 &= (s_2, \enc, f_2, \dots, s_2) \ ,
\end{eqnarray}
such that the primitive periodic orbits $\gamma_3, \dots, \gamma_{m_{\bg}}$ do not contain any of the vertices in $\gamma_1$ or $\gamma_2$, and they collectively contain no vertex more than once.
Then the only way to pair $\bg$ with a primitive pseudo orbit $\bg' \neq \bg$ such that $B_{\bg'} = B_{\bg}$ and $L_{\bg'} = L_{\bg}$ is to join $\gamma_1, \gamma_2$ at the self-intersection and obtain
\begin{equation}
\gamma =  (s_1, \enc, f_2, \dots, s_2, \enc, f_1, \dots, s_1)  \ ,
\end{equation} 
with $\bg' = \{ \gamma, \gamma_3, \dots, \gamma_{m_{\bg}} \}$.
Note $m_{\bg'} = m_{\bg}-1$.

Both of the previous examples also make sense when the encounter occurs at a single vertex.
We replace the encounter sequence of vertices $\enc$ with the single vertex $v$ and use the previous orbit subsequences as they appear in figure \ref{figfigure8enc}(b).  Then these are examples of $2$-encounters of length zero, $\tilde{n} = 0$.

\subsubsection{A Single $\ell$-Encounter with Distinct Links}\label{sec:single encounter distinct links}
Now we consider a single $\ell$-encounter with $\ell \geq 2 $  where we require that no two link sequences are the same.
For a binary graph there are only two incoming and two outgoing bonds at each vertex, so there will necessarily be repeated vertices adjacent to the self-intersection when $\ell \geq 3$; we do not regard these as part of the self-intersection as they are not repeated the maximum number of times $\ell$.

Let $\bg = \{ \gamma_1, \gamma_2, \dots, \gamma_{m_{\bg}} \} $ be a primitive pseudo orbit where the primitive periodic orbit $\gamma_1$ is given by
\begin{equation}
\gamma_1 = (s_1, \enc, f_2, f_2' \dots, s_2', s_2, \enc, f_2, f_2'', \dots, s_2'', s_2, \enc, f_1, \dots, s_1) \ ,
\label{3enc123}
\end{equation}
see figure \ref{figbinarylenc}.
The bond $(s_2, v_0)$ appears twice at the beginning of the encounter sequence and the bond $(v_{\tilde{n}}, f_2)$ appears twice at the end of the encounter sequence.  
This is necessitated by the structure of a binary graph, and we do not count either bond as part of the 3-encounter as neither bond is repeated three times.
The links in this example are,
\begin{eqnarray}
\link_1 &= (v_{\tilde{n}}, f_1, \dots, s_1, v_0) \ , \\
\link_2 &= (v_{\tilde{n}}, f_2, f_2', \dots, s_2', s_2, v_0) \ , \\
\link_3 &= (v_{\tilde{n}}, f_2, f_2'', \dots, s_2'', s_2, v_0) \ .
\end{eqnarray}

\begin{figure}[htb!]
	\centering
	\includegraphics[scale=1, trim=100 450 175 125, clip]{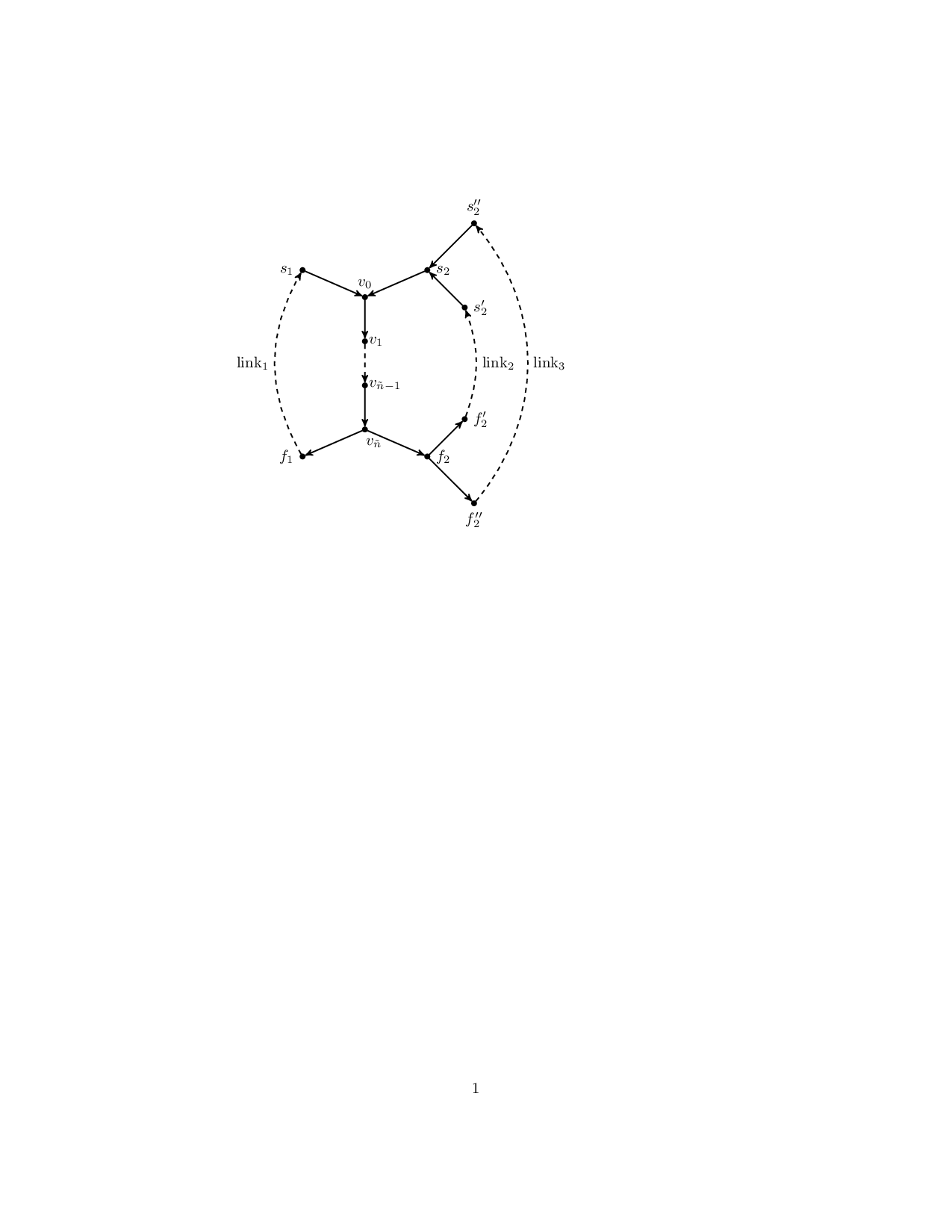}
	\caption[A subgraph from which a primitive pseudo orbit with a $3$-encounter with three distinct links can be constructed]{A subgraph from which a primitive pseudo orbit with a $3$-encounter with three distinct links can be constructed.} 
	\label{figbinarylenc}
\end{figure}

There are six distinct partner pseudo orbits for $\bg$ that have the same topological and metric lengths corresponding to the elements of the permutation group $S_3$.
As a traversal of a link sequence is always followed by traversal of the encounter, we denote a periodic orbit as a sequence of links and assume that the encounter sequence follows each link.
So $\gamma_1$ defined in (\ref{3enc123}) is equivalently denoted $\gamma_1 = (\link_1, \link_2, \link_3) $ and corresponds to the permutation $(1 \ 2 \ 3)$.  We let $\sigma_{\bg}$ denote the permutation corresponding to a primitive pseudo orbit $\bg$ with a single self-intersection and no repeated links.
The partner pseudo orbits of $\bg$ are listed in table \ref{figExTable3enc}, with $e$ the identity permutation.

\begin{table}[htb!]
	\caption[Primitive periodic orbits that can replace $\gamma_1$ to produce a partner pseudo orbit $\bg'$ in example ??.]{All primitive periodic orbits that can replace $\gamma_1 = (\link_1, \link_2, \link_3)$ to produce a partner pseudo orbit $\bg'$ of the same metric length}
	\centering
	\begin{tabular}{ c  c  c }
		\hline
		Orbit(s) replacing $\gamma_1$ in $\bg'$ & $\sigma_{\bg'}$ & $m_{\bg'}$      \\
		\hline
		None & $(1 \ 2 \ 3)$ & $m_{\bg'}=m_{\bg}$ \\
		$\gamma = (\link_1, \link_3, \link_2)$ & $(1 \ 3 \ 2)$ & $m_{\bg'}=m_{\bg}$ \\
		$\gamma' = (\link_1, \link_2), \gamma'' = (\link_3)$ & $(1 \ 2)$ & $m_{\bg'}=m_{\bg} + 1$ \\
		$\gamma' = (\link_1, \link_3), \gamma'' = (\link_2)$ & $(1 \ 3)$ & $m_{\bg'}=m_{\bg} + 1$ \\
		$\gamma' = (\link_2, \link_3), \gamma'' = (\link_1)$ & $(2 \ 3)$ & $m_{\bg'}=m_{\bg} + 1$ \\
		$\gamma' = (\link_1), \gamma'' = (\link_2), \gamma''' = (\link_3)$ & $e$ & $m_{\bg'}=m_{\bg} + 2$ \\
		\hline
	\end{tabular}
	\label{figExTable3enc}
\end{table}

In section \ref{sec:one 2-encounter} the partner orbits corresponded to permutations in $S_2$.  The permutation $(1 \ 2)$ corresponds to $\gamma_1 \in \bg$ and its partners are $\bg' = \bg$ and the pseudo orbit where $\gamma_1$ is replaced with $\gamma', \gamma''$, defined in (\ref{eq:gamma'}) and (\ref{eq:gamma''}), where $\bg'$ corresponds to the identity permutation $e$.
Similarly, the partner pseudo orbits of a primitive pseudo orbit $\bg$ on a binary graph with a single $\ell$-encounter and $\ell$ distinct links correspond to elements of the permutation group $S_\ell$.  

\subsubsection{A Single $\ell$-Encounter with Repeated Links}
\label{sec:single lenc repeated links}
Let $\bg$ be a primitive pseudo orbit containing a single $\ell$-encounter for $\ell\geq 3$, where one or more links are repeated.
Here the partner pseudo orbits correspond to Lyndon tuples over a multiset of link indices, a generalization of permutations that relies on the Chen-Fox-Lyndon theorem \cite{BHS19,CFL58,D83,F19-1,F19-2,L83}.

\begin{theorem}[Chen-Fox-Lyndon]
	Every non-empty word $w$ can be uniquely formed by concatenating a non-increasing sequence of Lyndon words in lexicographic order, the Lyndon decomposition of $w$.
	So
	\begin{equation}
	w = l_k  \dots l_2 l_1 \ , 
	\end{equation}
	where $l_1, \dots, l_k \in \mathcal{L}$ and $l_{i} \trianglerighteq l_{i-1}$, for $i=2,\dots, k$.
	\label{thmCFL}
\end{theorem}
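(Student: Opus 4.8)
The plan is to prove existence and uniqueness of the non-increasing Lyndon factorization separately, relying on two standard facts about the lexicographic order $\triangleleft$ and Lyndon words that I would establish first. The first is the equivalent characterization that a word $\ell$ is Lyndon if and only if $\ell \triangleleft u$ for every proper nonempty suffix $u$ of $\ell$; this is equivalent to the rotation definition given above, since a rotation $vu$ of $\ell = uv$ and the suffix $v$ of $\ell$ agree on the common prefix $v$. The second is the \emph{concatenation lemma}: if $u,v\in\mathcal{L}$ with $u\triangleleft v$, then $uv\in\mathcal{L}$ and $u\triangleleft uv\triangleleft v$. Both follow from comparing words letter by letter.

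For existence I would argue constructively. Writing $w=c_1c_2\cdots c_n$ as a sequence of single letters exhibits $w$ as a product of Lyndon words, since every one-letter word is Lyndon, though the sequence need not be non-increasing. As long as the current factorization has two adjacent factors $a$ (on the left) and $b$ (on the right) with $a\triangleleft b$, I replace the pair $a,b$ by the single factor $ab$, which is again Lyndon by the concatenation lemma. Each such merge strictly decreases the number of factors, so the process terminates, and at termination no adjacent pair is strictly increasing; that is, the factors are non-increasing with respect to $\triangleleft$. This yields a factorization of the required form $w=l_k\cdots l_1$ with $l_k\trianglerighteq\cdots\trianglerighteq l_1$.

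For uniqueness the key claim is that in any non-increasing Lyndon factorization $w=l_k\cdots l_1$ the last (smallest) factor $l_1$ is exactly the lexicographically smallest nonempty suffix of $w$, a quantity that depends only on $w$. Granting the claim, any two such factorizations must share their rightmost factor, $l_1=m_1$; deleting it leaves a shorter word with two non-increasing Lyndon factorizations, and induction on $|w|$ finishes the argument. To prove the claim I would take an arbitrary nonempty suffix $u\neq l_1$ and show $l_1\triangleleft u$ by cases: if $u$ is a proper suffix of $l_1$, then $l_1\triangleleft u$ because $l_1$ is Lyndon; if $u$ spans a factor boundary, say $u=v\,l_{j-1}\cdots l_1$ with $v$ a nonempty suffix of $l_j$ for some $j\geq 2$, then either $v=l_j$, in which case $l_1\trianglelefteq l_j$ makes $l_1$ either a proper prefix of $u$ or smaller at the first differing position, so $l_1\triangleleft u$, or $v$ is a proper suffix of $l_j$, in which case the Lyndon property gives $l_j\triangleleft v$ and hence $l_1\trianglelefteq l_j\triangleleft v\trianglelefteq u$. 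In every case $l_1$ is strictly smallest.

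I expect the main obstacle to be the delicate lexicographic bookkeeping underlying the two preliminary facts and the suffix comparison, in particular separating the case where one word first differs from another from the case where one is a prefix of the other (recall that under $\triangleleft$ a proper prefix is strictly smaller). The concatenation lemma requires checking that $uv$ beats each of its proper suffixes, including those straddling the boundary, and the smallest-suffix claim requires the same care precisely when $l_1$ equals, or is a prefix of, a higher factor. Once these comparisons are handled cleanly, the induction for uniqueness and the termination argument for existence are immediate.
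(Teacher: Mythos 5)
The paper never proves this statement: Theorem \ref{thmCFL} is quoted as the classical Chen--Fox--Lyndon theorem and used as a black box, with the proof deferred to the literature (\cite{CFL58,D83,L83}). So there is no in-paper argument to compare against, and your proposal must be judged on its own merits; it is correct, and it is essentially the standard textbook proof. Existence by starting from the factorization of $w$ into single letters (each a Lyndon word, vacuously) and repeatedly merging an adjacent pair $a \triangleleft b$ into the Lyndon word $ab$ is sound: each merge strictly decreases the number of factors, so the process terminates, and termination exactly means no adjacent pair increases, i.e.\ the factors are non-increasing. Uniqueness via the identification of the last factor $l_1$ with the lexicographically least nonempty suffix of $w$ is also sound, and your case analysis (a proper suffix of $l_1$; a suffix beginning at the boundary of some $l_j$, $j \geq 2$; a suffix beginning strictly inside some $l_j$) is exhaustive, with the prefix subtlety handled correctly under the paper's convention that a proper prefix is strictly smaller; transitivity of the total order $\triangleleft$ then closes each case, and induction on $|w|$ finishes. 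The only ingredients left as promises are the two preliminary lemmas --- the suffix characterization of Lyndon words and the concatenation lemma ($u,v \in \mathcal{L}$, $u \triangleleft v$ implies $uv \in \mathcal{L}$ with $u \triangleleft uv \triangleleft v$) --- which you rightly flag as the technical core; both are standard and follow from the letter-by-letter comparisons you describe. One small economy: once $uv \in \mathcal{L}$ is established, the inequality $uv \triangleleft v$ is immediate from the suffix characterization applied to $uv$ itself (as $v$ is a proper suffix of $uv$), so that half of the concatenation lemma needs no separate bookkeeping.
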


A multiset \cite{F19-1,F19-2}, over the finite ordered alphabet $\mathcal{A} =  \{ 1, 2, \dots, \mu \}$ is denoted $\cM = [ 1^{m_1}, 2^{m_2}, \dots, \mu^{m_{\mu}} ]$ where $m_a \in \mathbb{N}_0$ is the multiplicity of $a$ in $\cM$ and the 
cardinality of $\cM$ is $\sum_{a \in \mathcal{A}} m_a = \ell$.
Let $w = a_1 a_2 \dots a_{\ell}$ be an $\ell$-word that uses each element of $\cM$ once.
By theorem \ref{thmCFL}, $w$ has a unique Lyndon decomposition $w = l_k \dots l_2 l_1$.  
For those words with a strictly decreasing factorization $l_k \triangleright \cdots \triangleright l_2 \triangleright l_1$, the \emph{Lyndon tuple} of $w$ over $\cM$ is $\tup(w) = (l_1, l_2, \dots, l_k)$.
We denote the set of all Lyndon tuples over $\cM$ by $\mathcal{L}(\cM)$.

The \emph{weight} of a letter $a \in \mathcal{A}$ is a formal variable $x_a$.
For any word $w = a_1 a_2 \dots a_\ell$, the \emph{weight} $\wt(w) = x_{a_1} \dots x_{a_\ell}$ is the product of the weights of its letters.
The \emph{weight} of the Lyndon tuple $\tup(w) = (l_1, l_2, \dots, l_k)$ is the product of the weights of its Lyndon words, 
\begin{equation}
\wt(w) = \prod_{i=1}^k \wt(l_i) \ .
\end{equation}

For any non-empty $\ell$-word $w$ with $\tup(w) = (l_1, l_2, \dots, l_k) \in \mathcal{L}(\cM)$, we define the \emph{Lyndon index} of $w$ to be the number $i_{\mathcal{L}}(w) = \ell-k$.
A non-empty $\ell$-word $w$ with $\tup(w) = (l_1, l_2, \dots, l_k) \in \mathcal{L}(\cM)$ is \emph{even (odd)} if its Lyndon index $i_{\mathcal{L}}(w)$ is even (odd).
Note that, we are only assigning a notion of parity to $\ell$-words that have a strictly decreasing Lyndon word decomposition.
We let the set of $\ell$-words for which the corresponding tuple is even (odd) be denoted by $E$ ($O$).
The following theorem of Faal \cite{F19-2}, a generalization of a theorem of Sherman \cite{S60}, implies that there are the same number of odd and even Lyndon tuples in $\mathcal{L}(\cM)$ for any multiset $\cM$.

\begin{theorem}
	\label{thm:splitting}
	Let $\mathcal{A} = \{ 1, \dots, \mu \}$ be a finite ordered alphabet with weights $\{ x_1,\dots , x_\mu\}$ and fix a multiset $\cM = [ 1^{m_1}, 2^{m_2}, \dots, \mu^{m_{\mu}} ]$ over $\mathcal{A}$ of cardinality $\ell > 1$.
	The weighted sum of even $\ell$-words over $\cM$ is the same as the weighted sum of odd $\ell$-words over $\cM$, 
	\begin{equation}
	\sum_{w \in E} \wt(w) = \sum_{w \in O} \wt(w) \ .
	\end{equation}
\end{theorem}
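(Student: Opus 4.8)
My plan is to deduce the statement from a single generating-function identity in the commuting formal variables $x_1,\dots,x_\mu$, obtained by reading off one monomial coefficient. Throughout I use $\mathcal{L}$ for the set of all Lyndon words, as in the excerpt, and write $k(w)$ for the number of factors in the Lyndon tuple of a word $w$ admitting a strictly decreasing factorization, so that $i_{\mathcal{L}}(w)=\ell-k(w)$.

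The first thing to notice is that every $\ell$-word over $M$ uses each letter exactly the prescribed number of times, so all such words share the single monomial weight $\prod_{a\in\mathcal{A}}x_a^{m_a}$. Hence the weighted statement is equivalent to the unweighted one, $|E|=|O|$. The weights are nevertheless the correct bookkeeping device: they let one generating-function identity, summed over all multisets simultaneously, be separated multiset-by-multiset by extracting coefficients. Accordingly I introduce the signed generating function
\begin{equation*}
F \;:=\; \sum_{w}(-1)^{k(w)}\,\wt(w),
\end{equation*}
where $w$ ranges over all non-empty words (over every multiset) that admit a \emph{strictly decreasing} Lyndon factorization. Expanding the infinite product $\prod_{l\in\mathcal{L}}\bigl(1-\wt(l)\bigr)$ and choosing from each factor either $1$ or $-\wt(l)$, a nonzero term selects a finite set of distinct Lyndon words $\{l_1,\dots,l_k\}$ and contributes $(-1)^k\prod_i\wt(l_i)$. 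By Theorem \ref{thmCFL}, arranging these distinct factors in decreasing order produces the unique word $w$ whose Lyndon factorization is strictly decreasing with factor set exactly $\{l_1,\dots,l_k\}$, and conversely each strictly-decreasing-factorizable $w$ arises once. Thus $F=\prod_{l\in\mathcal{L}}(1-\wt(l))$ as formal power series, the product being well defined since only finitely many Lyndon words contribute to any fixed monomial.

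Next I evaluate the product. In its unrestricted form, Theorem \ref{thmCFL} says every word has a unique \emph{non-increasing} Lyndon factorization; allowing each Lyndon word to repeat freely gives
\begin{equation*}
\prod_{l\in\mathcal{L}}\frac{1}{1-\wt(l)}\;=\;\sum_{w}\wt(w)\;=\;\frac{1}{1-\sum_{a\in\mathcal{A}}x_a},
\end{equation*}
the last step because the words of length $n$ contribute $\bigl(\sum_a x_a\bigr)^n$. Taking reciprocals of formal power series with constant term $1$ yields
\begin{equation*}
F\;=\;\prod_{l\in\mathcal{L}}\bigl(1-\wt(l)\bigr)\;=\;1-\sum_{a\in\mathcal{A}}x_a.
\end{equation*}
Now fix $M$ of cardinality $\ell>1$ and compare the coefficient of $\prod_a x_a^{m_a}$ on each side. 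On the right it is $0$, since $1-\sum_a x_a$ carries only the empty monomial and the degree-one monomials. On the left, the contributing words are precisely the $\ell$-words over $M$ with a strictly decreasing factorization, namely $E\cup O$, and each satisfies $(-1)^{k(w)}=(-1)^{\ell-i_{\mathcal{L}}(w)}=(-1)^{\ell}(-1)^{i_{\mathcal{L}}(w)}$; hence the coefficient equals $(-1)^{\ell}\bigl(|E|-|O|\bigr)$. As $(-1)^{\ell}\neq 0$, we get $|E|=|O|$, and multiplying by the common weight $\prod_a x_a^{m_a}$ recovers $\sum_{w\in E}\wt(w)=\sum_{w\in O}\wt(w)$.

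The entire substance sits in establishing the two product identities from the Chen-Fox-Lyndon bijection and justifying the reciprocal at the level of formal power series (well-definedness of the infinite products and finiteness of contributions to each monomial); these are routine once the bijection is in hand. The only place to slip is the sign bookkeeping, i.e.\ keeping $k(w)$ and $i_{\mathcal{L}}(w)=\ell-k(w)$ straight and tracking the global factor $(-1)^{\ell}$, but this is mechanical. A more combinatorial alternative would construct an explicit sign-reversing involution on strictly decreasing Lyndon tuples over $M$—merging an adjacent pair of factors whose concatenation is again Lyndon, or performing the inverse split—pairing each even word with an odd one; I expect this to reprove the identity but at the cost of substantially more case analysis than the generating-function route.
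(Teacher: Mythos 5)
Your proof is correct, but it takes a genuinely different route from the paper. The paper does not prove theorem \ref{thm:splitting} combinatorially in-house: it invokes Faal's result \cite{F19-2} (a multiset generalization of Sherman's identity \cite{S60}), whose proof constructs an explicit weight-preserving, parity-reversing bijection $f$ on strictly decreasing Lyndon tuples --- split the first factor via its standard factorization when it is splittable, otherwise merge the first two factors --- and the paper reproduces the definition of $f$ in detail precisely because it needs the map itself, not merely the equality $|E|=|O|$. Your route instead deduces the identity from two consequences of theorem \ref{thmCFL}: the product identity $\prod_{l\in\mathcal{L}}\bigl(1-\wt(l)\bigr)^{-1}=\bigl(1-\sum_{a}x_a\bigr)^{-1}$ (non-increasing factorizations correspond to multisets of Lyndon words) and its reciprocal $\prod_{l\in\mathcal{L}}\bigl(1-\wt(l)\bigr)=1-\sum_{a}x_a$ (strictly decreasing factorizations correspond to sets of distinct Lyndon words, with sign), followed by extraction of the coefficient of $\prod_a x_a^{m_a}$; this is shorter, self-contained, and avoids all case analysis of splittability, and your preliminary observations (that the weighted statement reduces to cardinalities since every $\ell$-word over $M$ carries the same monomial, and that $(-1)^{k(w)}=(-1)^{\ell}(-1)^{i_{\mathcal{L}}(w)}$) are both right. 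One cosmetic slip: since you restrict $F$ to non-empty words, the identity should read $1+F=\prod_{l\in\mathcal{L}}(1-\wt(l))$, i.e.\ $F=-\sum_a x_a$; this off-by-one in the constant term does not affect the coefficient extraction for $\ell>1$. The substantive trade-off is downstream: the paper later needs (proposition \ref{propL3enczerobalance}, which feeds lemma \ref{thmCont3}) that parity can be reversed by a bijection which also preserves the adjacency count $T(\tup(w))$, so that cancellation holds within each fixed-$T$ class of partner pseudo orbits; that refinement comes for free from the explicit map $f$ but cannot be read off from your generating-function identity, so your argument proves the theorem as stated yet could not substitute for the bijective proof in the paper's overall cancellation scheme.
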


The proof of theorem \ref{thm:splitting} relies on a weight-preserving bijection $f$ between odd and even Lyndon words, which we make use of later.  The bijection $f$ depends on whether the first word in the Lyndon decomposition is splittable, which is itself determined by a standard factorization of the first Lyndon word \cite{L83}.

If $w$ is a Lyndon word that is not a single letter, and $w=rs$ such that $r$ and $s$ are Lyndon words and $s$ is of maximal length, then the pair $(r,s)$ is the \emph{standard factorization} of $w$.
\begin{proposition} 
	For a Lyndon word $w \in \mathcal{L}\backslash\mathcal{A}$, with $w=rs$ for $r,s \in \mathcal{L}$, then $(r,s)$ is the standard factorization of $w$ if and only if $s$ is the smallest proper suffix of $w$ lexicographically.
	\label{propstdfactsuffix}
\end{proposition}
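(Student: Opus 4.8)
The plan is to route the entire argument through the lexicographically smallest proper suffix of $w$, which I will denote $s^*$, and to prove two structural facts about it: that $s^*$ is the unique longest proper suffix of $w$ which is itself Lyndon, and that the complementary prefix in $w = r^*\,s^*$ is again Lyndon. Throughout I would use the standard characterization that a word is Lyndon precisely when it is strictly smaller, in the lexicographic order $\triangleleft$ defined above, than each of its nonempty proper suffixes, together with the concatenation property that $u \triangleleft v$ with $u,v \in \mathcal{L}$ forces $uv \in \mathcal{L}$ (see \cite{L83}).

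First I would show $s^* \in \mathcal{L}$: every nonempty proper suffix $v$ of $s^*$ is also a proper suffix of $w$, so minimality of $s^*$ gives $s^* \trianglelefteq v$, and as $v$ is strictly shorter the two are distinct, hence $s^* \triangleleft v$. The same reasoning shows more generally that the smallest suffix of any nonempty word is Lyndon, a fact I will reuse. Next I would show $s^*$ is the longest proper Lyndon suffix: were $t$ a proper Lyndon suffix with $|t| > |s^*|$, then $s^*$ would be a proper suffix of $t$, whence $t \triangleleft s^*$ since $t$ is Lyndon, contradicting the minimality $s^* \trianglelefteq t$.

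The \textbf{crux}, and the step I expect to be the main obstacle, is proving that $r^*$ is Lyndon when $w = r^*\,s^*$. I would first record that $r^* \triangleleft s^*$, which follows by comparing $w = r^*\,s^*$ with its suffix $s^*$ and using $w \triangleleft s^*$. Then, arguing by contradiction, suppose $r^* \notin \mathcal{L}$; let $q$ be the smallest suffix of $r^*$, which is then a \emph{proper} suffix with $q \triangleleft r^*$ and, by the reused fact, is Lyndon. Combining $q \triangleleft r^* \triangleleft s^*$ with the concatenation property yields $q\,s^* \in \mathcal{L}$. But $q\,s^*$ is a proper suffix of $w$ strictly longer than $s^*$, contradicting that $s^*$ is the longest proper Lyndon suffix. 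Hence $r^* \in \mathcal{L}$, so $(r^*, s^*)$ is a factorization of $w$ into two Lyndon words.

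Finally I would assemble the biconditional. If $s$ is the smallest proper suffix then $s = s^*$; the hypothesis supplies $r \in \mathcal{L}$, and any competing factorization $w = r's'$ into Lyndon words has $s'$ a proper Lyndon suffix, so $|s'| \le |s^*| = |s|$ by the second fact; thus $s$ has maximal length and $(r,s)$ is the standard factorization. Conversely, if $(r,s)$ is the standard factorization then $s$ is a proper Lyndon suffix, so $|s| \le |s^*|$; and since $(r^*, s^*)$ constructed above is also a Lyndon factorization, maximality of $|s|$ forces $|s| \ge |s^*|$, giving $s = s^*$. The two routine directions rely only on the first two facts, whereas the whole scheme hinges on establishing the third.
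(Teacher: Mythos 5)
Your proof is correct, but there is nothing in the paper to compare it against step by step: the paper states Proposition \ref{propstdfactsuffix} without proof, treating it as a known fact about the standard factorization and deferring to the literature (\cite{L83}). Your argument is a valid, self-contained derivation, and it even establishes existence of a factorization of $w$ into two Lyndon words (via the pair $(r^*,s^*)$), which the paper's definition of standard factorization implicitly presupposes. The two external facts you invoke --- that a word is Lyndon precisely when it is strictly smaller than every nonempty proper suffix, and that $u \triangleleft v$ with $u,v \in \mathcal{L}$ implies $uv \in \mathcal{L}$ --- are both standard and found in \cite{L83}, so the proof rests on solid ground. The crux step is sound: if $r^*$ were not Lyndon, its lexicographically smallest suffix $q$ is a proper Lyndon suffix with $q \triangleleft r^* \triangleleft s^*$, so $qs^*$ is a Lyndon proper suffix of $w$ longer than $s^*$, contradicting your second structural fact. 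One point stated tersely that merits the short case analysis: the claim that $w \triangleleft s^*$ forces $r^* \triangleleft s^*$ holds under the prefix-rule order used in the paper, because the cases $r^* = s^*$ and $s^*$ a proper prefix of $r^*$ would each make $s^*$ a prefix of $w$ (giving $w \triangleright s^*$), while otherwise either $r^*$ is a proper prefix of $s^*$ or the first disagreement between $w$ and $s^*$ occurs at a position inside $r^*$, and in both of those cases $r^* \triangleleft s^*$ follows directly.
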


The first Lyndon word $l_1$ in $\tup(w) = (l_1, l_2, \dots, l_k)$ is \emph{splittable} if it is not a single letter and its standard factorization $(r_1, s_1)$ satisfies $s_1 \triangleleft l_2$.
If the first Lyndon word $l_1\in \tup(w)$ is splittable, we can define $f_1(\tup(w)) = (r_1, s_1, l_2, \dots, l_k)$ where $(r_1, s_1)$ is the standard factorization of $l_1$.  If the first Lyndon word $l_1\in \tup(w)$ is not splittable, we can define $f_2(\tup(w)) = (l_0, l_3, \dots, l_k)$ where $l_0 = l_1 l_2$.
Then an odd (even) $\ell$-word uniquely maps to an even (odd) $\ell$-word by
\begin{equation}
f(w) = \cases{
w_1 & if  $\tup(w_1) = f_1(\tup(w))$ when $l_1$ is splittable,  \\
w_2 & if  $\tup(w_2) = f_2(\tup(w))$ when  $l_1$ is not splittable.}
\label{eqPPOmaptuples}
\end{equation}

Like a pseudo orbit with a single self intersection and distinct links, where we label partner pseudo orbits with permutations, we can label partner pseudo orbits where there are repeated links, with Lyndon tuples where each letter labels a link in the orbit.  The multiset $\cM$ records the number of times each link appears in $\bg$ and $\mathcal{L}(\cM)$ is the set of possible partner pseudo orbits where, for $\tup(w)\in \mathcal{L}(\cM)$, each Lyndon word corresponds to a unique primitive periodic orbit, where each link is followed by the encounter sequence.  The Lyndon word is not only a unique label of the equivalence class of closed paths, but must label a primitive periodic orbit since it is not a repetition of a shorter word.
Thus, theorem \ref{thm:splitting} implies that the number of primitive partner pseudo orbits which contain an even number of periodic orbits is equal to the number of primitive partner pseudo orbits that contain an odd number of periodic orbits.  This is central to the parity argument in section \ref{sec:contributions}.

\subsubsection{Multiple Encounters}
\label{sec:mult enc}

We now consider a primitive pseudo orbit $\bg $ with $N$ self-intersections.
A sequence of adjacent vertices $v_0^i, \dots, v_{\tilde{n}_i}^i$ is a \emph{self-intersection} or an \emph{$\ell_i$-encounter} if it appears exactly $\ell_i$ times in the primitive periodic orbit sequences of $\bg$ such that the vertices $s_1^i, \dots s_{\ell_i}^i$ immediately preceding $v_0^i$ and the vertices $f_1^i, \dots, f_{\ell_i}^i$ immediately following $v_{\tilde{n}_i}^i$ for each repetition are distinct for some pair.  So $s_h^i \neq s_j^i$ for some $1 \leq h< j \leq \ell_i$ and $f_{h'}^i \neq f_{j'}^i$ for some $1 \leq h' < j' \leq \ell_i$, for all $i=1, \dots, N$.
Then $\bg$ has $N$ self-intersections of types $\vec{\ell} = (\ell_1, \ell_2, \dots, \ell_N)$.  We denote the set of primitive pseudo orbits of length $n$ with $N$ self intersections $\mathcal{P}^n_N$.

At the $i$-th encounter there are $\ell_i$ sequences of vertices that begin at $v^i_{\tilde{n}_i}$ and each of these ends at $v_0^j$ for some $j = 1, \dots, N$; there are also $\ell_i$ sequences of vertices that end at $v_0^i$ and begin at $v_{\tilde{n}_j}^j$ for some $j=1,\dots, N$.
These sequences do not contain encounter sequences, and we refer to them as \emph{outgoing} and \emph{incoming links} at the encounter, respectively.  
In the case of a single encounter, each link is both incoming and outgoing to the encounter. For multiple encounters, a link sequence is incoming to one encounter and outgoing at one encounter, which may be different from the encounter at which it is incoming.
If $\ell_i = 2$ for some encounter, then the two incoming links must be distinct and similarly the two outgoing links are distinct.
However, if $\ell_i \geq 3$ for an encounter, then there will be at least two distinct incoming links and at least two distinct outgoing links, but links may be used more than once.

A primitive pseudo orbit with multiple self-intersections could have several sets of periodic orbits like those described in sections \ref{sec:one 2-encounter} through \ref{sec:single lenc repeated links} where the sets do not overlap one another.  
It is also possible to describe primitive pseudo orbits with multiple self-intersections on a subgraph where the encounters behave topologically as vertices and the links as topological edges, see figure \ref{figmult}. 
\begin{figure}
	\centering
	\includegraphics[scale=1, trim=150 500 100 125, clip]{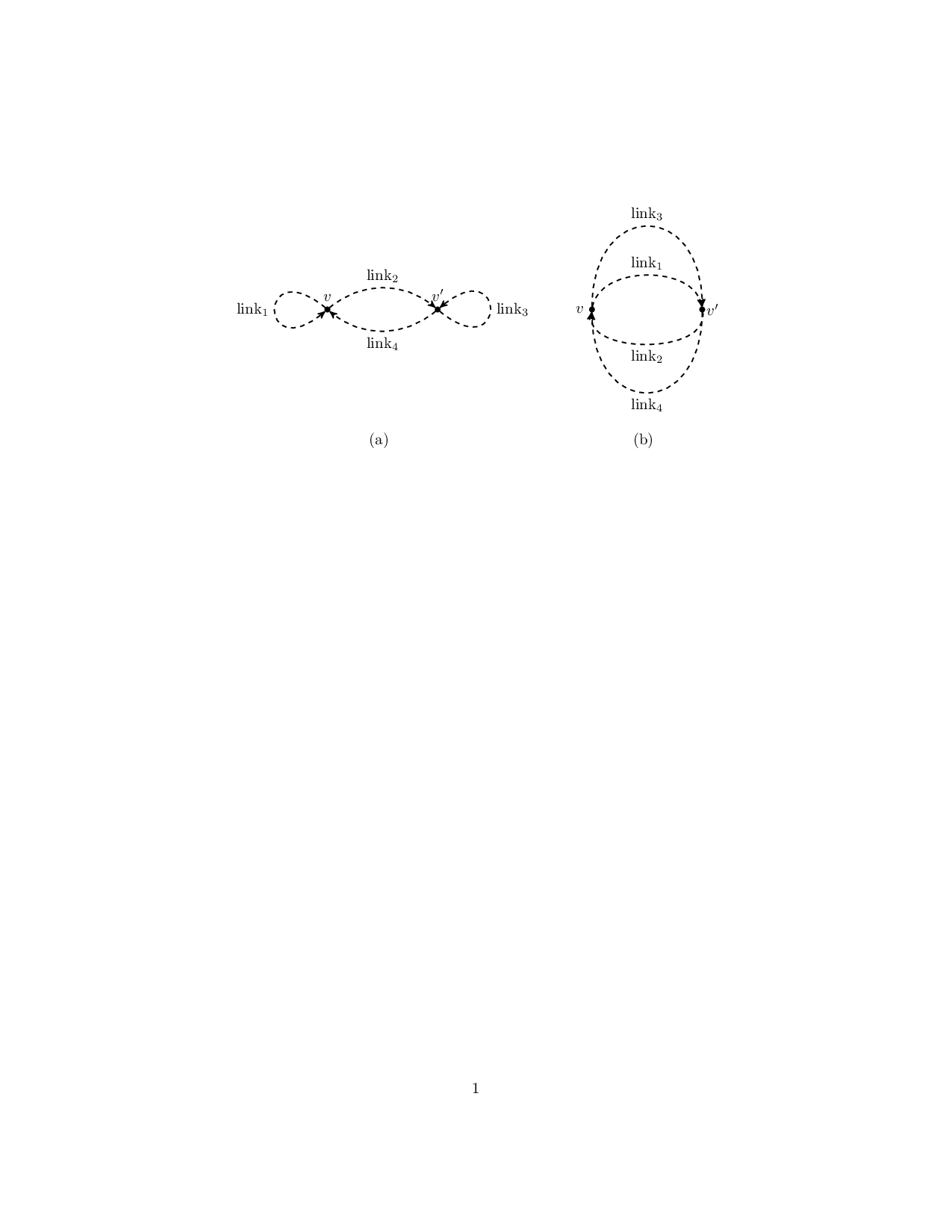}
	\caption[Schematic representations of subgraphs from which primitive pseudo orbits with multiple encounters can be constructed]{Schematic representations of subgraphs from which primitive pseudo orbits with multiple encounters can be constructed.} 
	\label{figmult}
\end{figure}

Let $\bg = \{ \gamma_1, \gamma_2, \dots, \gamma_{m_{\bg}} \} $ where $\gamma_1 = (\link_1, \link_2, \link_3, \link_4)$ is an orbit on the subgraph shown schematically in figure \ref{figmult}(a), and no other orbit in $\bg$ has self-intersections.
Then $\bg$ has two $2$-encounters of length zero at the vertices $v$ and $v'$.
Note that $\link_1$ is both incoming and outgoing at the $2$-encounter $v$, while $\link_2$ is outgoing at $v$ and incoming at the $2$-encounter $v'$; $\link_4$ is outgoing at $v'$ and incoming at $v$ while $\link_3$ is incoming and outgoing at $v'$.

Let $\bg = \{ \gamma_1, \gamma_2, \dots, \gamma_{m_{\bg}} \} $ where $\gamma_1 = (\link_1, \link_2, \link_3, \link_3, \link_4)$ is an orbit on the subgraph shown schematically in figure \ref{figmult}(a), and no other orbit in $\bg$ has encounters.
Then $\bg$ has a $2$-encounter of length zero at the vertex $v$ and a $3$-encounter of length zero at the vertex $v'$.

Let $\bg = \{ \gamma_1, \gamma_2, \dots, \gamma_{m_{\bg}} \} $ where $\gamma_1 = (\link_1, \link_2, \link_3, \link_4)$ is an orbit on the subgraph shown schematically in figure \ref{figmult}(b), and no other orbit in $\bg$ repeats vertices contained in $\gamma_1$ or has encounters.
Then $\bg$ has two $2$-encounters of length zero at the vertices $v$ and $v'$.

When a primitive pseudo orbit contained a single encounter, we generated its partner pseudo orbits by reordering links.  
Partner primitive pseudo orbits for a pseudo orbit with multiple encounters are primitive pseudo orbits where the links have again been reordered, although this reordering must respect the connectivity of the encounters, so outgoing links follow incoming links at an encounter.  In section \ref{sec:contributions mult enc various} we will see that we can extend a parity argument we develop for pseudo orbits with a single encounter to evaluate a sum over all partner pseudo orbits without explicitly formulating all reorderings.

\subsection{Alternative Formulation of the Variance}
\label{sec:alternative variance}

The variance of the coefficients of the characteristic polynomial of a quantum graph can be written as a finite sum over primitive pseudo orbit pairs where both orbits in the pair have the same topological length $n$ and the same metric length (\ref{avgvarcoeff}).
A primitive pseudo orbit $\bg$ of topological length $n$ has a well-defined number of self-intersections $0 \leq N \leq n$.
The upper bound $n$ is trivial as no pseudo orbit could have more self-intersections than there are vertices in the orbit.

Let $\mathcal{P}_{\bg}$ be the set of all primitive pseudo orbits $\bg'$ such that $B_{\bg'} = B_{\bg}$ and $L_{\bg'} = L_{\bg}$.
We define a contribution for primitive pseudo orbits that partner $\bg$,
\begin{equation}
C_{\bg} = \sum_{ \bg' \in \mathcal{P}_{\bg} }
(-1)^{m_{\bg}+m_{\bg'}}
A_{\bg} \bar{A}_{\bg'} \ .
\label{contributionC}
\end{equation}
Then the variance of the coefficients can be written as,
\begin{equation}
\label{variancewithdiag}
\langle |a_n|^2 \rangle_k
= 
\sum_{N=0}^n \sum_{\bg \in \mathcal{P}_N^n} C_{\bg} \ .
\end{equation}

\subsection{Evaluation Using Irreducible Pseudo Orbits}
\label{sec:irreducible variance}

The following argument, that produces proposition \ref{thm:main}, applies to all $4$-regular directed graphs with FFT scattering matrices and was described briefly in \cite{HH22}.   
It starts from a reduced form of theorem \ref{thm:coeffs}, obtained in \cite{BHJ12}, where the sum is over only the irreducible pseudo orbits with $n$ bonds.  
An irreducible pseudo orbit is a pseudo orbit in which no bonds are repeated.  Writing the coefficients as a sum over irreducible pseudo orbits produces a corresponding formula for the variance,
\begin{equation}
\langle |a_n|^2 \rangle_k
= \sum_{N=0}^n \sum_{\bg \in \hP_N^n} C_{\bg} \ ,
\label{avgvarcoeff2} 
\end{equation}	
where $\hP_N^n$ is the set of irreducible pseudo orbits of topological length $n$ with $N$ self-intersections.

If there are no self-intersections in $\bg$  (so $N=0$) the bonds in $\bg$ cannot be rearranged to produce partner orbits of the same length.  So $\bg$ has only a single partner pseudo orbit $\bg'=\bg$.  Then $C_{\bg}=|A_{\bg}|^2=1/2^n$.  Hence the $N=0$ contribution to the variance is $|\hP_0^n|/2^n$.  As every pseudo orbit with repeated bonds must have a self-intersection, $\hP_0^n= \mathcal{P}^n_0$ which explains the first term in (\ref{variancefinal}).

In an irreducible pseudo orbit with an encounter the encounter length must be zero, as all bonds in an encounter are repeated.  For a $4$-regular directed graph where each vertex has two incoming and two outgoing bonds, $\ell$-encounters with $\ell> 2$ also require repeating bonds in the pseudo orbit as there are only two bonds from which one can enter (or leave) the encounter.  Examples of such pseudo orbits were constructed in section \ref{sec:single encounter distinct links}.  So $\hP_N^n$ is the subset of $\mathcal{P}_N^n$ containing only primitive pseudo orbits where all $N$ self-intersections are $2$-encounters of length zero.  

It remains to evaluate $C_{\bg}$ for an irreducible pseudo orbit with self-intersections.  
Consider a primitive pseudo orbit $\bg\in \hP_{1}^n$, so $\bg$ has a single $2$-encounter of length zero. Schematically, the two pseudo orbits with a self-intersection that fit this scenario are the figure eight examples in section \ref{sec:one 2-encounter}. 
Let $A_{\link_i}$ for $i=1,2$ denote the product of scattering amplitudes at the interior vertices of $\link_i$.
If we take the pair of irreducible pseudo orbits,
\begin{eqnarray}
\bg&=\{ \gamma_1=(1\, 2), \gamma_2,\dots, \gamma_{m_{\bg}}\} \label{eq:single encounter 1}\\
\bg'&=\{\gamma'=(1), \gamma''=(2),\gamma_2,\dots, \gamma_{m_{\bg}}\}
\label{eq:single encounter 2}
\end{eqnarray}
then
\begin{equation}
A_{\gamma_1} = A_{\link_1}  A_{\link_2} \VS_{f_2,s_1}^{(v)} \VS_{f_1,s_2}^{(v)} \ .
\label{stabamp1,t=0}
\end{equation}
Reordering the links of $\bg$ at the encounter vertex $v$ to obtain $\bg'$ splits $\gamma_1$ into two orbits $\gamma'$ and $\gamma''$, and changes the scattering coefficients at $v$, 
\begin{equation} 
A_{\gamma'} A_{\gamma''}  = A_{\link_1}  A_{\link_2} \VS_{f_1,s_1}^{(v)} \VS_{f_2,s_2}^{(v)} \ .
\label{stabamp2,t=0}
\end{equation}
All other orbits in $\bg$ are unchanged in $\bg'$, and therefore,
\begin{equation}
A_{\bg} \bar{A}_{\bg'} 
= \left( \prod_{h=2}^{m_{\bg}} |A_{\gamma_h}|^2 \right) 
|A_{\link_1}|^2 |A_{\link_2}|^2 
\VS_{f_2,s_1}^{(v)} \VS_{f_1,s_2}^{(v)} \bar{\VS}_{f_1,s_1}^{(v)} \bar{\VS}_{f_2,s_2}^{(v)} \ .
\label{eq:stability 8}
\end{equation}
All four elements of the scattering matrix $\VS^{(v)}$ appear in the product and hence $A_{\bg} \bar{A}_{\bg'} = -2^{-n}$.
Similarly, when the roles of $\bg$ and $\bg'$ are reversed $A_{\bg} \bar{A}_{\bg'} = -2^{-n}$. 

Now consider an irreducible pseudo orbit $\bg\in \hP_{N}$ with $N$ self-intersections.
There is an associated vector $(\rho_1,\dots,\rho_N)$ of permutations $\rho_i \in S_2$ where $\rho_i$ records the order in which the two outgoing links follow the two incoming links at the $i$-th encounter in $\bg$.
Partner pseudo orbits $\bg'$ are generated by considering all possible vectors $(\rho'_1,\dots,\rho'_N)$  with $\rho_i' \in S_2$.
At the $i$-th self-intersection, if $\rho_i' = \rho_i$, then the product of scattering coefficients at that encounter is $1/4$ as two scattering coefficients are each used twice.
If, however, $\rho_i' \neq \rho_i$, each of the scattering coefficients at the encounter is used exactly once and their product is $-1/4$. 
Thus, if $j$ is the number of $2$-encounters at which the order of links in $\bg'$ is changed relative to $\bg$, then
\begin{equation}
\label{2encstability}
A_{\bg} \bar{A}_{\bg'}  
= (-1)^j\,  2^{-n} \ .
\end{equation}

The other source of sign changes in the formula (\ref{contributionC}) for $C_{\bg}$ is the factor $(-1)^{m_{\bg '}}$.
Consider again an irreducible pseudo orbit $\bg$ with a single $2$-encounter of length zero.
For the pseudo orbit  $\bg$ defined in (\ref{eq:single encounter 1}), reordering links at the self-intersection increases the number of periodic orbits in $\bg'$ relative to $\bg$; so $m_{\bg'} = m_{\bg} +1$.
Reversing the roles of $\bg$ and $\bg'$, reordering the links at the single self-intersection reduces the number of periodic orbits in $\bg'$ relative to $\bg$ by one.  Reordering links at a single self-intersection changes the parity of the number of orbits.
Hence if $j$ is the number of $2$-encounters at which $\bg$ is reordered to obtain the partner orbit $\bg'$ then, 
\begin{equation}
\label{2encparityoforbits}
(-1)^{m_{\bg}+m_{\bg'}} = (-1)^j \ .
\end{equation}

\begin{lemma}
	\label{thm:ContMult2enc}
	For an irreducible pseudo orbit $\bg$ of length $n$ with $N \geq 1$ self-intersections, $C_{\bg} = 2^{N-n}$.
\end{lemma}
\begin{proof}
	Let $\bg\in \hP_{N}^n$ be an irreducible pseudo orbit with $N \geq 1$ self-intersections and consider all partner orbits $\bg' \in \mathcal{P}_{\bg}$.
	Order the sum over $\mathcal{P}_{\bg}$ by the number of self-intersections at which link connections are reordered relative to $\bg$, which we denote by $j$.  Using (\ref{2encstability}) and (\ref{2encparityoforbits}),
	\begin{equation}
	C_{\bg} = \sum_{ \bg' \in \mathcal{P}_{\bg}} 
	(-1)^{m_{\bg}+m_{\bg'}}
	A_{\bg} \bar{A}_{\bg'}  
	= \left( \frac{1}{2} \right)^n
	\sum_{j=0}^N (-1)^{2j}
	\left({N\atop j}  \right) 
	= 2^{N-n} \ . 
	\end{equation} 
\end{proof}


We have computed the contributions $C_{\bg}$ to the variance defined in (\ref{contributionC}) for irreducible pseudo orbits with self-intersections.  Substituting in (\ref{avgvarcoeff2}) produces proposition \ref{thm:main}.

\section{Evaluating $C_{\bg}$ for Primitive Pseudo Orbits}
\label{sec:contributions}

In this section we describe a parity argument that shows $C_{\bg}=0$ when $\bg$ is a primitive pseudo orbit where there are $\ell$-encounters of positive length or where $\ell \geq 3$.  This differs from the cancellation scheme used to reduce the primitive pseudo orbit sums to irreducible pseudo orbit sums in \cite{BHJ12}.  As the argument only depends on the sign of the contribution of a partner pseudo orbit to (\ref{contributionC}) and the structure of the pseudo orbit this kind of argument may be able to be extended to other systems.  
 This section employs a connection with Lyndon words, a theme that will recur later, however this section can safely be skipped as later material is not contingent on it.  The first subsection \ref{sec:contributions single enc} describes the main idea in a setting which avoids the need for Lyndon words and can be read as a stand alone argument. 

\subsection{A Single Encounter of Positive Length with Distinct Links}
\label{sec:contributions single enc}

We will start by considering a primitive pseudo orbit with a single encounter of positive length where all the links are distinct.  We introduce the parity argument in this setting first where the proof can be described using permutations.  

Let $\bg$ be a primitive pseudo orbit with one $\ell$-encounter.  The encounter length $\tilde{n} > 0$ and all the links are distinct, see figure \ref{figfigure8enc}(a).
Each primitive partner pseudo orbit $\bg'$ corresponds to an element of $S_{\ell}$, as described in section \ref{sec:single encounter distinct links}.
Let $A_{\link_i}$ denote the product of scattering amplitudes at the interior vertices of $\link_i$ between $v_{\tilde{n}}$ and $v_0$ for $i=1,2, \dots, \ell$.
Similarly, $A_{\mathrm{enc}}$ will denote the product of scattering amplitudes at the interior vertices of the encounter between $v_0$ and $v_{\tilde{n}}$.
We note that $\bg$ will pick up a scattering coefficient at $v_0$ from each of the incoming vertices $s_1, \dots, s_{\ell} $ to the outgoing vertex $v_1$ and a scattering coefficient at $v_{\tilde{n}}$ from the incoming vertex  $v_{\tilde{n}-1}$ to each of the outgoing vertices $f_1, \dots, f_{\ell}$.
Then the stability amplitude of $\bg$ is 
\begin{equation}
A_{\bg} = \left( \prod_{\substack{ h = 1 \cr \gamma_h \textrm{ has no encounters} }}^{m_{\bg}} 	
A_{\gamma_h} \right)
A_{\enc}^\ell
\left( \prod_{i=1}^{\ell} A_{\link_i}  
\VS_{v_1, s_i}^{(v_0)} \VS_{f_i, v_{\tilde{n}-1}}^{(v_{\tilde{n}})} \right) \ .
\end{equation}
Permuting the order in which links are traversed to obtain a partner orbit $\bg'$ does not change the scattering coefficients, so for any partner orbit $A_{\bg'} = A_{\bg}$ and $A_{\bg} \bar{A}_{\bg'} = 2^{-n}$.
So from (\ref{contributionC}),
\begin{equation}
C_{\bg} = \left( \frac{1}{2} \right)^n (-1)^{m_{\bg}}
\sum_{\bg' \in \mathcal{P}_{\bg} }
(-1)^{m_{\bg'}} \ .
\label{contributionwithscattering}
\end{equation}
\begin{lemma}
	For a primitive pseudo orbit $\bg$ with a single $\ell$-encounter of positive length and distinct links, $C_{\bg} = 0$.
	\label{thmCont1}
\end{lemma}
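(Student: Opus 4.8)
The plan is to start from the already-reduced expression \eqref{contributionwithscattering},
\[
C_{\bg} = \left( \frac{1}{2} \right)^n (-1)^{m_{\bg}} \sum_{\bg' \in \mathcal{P}_{\bg}} (-1)^{m_{\bg'}} \ ,
\]
so that proving the lemma reduces to showing that the alternating sum $\sum_{\bg' \in \mathcal{P}_{\bg}} (-1)^{m_{\bg'}}$ vanishes. First I would invoke the correspondence established in section \ref{sec:single encounter distinct links} between the partner set $\mathcal{P}_{\bg}$ and the symmetric group $S_\ell$: since all $\ell$ links are distinct, each way of reconnecting them through the single encounter is recorded by a permutation $\sigma \in S_\ell$, every such $\sigma$ preserves both topological and metric length (each link is used once and the encounter $\ell$ times), and distinctness of the links guarantees that no resulting closed path is a repetition of a shorter one, so each partner is a genuine primitive pseudo orbit. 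This gives a bijection $\mathcal{P}_{\bg} \leftrightarrow S_\ell$.

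The key structural step is to express $m_{\bg'}$ through the cycle structure of $\sigma$. Each cycle of $\sigma$ corresponds to exactly one periodic orbit threading the encounter (its links, interleaved with traversals of the encounter sequence), while the orbits of $\bg$ that avoid the encounter are common to every partner. Writing $c(\sigma)$ for the number of cycles of $\sigma$ and $c(\sigma_{\bg})$ for the number of cycles of the permutation associated with $\bg$ itself, the $m_{\bg}-c(\sigma_{\bg})$ encounter-free orbits are shared by all partners, so
\[
m_{\bg'} = \bigl( m_{\bg} - c(\sigma_{\bg}) \bigr) + c(\sigma) \ .
\]
Table \ref{figExTable3enc} is exactly this relation for $\ell=3$: an $\ell$-cycle ($c=1$) reproduces $m_{\bg}$, a transposition-plus-fixed-point ($c=2$) gives $m_{\bg}+1$, and the identity ($c=3$) gives $m_{\bg}+2$.

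Substituting and pulling the $\sigma$-independent sign out of the sum yields
\[
\sum_{\bg' \in \mathcal{P}_{\bg}} (-1)^{m_{\bg'}} = (-1)^{m_{\bg} - c(\sigma_{\bg})} \sum_{\sigma \in S_\ell} (-1)^{c(\sigma)} \ ,
\]
and the proof closes by evaluating the signed cycle sum. I would use $\sgn(\sigma) = (-1)^{\ell - c(\sigma)}$, which rearranges to $(-1)^{c(\sigma)} = (-1)^\ell \sgn(\sigma)$, so that $\sum_{\sigma \in S_\ell}(-1)^{c(\sigma)} = (-1)^\ell \sum_{\sigma \in S_\ell} \sgn(\sigma)$. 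Because a self-intersection is repeated at least twice, $\ell \geq 2$, and for $\ell \geq 2$ the symmetric group contains equally many even and odd permutations, giving $\sum_{\sigma} \sgn(\sigma) = 0$; equivalently, one may evaluate the signless Stirling generating function $\sum_{\sigma \in S_\ell} x^{c(\sigma)} = \prod_{j=0}^{\ell-1}(x+j)$ at $x=-1$, where the factor $(x+1)$ forces the value zero. Either route gives $C_{\bg}=0$.

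The main obstacle is the bookkeeping of the middle step: confirming that $m_{\bg'}$ depends on the partner $\bg'$ only through $c(\sigma)$, uniformly across all of $\mathcal{P}_{\bg}$, and that the reindexing by $S_\ell$ is a true bijection rather than a partial or many-to-one correspondence. Once that identification is secured, the analytic content is the elementary vanishing of the signed cycle sum for $\ell \geq 2$, and the scattering amplitudes contribute nothing beyond the common factor $2^{-n}$ already extracted in \eqref{contributionwithscattering}.
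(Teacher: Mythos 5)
Your proposal is correct and follows essentially the same route as the paper's proof: both identify the partner set with $S_\ell$, split $m_{\bg'}$ into the shared encounter-free orbits plus the cycle count of the permutation, convert the parity of the cycle count into $\sgn(\sigma)$ via $\sgn(\sigma)=(-1)^{\ell-c(\sigma)}$, and conclude from the equal numbers of even and odd permutations in $S_\ell$. Your additional remarks (explicitly checking the bijection, noting $\ell\geq 2$, and the alternative evaluation of $\sum_{\sigma\in S_\ell}(-1)^{c(\sigma)}$ via $\prod_{j=0}^{\ell-1}(x+j)$ at $x=-1$) are embellishments of the same argument, not a different method.
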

\begin{proof}
	The number of primitive periodic orbits $m_{\bg}$ in $\bg$ is the number of cycles in $\sigma_{\bg}\in S_{\ell}$ plus the number of periodic orbits with no encounters which we denote $u$.
	So,
	\begin{equation}
	(-1)^{m_{\bg}}=\sgn(\sigma_{\bg}) (-1)^{u+\ell} \ .
	\end{equation}
Hence,
	\begin{equation}
	C_{\bg} = \left( \frac{1}{2} \right)^n  
	\sgn(\sigma_{\bg})
	\sum_{\sigma_{\bg'} \in S_\ell}
	\sgn(\sigma_{\bg'}) \ .
	\end{equation}
	As there are equal numbers of odd and even permutations in $S_\ell$, the result follows.
\end{proof} 

The following sections demonstrate that this kind of parity argument extends to primitive pseudo orbits with repeated links and positive length and to primitive pseudo orbits with an $\ell$-encounter of length zero when $\ell\geq 3$ by labeling the pseudo orbits with Lyndon tuples. 

\subsection{A Single $\ell$-encounter with $\ell\geq 2$ of Positive Length}
\label{sec:contributions single lenc positive}

As the encounter length is positive, again $A_{\bg} \bar{A}_{\bg'} = 2^{-n}$.
Now, to allow for repeated links, the set of links defines a multiset $\cM$.  
To each primitive pseudo orbit, there is an associated Lyndon tuple $\tup(w) = (l_1, l_2, \dots, l_k)$ over a word $w$ of length $\ell$ from the set of Lyndon tuples $\mathcal{L}(\cM)$ over $\cM$.
Each Lyndon word $l_i$ in $\tup(w)$ uniquely corresponds to a primitive periodic orbit in $\bg$, so the number of periodic orbits in $\bg'$ is the number of Lyndon words in $\tup(w)$ plus the number of primitive periodic orbits in $\bg'$ with no intersections, which we denote $u$.  
As the Lyndon index $i_{\mathcal{L}}(w) = \ell - k$, we have 
\begin{equation}
m_{\bg'} = \ell - i_{\mathcal{L}}(w) + u \ .
\label{numoforbitsrepeating}
\end{equation}

\begin{lemma}
	For a primitive pseudo orbit $\bg$ containing a single $\ell$-encounter with $\ell\geq 2$ of positive length, $C_{\bg} = 0$.
	\label{thmCont2}
\end{lemma}
\begin{proof}
	Combining (\ref{contributionC}) and (\ref{numoforbitsrepeating}),
	\begin{equation}
	C_{\bg} = \left( \frac{1}{2} \right)^n  (-1)^{i_{\mathcal{L}}(w)}
	\sum_{\tup(w') \in \mathcal{L}(\cM)}
	(-1)^{i_{\mathcal{L}}(w')}  = 0 \ ,
	\end{equation}
	as theorem \ref{thm:splitting} implies there are equal numbers of Lyndon tuples in $\mathcal{L}(\cM)$ with odd and even Lyndon index.
\end{proof}

\subsection{A Single $\ell$-encounter with $\ell\geq 3$ of Zero Length}
\label{sec:contributions single lenc zero}

Now consider a primitive pseudo orbit $\bg$ with a single $\ell$-encounter where $\ell \geq 3$ and the length of the encounter is zero.
We include the possibility of repeated links by using the multiset notation; when the links are distinct each element of the multiset has multiplicity one.
As the self-intersection has zero length, there is a single intersection at vertex $v$ and the product of stability amplitudes depends on how many times the single negative scattering coefficient at $v$, see (\ref{DFTmatrix}), appears in $A_{\bg'}$.
To prove $C_{\bg} =0$, we show every  
Lyndon tuple over $\cM$ with even (odd) Lyndon index can be paired with a Lyndon tuple over $\cM$ with odd (even) Lyndon index where the corresponding pseudo orbits contribute the same number of negative scattering coefficients at $v$.

Consider the single encounter vertex $v$ and the four adjacent vertices of the binary graph, figure \ref{figLencdiagram}.  
We labeled the adjacent vertices preceding the encounter on each repetition in $\bg$ by $s_1, \dots, s_{\ell}$ and those following the encounter by $f_1, \dots, f_{\ell}$ so that $f_j$ was joined to $s_j$ by the $j$-th link.
We denote the two vertices of the binary graph incoming at $v$ by $a$ and $b$ and the outgoing pair at $v$ by $c$ and $d$.   Note that each link must enter $v$ from either $a$ or $b$ and must leave $v$ to either $c$ or $d$.  
In order for $v$ to be traversed by $\bg$ a total of $\ell$ times, $\bg$ must enter through bonds $(a,v)$ and $(b,v)$ a total of $\ell$ times, with each bond used at least once (otherwise the encounter is not of length zero).  
We let $0<I<\ell$ denote the number of times the bond $(b,v)$ is traversed so $(a,v)$ is traversed $\ell-I$ times.
Similarly, we let $0<J<\ell$ denote the number of times $(v,d)$ is used so $(v,c)$ is traversed $\ell-J$ times.
We also assume $J \geq 2$, as $\ell \geq 3$.

\begin{figure}
	\centering
	\includegraphics[scale=1, trim=100 595 325 125, clip]{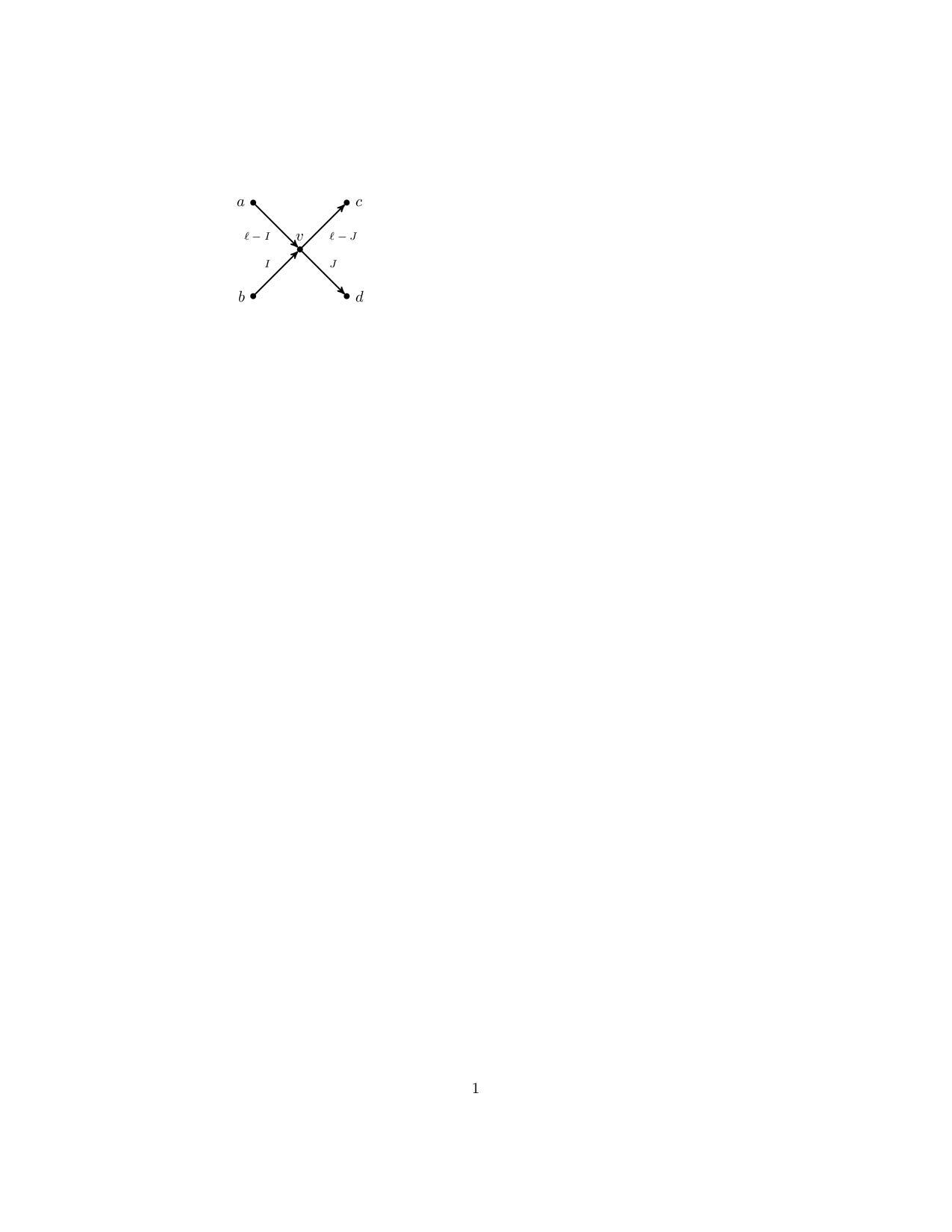}
	\caption[An $\ell$-encounter at vertex $v$ with the incoming and outgoing bonds at $v$]{An $\ell$-encounter at vertex $v$ with the incoming and outgoing bonds at $v$.  Links are not pictured, but each link starts with the bond $(v,c)$ or $(v,d)$ and ends with the bond $(a,v)$ or $(b,v)$.  Each bond is labeled by the number of times it appears in the links of $\bg$.} 
	\label{figLencdiagram}
\end{figure}

Without loss of generality, we assign the single negative scattering coefficient in the scattering matrix (\ref{DFTmatrix}) to $\VS_{d,b}^{(v)}$.
So to compute $C_{\bg}$, we need to track the number of times a link ending with $(b,v)$ is followed by a link beginning with $(v,d)$ in $\bg$ and each of its partner orbits.
Let the multiset $\cD \subset \cM$ be the multiset of links starting at the bond $(v,d)$.  We can define the link labels so that $\cD$ consists of the first $\nu$ labels for $\nu < \mu$, 
\begin{equation}
\cD = [ 1^{m_1}, 2^{m_2}, \dots, \nu^{m_{\nu}} : \link_i \text{ begins with bond } (v,d) \text{ for } 1 \leq i \leq \nu ] \ .
\end{equation}
As $J \geq 2$, so also $|\cD| \geq 2$.  
Then let $\cB \subset \cM$ be the multiset such that
\begin{equation}
\cB = [ i^{m_i} : \link_i \text{ ends with bond } (b,v) ] \ .
\end{equation}
Note that if $\link_i$ is repeated $m_i$ times in $\bg$, then $i$ has multiplicity $m_i$ in $\cM$, and thus has multiplicity $m_i$ in the set $\cB$ and/or $\cD$, if it is present in either.
It is possible for the intersection of multisets $\cB$ and $\cD$ to be non-empty.

Let $T: \mathcal{L}(\cM) \to \mathbb{N}_0$ be a function that counts the number of times a label from the set $\cB$ is followed by a label from the set $\cD$ (cyclically) in the Lyndon words of some Lyndon tuple $\tup(w)$.
So if $\bg$ corresponds to the Lyndon tuple $\tup(w)$, then $T(\tup(w))$ is the number of times that the scattering coefficient $\VS_{d,b}^{(v)}$ appears in $A_{\bg}$.
The following lemma describes the relationship between the bijection $f$ between odd and even Lyndon tuples (\ref{eqPPOmaptuples}), used in the proof of theorem \ref{thm:splitting}, and the function $T$.

\begin{lemma}
	Let $\cM = [ 1^{m_1}, 2^{m_2}, \dots, \mu^{m_{\mu}} ]$ be a multiset where $m_i > 0$ for all $i = 1, \dots, \mu$, and let $\cB, \cD \subset \cM$ be defined as above.
	If $f(w) = w'$, then $T(\tup(w)) = T(\tup(w'))$.
	\label{propL3enczerobalance}
\end{lemma}

\begin{proof}
	Let $\cM$ be a multiset and $\tup(w) = (l_1, l_2, \dots, l_k) \in \mathcal{L}(\cM)$ for $w = l_k \dots l_2 l_1$ which is a strictly decreasing Lyndon decomposition, $l_1 \triangleleft l_2 \triangleleft \cdots \triangleleft l_k$.
	Note that, $l_1$ must begin with the letter one.
	
	If $l_1$ is not splittable, either $l_1$ is a single letter or its standard factorization $(r_1, s_1)$ does not satisfy $s_1 \triangleleft l_2$.
	Assume $l_1 \in \mathcal{A}$; then $l_2$ begins with an element of $\cD$, as $|\cD| \geq 2$.  Thus, the letter $l_1$ is followed (cyclically) by an element of $\cD$, itself, and the last letter of $l_2$ is followed (cyclically) by an element of $\cD$.
	Under the map $f$ the first two words of the Lyndon tuple are combined into $l_0 = l_1 l_2$ and the letter $l_1$ is now followed by an element of $\cD$, the first letter of $l_2$, and the last letter of $l_2$ is now followed by the letter $l_1$, which is also an element of $\cD$.  
	Hence, if $f(w) = w'$, also $T(\tup(w)) = T(\tup(w'))$.

	Suppose $l_1$ has standard factorization $(r_1, s_1)$ with $s_1 \triangleright l_2$, so $l_1$ is not splittable.
	If $l_1$ contains a single element of $\cD$, then as $|\cD| \geq 2$ and $l_1 \triangleleft l_2 \triangleleft \cdots \triangleleft l_k$ with each a Lyndon word, $l_2$ begins with an element of $D$.
	Otherwise, $l_1$ contains multiple elements of $\cD$; as $s_1$ is the minimum proper suffix lexicographically of $l_1$ by proposition \ref{propstdfactsuffix}, it must also begin with an element of $\cD$.
	As $l_2 \triangleleft s_1$ by assumption, $l_2$ must also begin with an element of $\cD$.
	Hence, both the last letter of $l_1$ is followed by an element of $D$ and the last letter of $l_2$ is followed by an element of $\cD$. 
	As $l_1$ is not splittable, the map $f$ combines $l_1$ and $l_2$ into $l_0 = l_1 l_2$.  In $l_0$ the last letter of $l_1$ is still followed by an element of $\cD$, the first letter of $l_2$, and the last letter of $l_2$ is still followed by an element of $\cD$, the first letter of $l_1$. So if $f(w) = w'$ also $T(\tup(w)) = T(\tup(w'))$.

	If $l_1$ is splittable, then $|l_1| \geq 2$ and $l_1$ has standard factorization $(r_1, s_1)$ such that $s_1 \triangleleft l_2$.
	Assume that $l_1$ has only one element of $\cD$, which must be the first letter.  
	Then the least proper suffix $s_1$ of $l_1$ begins with a letter that is greater than all elements of $D$.
	As $|\cD| \geq 2$, $l_2$ must begin with an element of $\cD$, so $s_1 \triangleright l_2$, a contradiction. 
	Then, as $l_1$ must contain at least two elements of $\cD$, consider its standard factorization $(r_1, s_1)$.  As $s_1$ is the least proper suffix of $l_1$ lexicographically by proposition \ref{propstdfactsuffix}, both $r_1$ and $s_1$ begin with elements of $\cD$.  In $l_1$ the last letter of $r_1$ is followed by an element of $\cD$ as is the last element of $s_1$.  
	For a splittable word $l_1$  the map $f(w) = w'$ with $\tup(w') = (r_1, s_1, l_2, \dots, l_k)$.  Then in $\tup(w')$ the last letter of $r_1$ is still followed by an element of $\cD$ and the last letter of $s_1$ is still followed by an element of $\cD$.  So $T(\tup(w)) = T(\tup(w'))$. 
\end{proof}

The partner orbits $\bg' \in \mathcal{P}_{\bg}$ of $\bg$ correspond to the elements of $\mathcal{L}(\cM)$. So lemma \ref{propL3enczerobalance} implies every partner pseudo orbit labeled by an even (odd) $\ell$-word $w_1$, with Lyndon tuple $\tup(w_1)$, can be paired with a partner pseudo orbit labeled with an odd (even) $\ell$-word $w_2$, with Lyndon tuple $\tup(w_2)$, and both elements of the pair use the scattering coefficient $\VS_{d,b}^{(v)}$ the same number of times.
This allows us to evaluate $C_{\bg}$.
\begin{lemma}
	For a primitive pseudo orbit $\bg$ containing a single $\ell$-encounter of length zero with $\ell\geq 3$ we have $C_{\bg} = 0$.
	\label{thmCont3}
\end{lemma}
\begin{proof}
	Let $\bg$ be a primitive pseudo orbit containing a single $\ell$-encounter of length zero with $\ell\geq 3$, where $\bg$ corresponds to a Lyndon tuple $\tup(w) \in \mathcal{L}(\cM)$.
	The partner pseudo orbits $\bg' \in \mathcal{P}_{\bg}$ correspond to all Lyndon tuples $\tup(w') \in \mathcal{L}(\cM)$.  If we group the sum over partner pseudo orbits according to the number of times $\alpha$ that the orbit scatters from $b$ to $d$ at $v$ then using equation (\ref{numoforbitsrepeating}),
	\begin{equation}
	\fl C_{\bg} = \left( \frac{1}{2} \right)^n  (-1)^{T(\tup(w)) + i_{\mathcal{L}}(w)}
	\sum_{\alpha = I-\min\{I,\ell-J\}}^{\min\{I,J\}}
	(-1)^{\alpha}
	\sum_{\substack{\tup(w') \in \mathcal{L}(\cM): \cr T(\tup(w')) = \alpha }}
	(-1)^{i_{\mathcal{L}}(w')}  = 0 \ ,
	\end{equation}
	where lemma \ref{propL3enczerobalance} is used to evaluate the inner sum.
\end{proof}

\subsection{Multiple Encounters}
\label{sec:contributions mult enc various}

The previous lemmas imply the following proposition.
\begin{proposition}
	Let $\bg$ be a primitive pseudo orbit  containing $N$ encounters of types $\vec{\ell} = (\ell_1, \ell_2, \dots, \ell_N)$, such that at least one self-intersection is either of positive length or is an $\ell$-encounter of length zero with $\ell \geq 3$, then $C_{\bg} = 0$.
	\label{thmCont4}
\end{proposition}
\begin{proof}  Following the previous lemmas we only need to establish the result for a primitive pseudo orbit
	$\bg$ with $N\geq 2$ self-intersections  where, 
	without loss of generality, the first encounter has either positive length or it has length zero and $\ell_1 \geq 3$.   We must evaluate the sum over all primitive pseudo orbit partners of $\bg$, equation (\ref{contributionC}).
	
	First consider the subset of primitive partner pseudo orbits $\bg'$ that are obtained by reordering $\bg$ at the first self-intersection.   Starting at each link that is outgoing at the first encounter follow $\bg$ until it returns to the first encounter.  Each of these sequences of links can now be treated as a single link starting and ending at the first encounter.  Then applying lemma \ref{thmCont2} or \ref{thmCont3} we can sum over the subset of primitive partner pseudo orbits of $\bg$ obtained by reordering $\bg$ at the first self-intersection only.  The contribution to $C_{\bg}$ from this subset of partner pseudo orbits is zero.  
	
	Now choose a primitive pseudo orbit partner of $\bg$ which did not appear in the subset generated from $\bg$ by reordering at the first self-intersection, and denote it $\bg_1$.  Starting at the first encounter, follow each outgoing link of $\bg_1$ until it returns to the first encounter and treat each of these sequences of links as a single link starting and ending at the first encounter.  Note that these links must differ from those constructed using $\bg$.  Then applying lemma  \ref{thmCont2} or \ref{thmCont3} we can sum over the subset of primitive partner pseudo orbits of $\bg$ obtained by reordering $\bg_1$ at the first self-intersection only, up to a sign.  However, the contribution to $C_{\bg}$ from this subset of partner orbits is again zero. Repeating the procedure we obtain a sequence of primitive partner pseudo orbits $\bg_2, \bg_3, \dots$, each of which has not been counted previously.  At each step the contribution to $C_{\bg}$ from the subset of primitive pseudo orbit partners generated from $\bg_j$ by reordering links at the first self-intersection is zero.  As there are a finite number of primitive pseudo orbits of length $n$ the sequence terminates with all primitive partner orbits of $\bg$ included in the sum. 
\end{proof}

\section{The Semiclassical Limit}
\label{sec:semiclassical}

For quantum graphs the semiclassical limit is the limit of a sequence of graphs with increasing numbers of bonds, the limit of increasing spectral density \cite{KS99}.  For binary graphs we will consider the family of graphs with $B=p\cdot 2^{r+1}$ and consider the large $r$ limit.  To compare the variance of coefficients of the characteristic polynomial for families of graphs where the size of the graphs, and so also the number of coefficients, varies we can take $r$, or equivalently $n$, to infinity for a fixed ratio $n/B$.  For example, figure \ref{fignumericsBGp1} shows the variance of coefficients of the characteristic polynomial evaluated numerically for the family of binary graphs with $p=1$.  

\begin{proposition}\label{variance limit}
	Consider the family of binary graphs with $V=p\cdot 2^{r}$ vertices with fixed $p$ odd and the ratio $n/B$ fixed. Then,
	\begin{equation}
	\langle |a_n|^2 \rangle_k \approx
	\left( \frac{1}{2} \right)^{n} \cdot \PPO_{V}(n)  \ ,
	\end{equation}
	where $\PPO_{V}(n)$ is the number of primitive pseudo orbits of topological length $n$ on the graph with $V$ vertices.
	\label{prop semiclassical}
\end{proposition}


\begin{proof}
	We seek to asymptotically determine the size of the sets $\hP_{N}^n$ in proposition \ref{thm:main}.
	To have a $2$-encounter of length zero a pseudo orbit must intersect at a single vertex $v$ exactly twice.  Following the pseudo orbit from some arbitrary starting vertex, the second time that the pseudo orbit encounters $v$ it must exit $v$ using the bond that was not used the first time it passed through $v$ in order for the encounter to have length zero. 
	The probability that a pseudo orbit leaves $v$ on a different bond the second time it visits $v$ than it did the first time is $1/2$. Consequently, for long pseudo orbits on large graphs, where the encounters are independent, the probability that all the encounters have length zero is $(1/2)^N$ where $N$ is the number of $2$-encounters.  So for large $n$, 
	\begin{equation}
	| \hP_{N}^n | 
	\approx \left( \frac{1}{2} \right)^N \cdot |\iP_N^n | \ ,
	\label{setapprox}
	\end{equation}
	where $\iP_N^n$ is the number of primitive pseudo orbits with $N$ encounters which are all $2$-encounters but where the encounters can have any length.
	
	Let $(\iP_N^n)^c = \mathcal{P}_N^n \backslash \iP_N^n$ be the set of all primitive pseudo orbits of topological length $n$ with $N$ self-intersections that have $\vec{\ell} = (\ell_1, \ell_2, \dots, \ell_N)$ repetitions, such that at least one $\ell_i \geq 3$.  
	In order to have an $\ell$-encounter with $\ell \geq 3$, after a $2$-encounter the pseudo orbit must return to the first vertex of the encounter sequence $v_0$ for a third time. 
	There are many ways to make an encounter in a pseudo orbit, as there are $n$ possible points of intersection, but only one way to make a $2$-encounter into a $3$-encounter by intersecting a second time at an encounter vertex.
	
	As binary graphs are mixing \cite{T00}, the probability to land on any vertex after a large number of steps is $B^{-1}$.
	Hence, the number of orbits in $(\iP_N^n)^c$ scales with the size of $\mathcal{P}_N^n$ asymptotically like $1/B$.
	Thus, for large $n$, 
	\begin{equation}
	| \iP_N^n | \approx | \mathcal{P}_N^n | \ .
	\label{asympsetsize}
	\end{equation}
	Substituting (\ref{setapprox}) and (\ref{asympsetsize}) in (\ref{variancefinal}),
	\begin{equation}
	\langle |a_n|^2 \rangle_k
	\approx \left( \frac{1}{2} \right)^{n} \sum_{N=0}^n |\mathcal{P}_N^n |  \ .
	\end{equation}
	Noting that the sets $\mathcal{P}_N^n$ are disjoint, their union is the set of all primitive pseudo orbits $\PPO_{V}(n)$ of length $n$.
\end{proof}

The asymptotic, proposition \ref{prop semiclassical}, agrees with the diagonal contribution where every orbit is counted once with equal weight (\ref{diagpart}). 
However, proposition \ref{thm:main} itself, looks quite different, as it depends only on pseudo orbits with no self-intersections and those where all the self-intersections are $2$-encounters of length zero.  
This is somewhat surprising. 
Typically when a dynamical approach is used to evaluate a spectral statistic, a diagonal approximation appears naturally.  However, in previous results an asymptotic argument is required at an early stage.  This is the first example of a spectral statistic, for a chaotic quantum system, that can be evaluated precisely before taking a semiclassical limit. So we see, for the first time, a result that does not appear directly connected to a diagonal approximation, but where this connection, nevertheless, appears asymptotically.

\section{Counting Orbits and Pseudo Orbits}
\label{sec:counting}

Motivated by proposition \ref{prop semiclassical}, we count the number of primitive pseudo orbits on binary graphs with $V=p\cdot 2^r$ vertices where $p$ is odd.  
This was obtained for de Bruijn graphs (the family with $p=1$) in \cite{BHS19}. We start by relating the number of primitive periodic orbits of length $n$ to the number of Lyndon words of length $n$ on a binary alphabet.


\subsection{Counting Primitive Periodic Orbits}
\label{sec:counting prim POs}

On a binary de Bruijn graph with $V=2^r$ vertices there is a bijection between primitive periodic orbits of length $n$ and binary Lyndon words of length $n$.  Consequently, 
\begin{equation}
\PO_V(n) = \Lyndon \ .
\end{equation}
When $V=p\cdot 2^r$ with $p\neq 1$ and odd there is no longer a straightforward relationship between periodic orbits and Lyndon words.  

Let $\A_V$ be the $V\times V$ adjacency matrix of the binary graph with $V=p\cdot 2^r$ vertices. 
Consider $\A_V$ as a block matrix with blocks of dimensions $V/2 \times V/2$, where we label the two blocks in the last row of $\A_V$ by $\B_{V/2}^0$ and $\B_{V/2}^1$.
Thus, we can let $0 \leq i,j < V/2$ for each block matrix $\B^0$ and $\B^1$ and write
\begin{eqnarray}
\left[\B^{0}_{V/2}\right]_{i,j} &= \delta_{2i, j}   
+ \delta_{2i+1, j}  \\
\left[ \B^{1}_{V/2} \right]_{i,j} &= \delta_{2i, j+V/2}   
+ \delta_{2i+1, j+V/2} \ .
\end{eqnarray}
We let $ \A_p = \B_p^0 + \B_p^1$, where $\B_p^0, \B_p^1$ are $p\times p$ blocks in $\A_{2p}$.

\begin{lemma}
	For a binary graph with $V=p\cdot 2^r$ vertices, the nonzero eigenvalues of the adjacency matrix $\A_V$ are also eigenvalues of $\A_p$ with the same multiplicity; the multiplicity of zero as an eigenvalue of $\A_V$ is at least $V-p$.
	\label{thmnumnonzeroevals}
\end{lemma}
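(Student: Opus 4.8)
\section*{Proof proposal}

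The plan is to exploit a self-similar block structure of $\A_V$ and reduce the spectral question to a single factorization of the characteristic polynomial that can be iterated. Writing $\A_V$ in $2\times 2$ block form with blocks of size $(V/2)\times(V/2)$, I would first observe that its two block-rows are \emph{identical}. Indeed, for $0\le i<V/2$ the $i$-th row of \eqref{qnaryadjacency} reads $\delta_{2i,j}+\delta_{2i+1,j}$, while for the row indexed by $i+V/2$ the second case of \eqref{qnaryadjacency} gives $\delta_{2(i+V/2)-V,\,j}+\delta_{2(i+V/2)+1-V,\,j}=\delta_{2i,j}+\delta_{2i+1,j}$, the same expression. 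Splitting into the column blocks $\mu=0,1$ then identifies these rows with the matrices $\B^0,\B^1$ defined above, so that
\begin{equation}
\A_V = \begin{pmatrix} \B^0 & \B^1 \\ \B^0 & \B^1 \end{pmatrix}.
\end{equation}

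Next I would conjugate by the unimodular matrix $S$, where
\begin{equation}
S = \begin{pmatrix} I & 0 \\ -I & I \end{pmatrix}, \qquad S^{-1} = \begin{pmatrix} I & 0 \\ I & I \end{pmatrix},
\end{equation}
and $I$ is the $(V/2)\times(V/2)$ identity. A direct computation gives the block upper-triangular form
\begin{equation}
S\,\A_V\,S^{-1} = \begin{pmatrix} \B^0+\B^1 & \B^1 \\ 0 & 0 \end{pmatrix}.
\end{equation}
Since similar matrices share a characteristic polynomial, and the characteristic polynomial of a block-triangular matrix factors over its diagonal blocks,
\begin{equation}
\det(\A_V - \lambda I) = (-\lambda)^{V/2}\,\det\big((\B^0+\B^1) - \lambda I\big).
\end{equation}
The second ingredient is the identity $\B^0+\B^1 = \A_{V/2}$: summing the two bottom blocks merges the $\mu=0$ and $\mu=1$ cases of \eqref{qnaryadjacency} at half the number of vertices, reproducing exactly the adjacency matrix of the binary graph on $V/2=p\cdot 2^{r-1}$ vertices. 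Because $\A_{V/2}$ again has the form treated above, I would apply the factorization recursively, peeling off a factor $(-\lambda)^{V/2^{k}}$ at the $k$-th step, until reaching the odd base case $\A_p=\B^0+\B^1$ (the $p\times p$ blocks of $\A_{2p}$), which admits no further even block split. Iterating $r$ times and using $\sum_{k=1}^{r} V/2^{k} = V - V/2^{r} = V - p$ yields
\begin{equation}
\det(\A_V - \lambda I) = (-\lambda)^{V-p}\,\det(\A_p - \lambda I).
\end{equation}
Comparing roots gives the lemma: every nonzero eigenvalue of $\A_V$ is a nonzero eigenvalue of $\A_p$ with the same algebraic multiplicity, and $0$ has multiplicity $V-p$ plus its multiplicity in $\A_p$, hence at least $V-p$.

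The step requiring the most care is bookkeeping rather than difficulty: one must verify the equal-block-row identity and the identity $\B^0+\B^1=\A_{V/2}$ directly from the index conventions in \eqref{qnaryadjacency}, treating the ranges $0\le i<V/4$ and $V/4\le i<V/2$ separately so that the Kronecker deltas fall in the correct column block. I would stress carrying out the argument at the level of the characteristic polynomial via the similarity $S$, rather than through an eigenvector correspondence between $\A_V$ and $\B^0+\B^1$: the conjugation tracks algebraic multiplicities automatically, whereas a bare eigenvector argument would only control geometric multiplicities and leave the Jordan structure at the zero eigenvalue unresolved. Since $\A_V$ is not symmetric, this distinction is the genuine subtlety, and it is precisely what the triangularization circumvents.
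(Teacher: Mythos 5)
Your proposal is correct and is essentially the paper's own argument: the paper subtracts the second block-row from the first and adds the first block-column to the second in $\det(\A_V-\lambda\UI)$, which is precisely your conjugation by a unipotent block matrix (the two operations are mutually inverse multiplications), and both proofs then use $\B^0+\B^1=\A_{V/2}$ and iterate $\det(\A_V - \lambda \UI_V) = (-\lambda)^{V/2} \det(\A_{V/2} - \lambda \UI_{V/2})$ down to $\A_p$. The only difference is presentational — triangularizing $\A_V$ itself by similarity versus performing determinant-preserving operations on $\A_V-\lambda\UI$ — and both track algebraic multiplicities correctly.
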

\begin{proof}
	Consider the matrix $\A_V-\lambda \UI_V$, for $V=p\cdot 2^r$ with $r \geq 1$, as a block matrix with blocks of size $V/2 \times V/2$.  For a binary graph, $\A_V$ was defined in (\ref{eq:defn A_V}) and
	subtracting the second row of blocks of $\A_V-\lambda \UI_V$ from the first row of blocks eliminates the entries of one in the top row of blocks leaving entries of $\pm \lambda$. 
	Hence,
	\begin{equation}
	\det(\A_V - \lambda \UI_V) = 
	\left| 
	\begin{array}{cc}
	-\lambda \UI_{V/2} & \lambda \UI_{V/2}\\
	\B_{V/2}^0 & \B_{V/2}^1- \lambda \UI_{V/2} \\
	\end{array}
	\right| \ .
	\end{equation}
	Adding the first column of blocks to the second makes the block matrix triangular and
	\begin{eqnarray}
	[\B_{V/2}^0 + \B_{V/2}^1]_{i,j}
	&= \delta_{2i, j} + \delta_{2i+1, j} + \delta_{2i, j+V/2} + \delta_{2i+1, j+V/2} \\
	&= \delta_{2i \pmod{V/2}, j} + \delta_{2i+1 \pmod{V/2}, j} \\
	&= \left( \A_{V/2} \right)_{i,j} \ .
	\end{eqnarray} 
	Consequently,
	\begin{equation}
	\det(\A_V - \lambda \UI_V)
	= (-\lambda)^{V/2} \det(\A_{V/2} - \lambda \UI_{V/2})\ .
	\end{equation}
	So the only non-zero eigenvalues of $\A_V$ are those of $\A_{V/2}$.   Iterating, the non-zero eigenvalues of $\A_V$ for $V=p\cdot 2^r$ are also eigenvalues of $\A_p$.
\end{proof}

As was the case for $p=1$, lemma \ref{thmnumnonzeroevals} implies the number of primitive periodic orbits when $p\neq 1$ remains independent of $r$.  So we write $\PO_{p}(n)$ for $\PO_V(n)$ henceforth.

We now seek to determine the eigenvalues of $\A_p$.
We will write $\A_p$ in a form that allows us to apply results for eigenvalues of a generalized permutation matrix, a matrix with exactly one non-zero entry in each row and column.  
These matrices factor as a product of a diagonal matrix and a permutation matrix \cite{GM15}.
Let $\pi$ be the permutation associated to a generalized permutation matrix and let $(i_1\ i_2\ \dots\ i_c)$ be a $c$-cycle of $\pi$ for $1 \leq c \leq p$. Let the nonzero entries of the generalized permutation matrix in the  columns $i_1, i_2, \dots, i_c$ be denoted $a_{i_1}, a_{i_2}, \dots, a_{i_c}$.
Then a factor of the characteristic polynomial of the generalized permutation matrix is \cite{GM15},
\begin{equation}
\left( \lambda^c - \prod_{j=1}^{c} a_{i_j} \right) \ .
\label{charpolyfactor}
\end{equation} 

Let $\S$ be the $p\times p$ permutation matrix
\begin{equation}
\S = \left( 
\begin{array}{ c c c c c  }
0 & 1 & 0 &  \dots & 0 \\
0 & 0 & 1 &\dots & 0 \\
\vdots & \vdots & \ddots  &\ddots  & \vdots  \\	
0 & 0 & \dots & 0  & 1 \\
1 & 0 & \dots & 0  & 0 \\ 
\end{array}
\right) \ ,
\end{equation}
and note that $\S^p = \UI_p$.  
As $\S$ is a circulant matrix, it can be diagonalized by the Discrete Fourier Transform matrix \cite{D79}, which we use in the proof of the next lemma.
We also define the $p\times p$ matrix $\H=\diag \{ 1,0,\dots,0\}$.
We can now write,
\begin{equation}\label{eq:Ap decomposition}
\A_p = \left( \sum_{h=0}^{p-1} \S^{p-h} \H \S^{2h} \right) (\UI_p+\S) \ .
\end{equation}
Multiplying  $\H$ on the left by $\S^{p-h}$ moves the unit entry in row zero of $\H$ to row $h$ and subsequent right-multiplication by $\S^{2h}$ sends the  non-zero entry in the first column to the column $2h \pmod{p}$.
As the original adjacency matrix $\A_V$ had two consecutive entries in each row, right multiplication by the matrix $\UI_p+\S$ duplicates the first non-zero entry in each row in the next consecutive column in $\A_p$.
Note that, $\A_1 = [ 2 ]$.


\begin{lemma}
	\label{thmevalsofAp}
	The matrix $\A_p$ has a simple eigenvalue of $2$ and all other eigenvalues form complete sets of roots of unity.
\end{lemma}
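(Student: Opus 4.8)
The plan is to use the circulant structure of $\S$ to conjugate $\A_p$ into a \emph{generalized} permutation matrix, whose spectrum can then be read off factor-by-factor from \eqref{charpolyfactor}. First I would record that the left factor in \eqref{eq:Ap decomposition}, which I call $\mathbf{P}:=\sum_{h=0}^{p-1}\S^{p-h}\H\S^{2h}$, is an honest permutation matrix. A direct computation using $\S^{p-h}\H=e_h e_0^{\top}$ (where $e_h$ is the $h$-th standard basis vector) and $e_0^{\top}\S^{2h}=e_{2h}^{\top}$ gives $\mathbf{P}_{h,2h}=1$, and since $p$ is odd the map $h\mapsto 2h\pmod p$ is a bijection. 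Thus $\A_p=\mathbf{P}(\UI_p+\S)$ encodes the doubling rule $i\mapsto 2i,\ 2i+1\pmod p$ in the rows, matching the two consecutive nonzero entries per row.

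Next I would diagonalize by the DFT matrix $\F$. Writing $\omega=\rme^{2\pi\rmi/p}$ and letting $f_k$ denote the character vector with $(f_k)_i=\omega^{ik}$, one has $\S f_k=\omega^k f_k$, so $\F(\UI_p+\S)\F^{-1}=\diag(1+\omega^k)_{k=0}^{p-1}$ is diagonal, while $\mathbf{P}f_k=f_{2k}$ shows that in the Fourier basis $\mathbf{P}$ is once again the permutation matrix of doubling. Consequently $\A_p f_k=(1+\omega^k)\,f_{2k}$, i.e. $\F\A_p\F^{-1}$ has a single nonzero entry $1+\omega^k$ in each column $k$, located in row $2k$. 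Because $p$ is odd we have $\omega^k\neq-1$, so every such entry is nonzero and this is a genuine generalized permutation matrix, similar to $\A_p$ and hence with the same eigenvalues; its underlying permutation is $\pi:k\mapsto 2k\pmod p$, whose cycles are exactly the cyclotomic cosets of $2$ modulo $p$.

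Now I would apply \eqref{charpolyfactor} cycle by cycle. The index $0$ is the unique fixed point of doubling, contributing the factor $\lambda-(1+\omega^0)=\lambda-2$; it is the only $1$-cycle and yields the eigenvalue $2$ (with eigenvector $f_0$, the all-ones vector). For any other cycle $C=\{k,2k,\dots,2^{c-1}k\}$ with $k\neq 0$, all exponents are nonzero modulo $p$, so I can use $1+\omega^m=(1-\omega^{2m})/(1-\omega^m)$ to telescope:
\begin{equation}
\prod_{j=0}^{c-1}\bigl(1+\omega^{2^{j}k}\bigr)
=\prod_{j=0}^{c-1}\frac{1-\omega^{2^{j+1}k}}{1-\omega^{2^{j}k}}
=\frac{1-\omega^{2^{c}k}}{1-\omega^{k}}=1 ,
\end{equation}
the final equality holding because $2^{c}k\equiv k\pmod p$. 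Hence the factor attached to $C$ is $\lambda^{c}-1$, contributing a complete set of $c$-th roots of unity. Assembling everything gives $\det(\lambda\UI_p-\A_p)=(\lambda-2)\prod_{C\neq\{0\}}(\lambda^{|C|}-1)$, so $2$ is a simple eigenvalue (a root of unity has modulus $1$ and cannot equal $2$), while all remaining eigenvalues form complete sets of roots of unity, as claimed.

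The main obstacle is the telescoping computation in the displayed equation: this is the one place where the specific value $1$ of each nontrivial cycle product is used, and it is exactly what converts the factor $\lambda^{c}-\prod(1+\omega^{2^j k})$ into $\lambda^{c}-1$ and hence into a complete set of roots of unity. The surrounding steps are the bookkeeping required to expose the generalized permutation structure of $\A_p$ so that \eqref{charpolyfactor} becomes applicable.
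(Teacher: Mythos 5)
Your proof is correct, and the overall strategy is the same as the paper's: conjugate $\A_p$ by the DFT matrix, recognize the result as a generalized permutation matrix whose underlying permutation is doubling modulo $p$, and read off the characteristic polynomial cycle by cycle from \eqref{charpolyfactor}, with the fixed point $j=0$ supplying the factor $(\lambda-2)$ and every other cycle supplying $(\lambda^c-1)$. Where you differ is in execution, at two points, and both are tidier than the paper's version. First, by isolating the permutation factor $\mathbf{P}=\sum_{h}\S^{p-h}\H\S^{2h}$ in \eqref{eq:Ap decomposition} and computing the action on character vectors, $\A_p f_k=(1+\omega^k)f_{2k}$, you obtain the generalized permutation structure directly; the paper instead computes the entries of $\F^{-1}\A_p\F$ via the rank-one matrix $\F^{-1}\H\F$ (all entries $1/p$) and a character sum, arriving at the same matrix \eqref{genpermmatrix}. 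You also note explicitly that $\omega^k\neq-1$ for odd $p$, so no entry degenerates to zero --- a point the paper leaves implicit. Second, and more substantively, you evaluate the nontrivial cycle products by the telescoping identity $1+\omega^m=(1-\omega^{2m})/(1-\omega^m)$, which is legitimate because no exponent $2^jk$ in a cycle avoiding $0$ vanishes modulo $p$, and the product collapses to $1$ immediately from $2^ck\equiv k \pmod{p}$. The paper instead proves by induction the expansion $\prod_{m=0}^{c-1}(1+\xi^{2^m j})=\sum_{h=0}^{2^c-1}\xi^{jh}$ and then argues that this geometric sum equals $1$ because $(2^c-1)j\equiv 0\pmod{p}$ and consecutive blocks of $p$ powers of $\xi^j$ sum to zero. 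The two computations establish the same key fact; yours is shorter, while the paper's expansion has the side benefit of exhibiting the cycle product as a sum over binary expansions, in the spirit of the binary-word combinatorics used elsewhere in the paper.
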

\begin{proof}
	Let $\F$ be a $p\times p$ Discrete Fourier Transform matrix. Then, 
	\begin{equation}
	\F^{-1} \S \F = \D = \diag(1, \xi, \xi^2, \dots, \xi^{p-1}) \ ,
	\end{equation}
	where $\xi = \rme^{2\pi\rmi /p}$, the primitive $p$-th root of unity.
	Define,
	\begin{eqnarray}
	\tilde{\A}_p	&=   \F^{-1} \A_p \F \\
	&=\F^{-1} \left( \sum_{h=0}^{p-1} 
	\S^{p-h} \H \S^{2h} + \S^{p-h} \H \S^{2h+1} \right) \F \\
	&= \frac{1}{p} 
	\left( \sum_{h=0}^{p-1}  
	\D^{p-h} \1_p \D^{2h}
	+ \D^{p-h} \1_p \D^{2h+1}\right)
	\end{eqnarray}
	where we used $\F^{-1} \H \F = \frac{1}{p} \cdot \1_p$ and $\1_p$ is a $p\times p$ matrix where all the entries are $1$.
	%
	%
	The $i,j$-th entry of $\tilde{\A}_p$ for $0 \leq i,j \leq p-1$ is
	\begin{eqnarray}
	[\tilde{\A}_p]_{i,j} 
	&= \frac{1}{p} (1+\xi^j)
	\sum\limits_{h=0}^{p-1} \xi^{h(2j-i)} \\
	&= \cases{
	1+\xi^j & if  $2j-i \pmod{p} \equiv 0$ \\
	0 & otherwise } \ , 
	\label{genpermmatrix}
	\end{eqnarray}
	since $\xi$ is the primitive $p$-th root of unity.
	
	To see that $\tilde{\A}_p$ is a generalized permutation matrix, consider two columns of $\tilde{\A}_p$ numbered $j_1, j_2 \in \{0,1,\dots,p-1\}$, with $j_1 \neq j_2$.
	Then $2j_1 \not\equiv 2j_2 \pmod{p}$.  
	Consider the two rows $i_1, i_2 \in \{0,1,\dots,p-1\}$.  If $2j_1 - i_1 \pmod{p} \equiv 0$ and $2j_2 - i_2 \pmod{p} \equiv 0$ then $i_1 \neq i_2$.
	So each column has a single nonzero entry in a row that is different from the row containing the nonzero entry of any other column.
	Similarly each row has a single nonzero element in a column that differs from the column containing the nonzero entry of any other row. So $\tilde{\A}_p$ is a generalized permutation matrix.
	
	To determine the characteristic polynomial of $\tilde{\A}_p$, we must know the cycle structure of the permutation $\pi\in S_p$ associated to $\tilde{\A}_p$.
	To determine which entries of $\tilde{\A}_p$ are contained in a $c$-cycle of $\pi$ for some $1 \leq c \leq p$, consider powers of $\tilde{\A}_p$.  
	In $\tilde{\A}_p$, the only non-zero entry of column $j$ occurs in row $2j \pmod{p}$. 
	If we multiply $\tilde{\A}_p$ by itself, the non-zero entry in column $j$ of $(\tilde{\A}_p)^2$ will be determined by the position of the non-zero entry in column $2j \pmod{p}$; the only non-zero entry in this column is in row $2^2 j \pmod{p}$.
	It will take $c$ iterations of $\tilde{\A}_p$ for the non-zero entry in column $j$ to return to row $2j \pmod{p}$.  
	Consequently, the condition that a $c$-cycle belongs to the disjoint cycle decomposition of the permutation $\pi$ associated to $\tilde{\A}_p$ is
	\begin{equation}
	\label{cyclecondition}
	2^{c} j \equiv j \pmod{p} \ ,
	\end{equation}
	for some column number $j$ and some positive integer $c$.
	
	If we multiply the nonzero entries as defined in (\ref{genpermmatrix}) in the columns numbered 
	\begin{equation}
	j, \ 2j \pmod{p}, \ 2^2j \pmod{p}, \dots, \ 2^{c-1}j \pmod{p}
	\end{equation}
	we obtain a divisor of the characteristic polynomial of $\tilde{\A}_p$ (\ref{charpolyfactor}), 
	\begin{equation}
	\label{qcharpolyfactor}
	\left[ \lambda^c - (1+\xi^j)(1+\xi^{2j}) \cdots (1+\xi^{2^{c-1}j}) \right] \ .
	\end{equation}
	Every generalized permutation will contain a $1$-cycle, as $2j \equiv j \pmod{p}$ for $j=0$, and the nonzero entry in the $j=0$ column is $1 + \xi^0 = 2$.
	This proves the first part of the lemma, as the characteristic polynomial of $\tilde{\A}_p$ always has a divisor $(\lambda-2)$.
	
	It remains to show that any other cycle in the permutation associated to $\tilde{\A}_p$ produces a divisor $(\lambda^c - 1)$ of the characteristic polynomial, producing eigenvalues that are complete sets of $c$-th roots of unity. 
	We first show that, 
	\begin{equation}
	\label{qproductcycle}
	(1+\xi^j)(1+\xi^{2j}) \cdots (1+\xi^{2^{c-1}j}) 
	= \sum_{h=0}^{2^c - 1} \xi^{jh} \ .  
	\end{equation}
	Note that if $c=1$, then this statement clearly holds, as the left-hand side contains only the first factor. 
	Assume (\ref{qproductcycle}) holds for $c-1$ with $c\geq 2$.
	Multiplying by $(1+\xi^{2^{c-1}j})$ the right-hand side then reads,
	\begin{equation}
	\sum_{h=0}^{2^{c-1}-1} 
	\xi^{jh} + \xi^{j(2^{c-1}+h)}
	= \sum_{h=0}^{2^c - 1} \xi^{jh} \ ,
	\end{equation}
	which establishes (\ref{qproductcycle}).
	
	It follows from (\ref{cyclecondition}) that $(2^c - 1)j \equiv 0 \pmod{p}$ for all $j$ in a given cycle of length $c$.
	As $\xi$ is the primitive $p$-th root of unity, and since $p$ consecutive powers of a $p$-th root of unity sum to zero, for $j \neq 0$,
	\begin{equation}
	\sum_{h=1}^{2^c-1} (\xi^j)^h = 0 \ . 
	\end{equation}
	Thus from (\ref{qcharpolyfactor}), the characteristic polynomial has divisors $(\lambda^c - 1)$ for each cycle of length $c \geq 2$ of the permutation $\pi$ associated to the generalized permutation matrix $\tilde{\A}_p$.
	For fixed points of $\pi$, the characteristic polynomial has a divisor $(\lambda-1)$, except in the case of the cycle of length one which corresponds to the first column of $\tilde{\A}_p$ and yields a divisor $(\lambda-2)$.
	\end{proof}

Combining lemmas \ref{thmnumnonzeroevals} and \ref{thmevalsofAp}, the matrix $\A_p$ has no eigenvalues of zero and therefore, zero is an eigenvalue of $\A_V$ with multiplicity $V-p$.
Note, it is clear that $\A_V$ defined in (\ref{eq:defn A_V}) must have an eigenvalue of $2$ with a constant vector as the corresponding eigenvector.
The lemmas agree with the previous results \cite{T01} for $p=1$, as $\A_1 = \left[ 2 \right] $, implying that all other eigenvalues of $\A_V$ are zero when $p=1$.
Thus, ordering the eigenvalues $|\lambda_0| \geq |\lambda_1| \geq |\lambda_2| \geq \cdots |\lambda_{V-1}|$ of $\A_V$, the spectral gap of $\A_V$ is
\begin{equation}
|\lambda_0 |- |\lambda_1| = \cases{ 
2, & if $p=1$ \\
1& otherwise\\} \ ; 
\end{equation}
as when $j\neq 0$ we have $|\lambda_j| =1 $. 

\begin{corollary}
	For a binary graph with $V=p\cdot 2^r$ vertices, let $\beta_{p}(c)$ count the number of cycles of length $c>1$ associated to $\tilde{\A}_p$.  
	Let $\beta_{p}(1) + 1$ be the number of $1$-cycles associated to $\tilde{\A}_p$ so that $\beta_{p}(1)$ counts the number of $1$-cycles, excluding the $1$-cycle that yields the eigenvalue $2$.
	Then
	\begin{equation}
	\label{traceAnevals}
	\Tr((\A_V)^n) = 2^n + \sum_{d|n} d \cdot \beta_{p}(d) \ .
	\end{equation}
	\label{thmtraceAn}
\end{corollary}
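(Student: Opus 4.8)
The plan is to read off the entire spectrum of $\A_V$ from the two preceding lemmas and then evaluate $\Tr((\A_V)^n)$ as a sum of $n$-th powers of eigenvalues. First I would combine lemma \ref{thmnumnonzeroevals} with lemma \ref{thmevalsofAp}: since $\A_p$ has no zero eigenvalue, the nonzero eigenvalues of $\A_V$ are exactly those of $\A_p$ (equivalently of the similar matrix $\tilde{\A}_p$), each with the same multiplicity, while zero occurs with multiplicity exactly $V-p$. Because $\Tr((\A_V)^n)=\sum_\lambda \lambda^n$, summed over eigenvalues with multiplicity, the $V-p$ zero eigenvalues contribute nothing, so $\Tr((\A_V)^n)$ reduces to the sum of $n$-th powers of the eigenvalues of $\tilde{\A}_p$.

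Next I would organise that sum according to the disjoint cycle decomposition of the permutation $\pi$ associated with the generalized permutation matrix $\tilde{\A}_p$, exactly as in the proof of lemma \ref{thmevalsofAp}. The distinguished $1$-cycle at column $j=0$ contributes the single simple eigenvalue $2$, hence a term $2^n$. Every other cycle of length $c$ contributes, via the factor $\lambda^c-1$ of the characteristic polynomial, a complete set of $c$-th roots of unity as eigenvalues; in particular the fixed points $j\neq 0$ are the case $c=1$, each contributing the single eigenvalue $1$. The one computational ingredient I would then invoke is the standard power-sum identity for roots of unity: if $\zeta$ ranges over the $c$-th roots of unity, then $\sum_{\zeta^c=1}\zeta^n$ equals $c$ when $c\mid n$ and $0$ otherwise. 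Applying this cycle by cycle, a cycle of length $c$ contributes $c$ precisely when $c\mid n$, and nothing otherwise.

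Finally I would group the non-distinguished cycles by their common length. Writing $\beta_p(d)$ for the number of cycles of length $d$ (with $\beta_p(1)$ excluding the special $j=0$ fixed point that produced the eigenvalue $2$), the total contribution of all such cycles is $\sum_d [\,d\mid n\,]\, d\, \beta_p(d)=\sum_{d\mid n} d\,\beta_p(d)$, which together with the $2^n$ from the distinguished cycle yields \eqref{traceAnevals}.

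I do not anticipate a genuine obstacle, as the result is essentially a bookkeeping consequence of lemma \ref{thmevalsofAp}. The only points requiring care are keeping the eigenvalue $2$ separate from the genuine roots of unity, and correctly treating the fixed points $j\neq 0$ as length-one cycles whose single eigenvalue $1$ is consistent with $d=1$ always dividing $n$; verifying the roots-of-unity power-sum identity is routine.
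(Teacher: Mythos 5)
Your proposal is correct and follows essentially the same route as the paper: write $\Tr((\A_V)^n)$ as the sum of $n$-th powers of eigenvalues, note the zero eigenvalues contribute nothing, and apply the power-sum identity for complete sets of $d$-th roots of unity (contributing $d$ when $d\mid n$ and $0$ otherwise) to each cycle of $\tilde{\A}_p$, with the distinguished fixed point supplying the $2^n$ term. The only difference is that you spell out the reduction from $\A_V$ to $\tilde{\A}_p$ via lemma \ref{thmnumnonzeroevals} more explicitly than the paper does, which is a harmless elaboration rather than a different argument.
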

\begin{proof}
	The trace of the $n$-th power of $\A_V$ is the sum of the $n$-th powers of the eigenvalues of $\A_V$.  
	The sum of the $n$-th powers of a complete set of $d$-th roots of unity is $d$ if $d|n$ and zero otherwise.  
	Thus, if $\beta_{p}(d) \neq 0$ and $d|n$, there are $\beta_{p}(d)$ complete sets of $d$-th roots of unity, and we get a contribution of $d \cdot \beta_{p}(d)$ to the sum.
\end{proof}

Notice that, the trace of $(\A_V)^n$ counts the number of closed paths of length $n$ on the graph, both primitive and non-primitive. 
In particular, 
\begin{equation}
\label{traceAnpaths}
\Tr((\A_V)^n) = \sum_{d|n} d\cdot \PO_p(d)
= \underbrace{\sum_{\substack{d|n \cr d \neq n}} d\cdot \PO_{p}(d)}_{\substack{\textrm{number of non-primitive} \cr \textrm{closed paths}}}
+ \underbrace{n \cdot \PO_{p}(n)}_{\substack{\textrm{number of primitive} \cr \textrm{closed paths}}}  \ , 
\end{equation}
where the factors of $d$ and $n$ account for each of the distinct vertices that a closed path could begin on within each periodic orbit equivalence class.

To extend the result in \cite{BHS19} for binary de Bruijn graphs, we 
use the following classical lemma \cite{L83}.

\begin{lemma}
	\label{thmclassicLW}
	\begin{equation}
	\sum_{d|n} d \cdot \text{L}_q(d) = q^n 
	\end{equation}
\end{lemma}
The following proposition relates the number of primitive periodic orbits of length $n$ to the number of Lyndon words of length $n$ and the number of cycles of length $n$ associated to $\tilde{\A}_p$.
\begin{proposition}
	\label{thmPO-LW}
	For a binary graph with $V = p\cdot 2^r$ vertices, the number of primitive periodic orbits of length $n$ is $\PO_{p}(n) = \Lyndon + \beta_{p}(n)$.
	Hence, for $n\geq p$, we have $\PO_{p}(n) = \Lyndon $. 
\end{proposition}
\begin{proof}
	First, note that for $\beta_{p}(1) \geq 0$, 
	\begin{equation}
	\PO_{p}(1) 
	= \Tr(\A_V) 
	= 2 + \beta_{p}(1) 
	= \text{L}_2(1) + \beta_{p}(1). 
	\label{orbitslengthone}
	\end{equation}
	
	Let $n$ be a prime.  Then by corollary \ref{thmtraceAn}, lemma \ref{thmclassicLW}, and using (\ref{traceAnpaths}) and (\ref{orbitslengthone}),
	\begin{eqnarray}
	\PO_{p}(1) + n\PO_{p}(n) 
	&= 2^n + \beta_{p}(1) + n \beta_{p}(n) \\
	&= \text{L}_2(1) + n\Lyndon + \beta_{p}(1) + n \beta_{p}(n) \ ,
	\end{eqnarray}
	so $\PO_{p}(n) = \Lyndon + \beta_p(n)$.
	
	Now let $n$ be a product of $m\geq 2$ primes and assume that the result holds for all divisors $d$ of $n$ that are a product of less than $m$ primes.  
	Then,
	\begin{equation}
	\sum_{d|n} d \cdot \PO_{p}(d) 
	= \Tr((\A_V)^n) 
	= \sum_{d|n} d \cdot \left[ \text{L}_2(d) + \beta_{p}(d) \right] \ ,
	\end{equation}
	and the result follows by induction on $m$.  
	There are no cycles of length greater than $p$ associated to a $p\times p$ generalized permutation matrix and $\tilde{\A}_p$ always has one cycle of length one, so cycles associated to $\tilde{\A}_p$ are at most of length $p-1$.  Hence $\PO_{p}(n) = \Lyndon$ for $n \geq p$.
\end{proof}

When $p=1$ we have $\A_1 = [2]$ and $\Tr((\A_{2^r})^n) = 2^n$, so 
\begin{equation}
\sum_{d|n} d \cdot \PO_{1}(d) = \sum_{d|n} d \cdot \text{L}_2(d) \ ,
\end{equation}
and by induction on the number of divisors of $n$, $\PO_1(n) = \Lyndon$ for all $n$, which is the result used in \cite{BHS19}.

\subsection{Counting Primitive Pseudo Orbits}
\label{sec:counting prim PPOs}

Now that we can count primitive periodic orbits, we can also count primitive pseudo orbits.  
When $p=1$, we have a de Bruijn graph, and counting the numbers of primitive pseudo orbits of a particular length is equivalent to counting the number of strictly decreasing Lyndon decompositions of words of the same length, which was accomplished with a generating function argument in \cite{BHS19}.
As we will use a similar argument when $p > 1$, we first review the $p=1$ result.

\subsubsection{Primitive Pseudo Orbits on Binary de Bruijn Graphs}
\label{sec:PPOs de Bruijn}
For de Bruijn graphs, the Lyndon decomposition (see theorem \ref{thmCFL}) of a binary word of length $n$ corresponds to a pseudo orbit of length $n$ on a binary graph.  Each Lyndon word in the decomposition corresponds uniquely to a primitive periodic orbit.  
As a primitive pseudo orbit does not contain any repeated periodic orbits, primitive pseudo orbits correspond uniquely to words with strictly decreasing Lyndon decompositions, as then no Lyndon words are repeated in the decomposition.

For example, the Lyndon decomposition of the binary words of length four are shown below, with parentheses indicating the decomposition.
Words with strictly decreasing Lyndon decomposition are in bold font.

\begin{center}
	\begin{tabular}{ c  c  c  c }
		(0)(0)(0)(0) & \textbf{(0001)} & \textbf{(001)(0)} & \textbf{(0011)} \\
		(01)(0)(0) & (01)(01) & \textbf{(011)(0)} & \textbf{(0111) }\\
		(1)(0)(0)(0) & \textbf{(1)(001)} & \textbf{(1)(01)(0)} & \textbf{(1)(011)} \\
		(1)(1)(0)(0) & (1)(1)(01) & (1)(1)(1)(0) & (1)(1)(1)(1) 
	\end{tabular}
\end{center}

In \cite{BHS19}, the number of strictly decreasing Lyndon decompositions is evaluated for words over $q$-nary alphabets; here we focus on the binary case.
The number of strictly decreasing decompositions of binary words of length $n$ where $n\geq 2$ is
\begin{equation}
\textrm{Str}_2(n) = 2^{n-1} \ .
\label{numstrdec}
\end{equation}
In the previous example, for binary words of length four there are $\textrm{Str}_2(4) = 8$ words with strictly decreasing Lyndon decompositions.

Define the generating function,
\begin{equation}
P(x) = \sum_{n=0}^{\infty} \textrm{Str}_2(n) \cdot x^n \ ,
\end{equation}
where $\textrm{Str}_2(0)= 1$ and $\textrm{Str}_2(1) = 2$.
This generating function is equivalently, 
\begin{equation}
P(x) = \prod_{l=1}^{\infty} (1+x^l)^{\text{L}_2(l)} \ ,
\end{equation}
as the set of all words with strictly decreasing Lyndon decompositions is in bijection with the set of all subsets of Lyndon words (without repetition).
It is clear that any collection of unique Lyndon words can be ordered lexicographically and will correspond to a word with strictly decreasing decomposition.  
The Chen-Fox-Lyndon theorem shows that any word with a strictly decreasing decomposition will correspond uniquely to a subset of Lyndon words without repetition.

To obtain (\ref{numstrdec}), it is sufficient to show $P=F$ on some interval, where 
\begin{equation}\label{eq:defn F}
F(x) = \frac{2x^2-1}{2x-1} = 1 + 2x + \sum_{n=2}^{\infty} 2^{n-1} x^n \ .
\end{equation}
This is accomplished in \cite{BHS19} by noting that $P(0) = F(0) = 1$ and showing,
\begin{equation}
\frac{\rmd}{\rmd x} \log P = \frac{\rmd}{\rmd x} \log F
\end{equation}
on the interval $(-1,1)$ and consequently $\PPO_1(n)=2^{n-1}$ for $n\geq 2$.

\subsubsection{Primitive Pseudo Orbits on General Binary Graphs}
\label{sec:PPOs general}

When $V = p\cdot 2^r$ with $p>1$ and $n$ is sufficiently large, we show that the number of primitive pseudo orbits on the graph, $\PPO_{p}(n)$, is a constant multiple of the $p=1$ result for de Bruijn graphs.  The constant depends on the cycle structure of the permutation associated to $\tilde{\A}_p$.

We define the generating function for the number of primitive pseudo orbits of length $n$,
\begin{equation}
P(x) = \sum_{n=0}^\infty \PPO_{p}(n) \cdot x^n \ ,
\end{equation}
with $\PPO_{p}(0) = 1$ and $\PPO_{p}(1) = 2$.
Which is equivalently,
\begin{equation}
P(x) = \prod_{l=1}^\infty (1+x^l)^{\PO_{p}(l)} \ .
\end{equation}

According to proposition \ref{thmPO-LW}, for a given $p$ there are $\beta_{p}(n)$ additional primitive periodic orbits of length $n$ beyond the number of Lyndon words.  The number $\beta_{p}(n)$ comes from the number of $n$-cycles in the cycle decomposition of the permutation associated to $\tilde{\A}_p$.
Therefore,
\begin{equation}
P(x) = (1+x^{c_1}) \cdots (1+x^{c_{\alpha}}) 
\prod_{l=1}^\infty (1+x^l)^{\text{L}_2(l)} \ , 
\end{equation}
where each $c_j$ is the length of a cycle of the generalized permutation $\tilde{\A}_p$ (excluding the cycle of length one corresponding to $\lambda_0 = 2$), $\alpha+1$ is the number of cycles, and
\begin{equation} 
\label{sumofcycles}
\sum_{j=1}^{\alpha} c_j = p-1 \ .
\end{equation}
Note that, these $c_j$ need not differ from one another.  We now prove proposition \ref{thmNumPPOs} in the introduction, although we can now state it more precisely including the formula for $C_p$.
\begin{proposition}
	\label{thmNumPPOs final}
	For a binary graph with $V=p\cdot 2^r$ vertices, the number of primitive pseudo orbits of length $n > p$ is 
	\begin{equation}
	\PPO_{p}(n) = C_{p} \cdot 2^{n-1} \ ,
	\label{eq:PPOpq 2}
	\end{equation}
	where
	\begin{equation}
	C_{p} = \cases{
	1 & when  $p=1$, \\
	\prod\limits_{j=1}^{\alpha} (1+2^{-c_j}) & when  $p > 1$,\\} 
	\label{eq:Cpq}
	\end{equation}
	and the $c_j$ for $j=1,\dots ,\alpha$ are the lengths of the cycles of $\tilde{\A}_p$ excluding the cycle of length one corresponding to $\lambda_0 = 2$. 
	Furthermore, $C_{p}$ is bounded above by a constant that grows at most linearly in $p$,
	\begin{equation}
	\label{eq:Cpqbound}
	1 \leq C_{p} \leq \frac{3}{2} (p-1) \ . 
	\end{equation}
\end{proposition}

\begin{proof}
	Let $F$ be the function defined in (\ref{eq:defn F}).  We note that 
	\begin{equation}
	\fl (1+x^c) F(x) 
	= 1 + 2x + \cdots + 
	[2^{c-1} + 1] x^c + 
	[2^c + 2] x^{c+1} 
	+ (1+2^{-c}) \sum_{n=c+2} 2^{n-1} x^n \ .
	\end{equation}
	If there is a single cycle associated to $\tilde{\A}_p$ its length is $p-1$.  
	So for $n > p$, the constant $C_{p} = (1 + 2^{-(p-1)})$ and the number of primitive pseudo orbits of length $n$ is $\PPO_{p}(n) = (1 + 2^{-(p-1)}) 2^{n-1}$.
	
	Now suppose the cycle lengths associated to $\tilde{\A}_p$ are $c_1, c_2, \dots, c_\alpha$. Using (\ref{sumofcycles}),
	\begin{equation}
	2 + \sum_{j=1}^{\alpha-1} c_j = p - c_{\alpha} + 1 \ .
	\end{equation}
	We assume the inductive hypothesis,
	\begin{equation} 
	\label{IHconstant}
	\prod_{j=1}^{\alpha-1} (1+x^{c_j}) F(x)	
	= \sum_{n=0}^{p-c_{\alpha}} r_n x^n 
	+ \prod_{j=1}^{\alpha-1} (1 + 2^{-c_j}) 
	\left[ \sum_{n=p-c_{\alpha} +1}^\infty 2^{n-1} x^{n} \right] \ ,
	\end{equation}
	where $r_n$ is the coefficient of $x_n$ for $n=0, 1, \dots, p-c_{\alpha}$.
	Multiplying both sides by $(1+x^{c_{\alpha}})$, we find that the powers of $x$ that are at least $p+1$ are given by
	\begin{equation}
	\prod_{j=1}^{\alpha-1} (1 + 2^{-c_j})
	\left[ \sum_{n=p+1}^\infty 
	2^{n-1} (1+2^{-c_{\alpha}}) x^{n}   \right] \ , 
	\end{equation}
	which is the result in (\ref{eq:Cpq}).
	
	The lower bound on $C_p$ is obtained when $p=1$.
	As the cycle lengths $c_j$ sum to $p-1$ (\ref{sumofcycles}), the greatest number of cycles that could be associated to $\tilde{\A}_p$ would be $p$ when $c_j=1$ for all $j$, yielding the upper bound for $C_{p}$.
\end{proof}

Having counted the number of primitive pseudo orbits on binary graphs in proposition \ref{thmNumPPOs final}, an immediate consequence is that we now know corollary \ref{cor:asymp} for the large graph asymptotic of the variance which was evaluated in terms of the number of primitive pseudo orbits on the graph in proposition \ref{variance limit}.
For the same reason, the diagonal contribution to the variance of the coefficients, defined in (\ref{diagpart}), is 
$\langle |a_n|^2 \rangle_{\diag}=C_p/2$ where $p<n<B-p$.  

 \section{Examples}
\label{sec:examples}

We illustrate the results by computing the variance of the coefficients for binary graphs numerically. 
To do this we generate the characteristic polynomial  (\ref{charpoly matrix}) of $\U (k)=\BS \rme^{\rmi k\ML}$ with the $B$ bond lengths uniformly distributed random numbers in the interval $[0.9, 1.1]$.  The variance of the coefficients of the characteristic polynomial are averaged over an interval of the $k$-spectrum consisting of approximately $50$ million mean spacings. 
The mean spacing of the square roots of the graph's eigenvalues $k$ is $\pi$ divided by the total length of the graph according to the Weyl law, see section \ref{sec:binaryQG}.  For the large values of $B$, MatLab was run on Baylor's Kodiak high performance computing cluster.
However, for $B=160$, the simulation appeared to have converged quickly, so we ended the simulation at $26.5$ million mean spacings.   For $B=192$, the simulation was ended manually at $20.9$ million mean spacings and for $B=320$, the simulation met the maximum run time on Kodiak after $23.9$ million mean spacings.

First we consider proposition \ref{thm:main} where the variance is given precisely, in terms of the number of pseudo orbits of a certain length with no self-intersections or where all the self-intersections are $2$-encounters of length zero.  

\subsection{The Binary de Bruijn Graph with $8$ Vertices}
\label{sec:example V=8}

	The binary de Bruijn graph with $8$ vertices and $16$ bonds is shown in figure \ref{figBGp1-2}.
As $p=1$, the number of primitive periodic orbits of topological length $n$ is the number of binary Lyndon words, $\PO_1(n) = \text{L}_2(n)$.
Moreover, the number of primitive pseudo orbits of topological length $n$ is $\PPO_1(n) = 2^{n-1}$.
These primitive pseudo orbits can be sorted to find those without self-intersections, and those where all the self-intersections are $2$-encounters of length zero.  
Table \ref{figExTableVariance} shows the sizes of these sets for all applicable numbers of self-intersections for $0 \leq n \leq 8$.
For a description of how these set sizes are determined using Lyndon words, see Appendix A.
Proposition \ref{thm:main} is applied to obtain the variance.  
This can be compared to the numerically computed variance $\langle |a_n|^2 \rangle_k$.  
The error between proposition \ref{thm:main} and the numerically generated variance is shown in the final column.  In each case the numerical result shows agreement to at least four decimal places. 

	\begin{figure}[htb]
	\centering
	\includegraphics[scale=1, trim=145 550 120 110, clip]{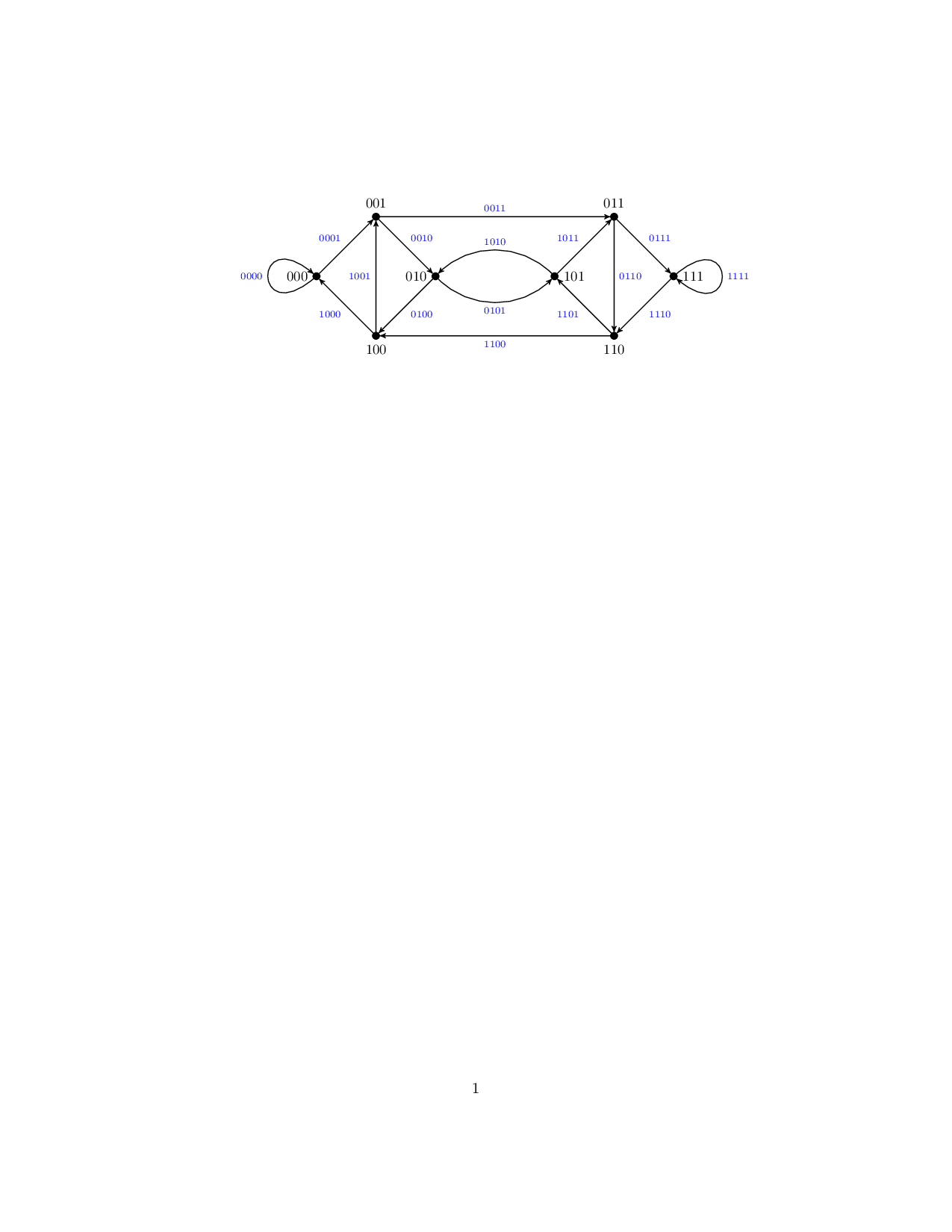}
	\caption{The binary graph with $V=2^3$ vertices and $B=2^4$ bonds}
	\label{figBGp1-2}
\end{figure}

	\begin{table}[htb]
	\caption{\label{figExTableVariance}For a binary graph with $8$ vertices the table shows the sizes of the sets of primitive pseudo orbits with no self-intersections, with one $2$-encounter of length zero and with two $2$-encounters of length zero along with the resulting variance for the first half of the coefficients of the characteristic polynomial. The last two columns provide numerical values of the variance for a set of uniformly distributed random bond lengths averaged over an interval of $50$ million mean spacings and the error between proposition \ref{thm:main} and the numerics.}
	\begin{indented}
		\item[]
	\begin{tabular}{ c c c c c c c }
		$n$ & $|\mathcal{P}_0^n|$ & $|\widehat{\mathcal{P}}_{1}^n|$ 
		& $|\widehat{\mathcal{P}}_{2}^n|$ & $\langle |a_n|^2 \rangle_k$ 
		& Numerics & Error \\
		\hline
		0 & 1 & 0 & 0 & 1 & 1.000000 & 0.000000 \\
		1 & 2 & 0 & 0 & 1 & 0.999991 & 0.000009 \\
		2 & 2 & 0 & 0 & 1/2 & 0.499999 & 0.000001 \\
		3 & 4 & 0 & 0 & 1/2 & 0.499999 & 0.000001 \\
		4 & 8 & 0 & 0 & 1/2 & 0.499999 & 0.000001 \\
		5 & 8 & 8 & 0 & 3/4 & 0.749998 & 0.000002 \\
		6 & 8 & 20 & 0 & 3/4 & 0.749986 & 0.000014 \\
		7 & 16 & 16 & 8 & 5/8 & 0.624989 & 0.000011 \\
		8 & 16 & 16 & 24 & 9/16 & 0.562501 & -0.000001 \\
		\hline
	\end{tabular}
   \end{indented}
\end{table}

\subsection{The Binary Graph with $6$ Vertices}
\label{sec:example V=6}

The binary graph with $V=6$ vertices and $B=12$ bonds is shown in figure \ref{figBGp3}.
As $p=3$, the number of primitive periodic orbits of length $n$ is exactly the number of binary Lyndon words, $\PO_{3}(n) = \text{L}_2(n)$ for $n\neq 2$, and $\PO_{3}(2)=\text{L}_2(2)+1$.
Moreover, the number of primitive pseudo orbits of topological length $n$ is $\PPO_{3}(n) = 5\cdot 2^{n-3}$ for $n > 3$.
These primitive pseudo orbits can be sorted to find those without self-intersections, and where all the self-intersections are $2$-encounters of length zero.
The sizes of these sets for relevant numbers of self-intersections are shown in   
table \ref{figExTableVariance} for $0 \leq n \leq 6$.
For a description of how these set sizes are determined using Lyndon words, see Appendix B.
Proposition \ref{thm:main} is used to obtain the variance in the fourth column.  This is to be compared with the numerically generated  variance,  with the error shown in the last column.

\begin{figure}[ht]
	\centering
	\includegraphics[scale=1, trim=145 550 250 125, clip]{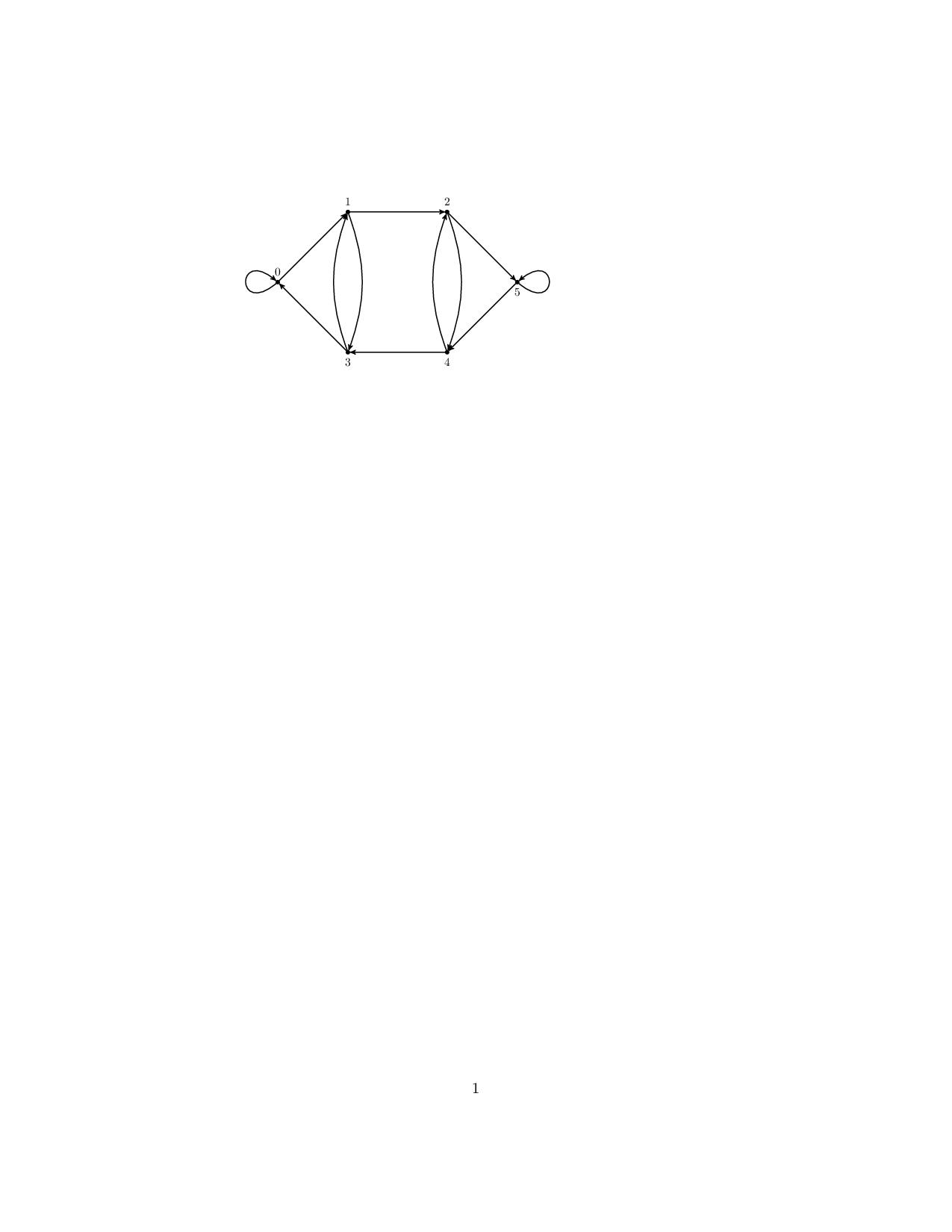}
	\caption{The binary graph with $V=3\cdot 2$ vertices and $B=3\cdot 2^2$ bonds.}
	\label{figBGp3}
\end{figure}

\begin{table}[htb]
	\caption[Comparison of the primitive pseudo orbit formula with numerics for a binary graph with $V=6$ vertices]{\label{figExTableVariance-2}
		For a binary graph with $6$ vertices the table shows the sizes of the sets of primitive pseudo orbits with no self-intersections and with one $2$-encounter of length zero along with the resulting variance for the first half of the coefficients of the characteristic polynomial. The last two columns provide numerical values of the variance for a set of uniformly distributed random bond lengths averaged over an interval of $50$ million mean spacings and the error between proposition \ref{thm:main} and the numerics.}
		\begin{indented}
		\item[]
	\begin{tabular}{ c c c c c c }
		$n$ & $|\mathcal{P}_0^n|$ & $|\widehat{\mathcal{P}}_{1}^n|$ 
		& $\langle |a_n|^2 \rangle_k$ 
		& Numerics & Error \\
		\hline
		0 & 1 & 0 & 1 & 1.000000 & 0.000000 \\
		1 & 2 & 0 & 1 & 1.000000 & 0.000000 \\
		2 & 3 & 0 & 3/4 & 0.750001 & -0.000001 \\
		3 & 6 & 0 & 3/4 & 0.750003 & -0.000003 \\
		4 & 10 & 4 & 7/8 & 0.874999 & 0.000001 \\
		5 & 8 & 4 & 1/2 & 0.499998 & 0.000002 \\
		6 & 8 & 8 & 3/8 & 0.374999 & 0.000001 \\
		\hline
	\end{tabular}
\end{indented}
\end{table}

\subsection{The Family of de Bruijn Graphs}
\label{sec:example family p=1}

We now consider the whole family of binary de Bruijn graphs with $p=1$ and  $V=2^r$; two graphs in this family are shown in figures \ref{figBGp1} and \ref{figBGp1-2}. 
The number of primitive pseudo orbits is $\PO_{1}(n) = \Lyndon$ for all $n$ and $C_{1} = 1$.
Therefore, by corollary \ref{cor:asymp}, in the semiclassical limit $r\to \infty$ we see $ \langle |a_n|^2 \rangle_{k} \approx 1/2$.
The numerically computed variance of the coefficients for $r = 3, 4, 5, 6$ is shown in figure \ref{fignumericsBGp1} and demonstrates the convergence of the variance to the expected asymptotic result.

\begin{figure}
	\centering
	\includegraphics[scale=1, trim=150 490 100 125, clip]{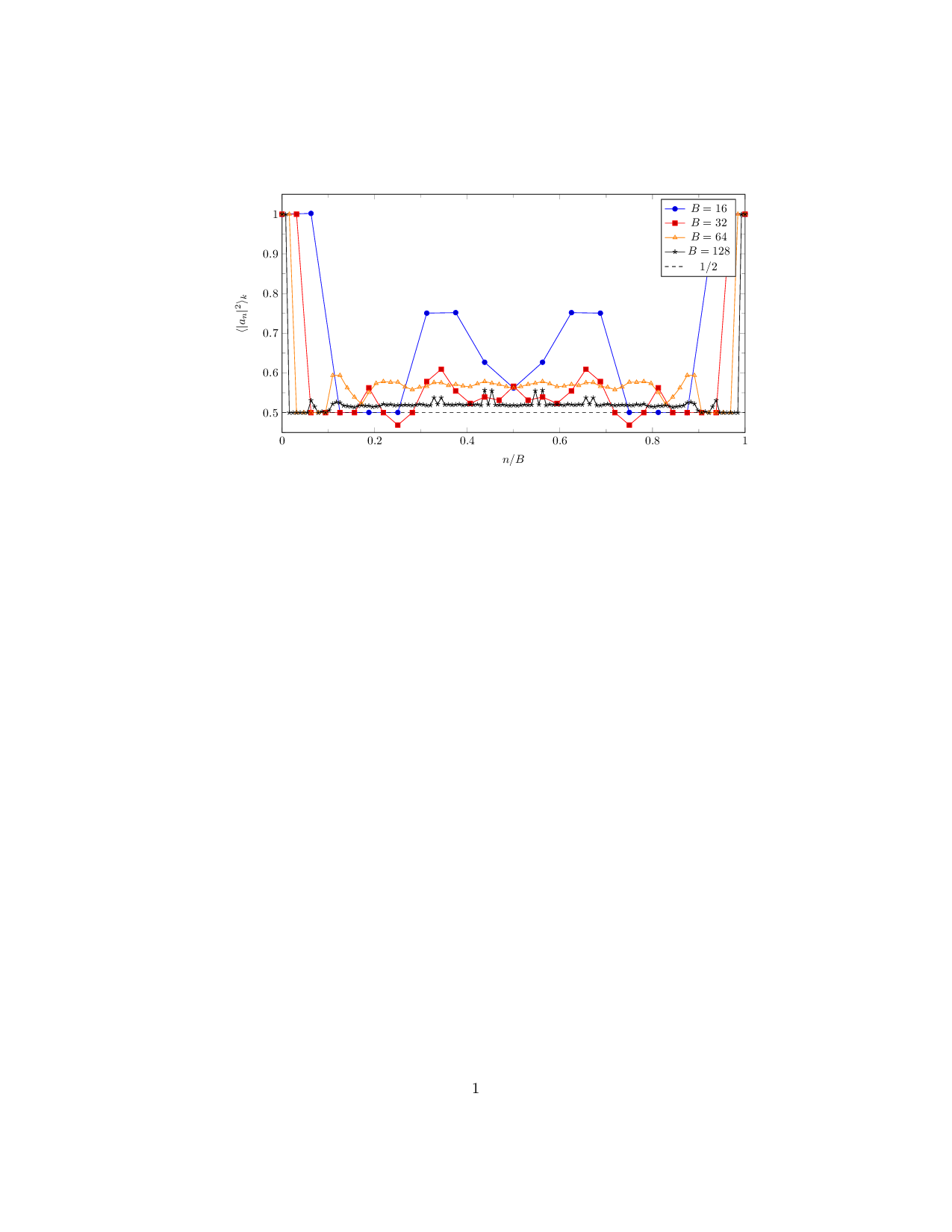}
	\caption{The numerically computed variance of the coefficients of the characteristic polynomial for the family of binary de Bruijn graphs, $p=1$. The plot shows the convergence of the variance to the asymptotic result $\langle |a_n|^2 \rangle_{k} \approx 1/2$.} 
	\label{fignumericsBGp1}
\end{figure}

\subsection{The Family of Binary Graphs with $p=3$}
\label{sec:example family p=3}

If we consider the family of binary graphs with $p=3$ and $V = 3\cdot 2^r$ vertices, the permutation $\pi$ associated to $\tilde{\A}_3$ has the expected fixed point associated to $\lambda = 2$ and a $2$-cycle associated to the primitive roots of unity $\pm 1$.  
Thus, $\PO_{3}(n) = \textrm{L}_2(n)$ when $n \neq 2$ and $\PO_{3}(2) = \textrm{L}_2(2) + 1$.
As a result, $C_{3} = 1 + 2^{-2} = 5/4$ and in the semiclassical limit 
$r\to \infty$ we have $ \langle |a_n|^2 \rangle_{k} \approx 5/8$ according to corollary \ref{cor:asymp}.
The numerically computed variance of the coefficients for $r = 1, 2, 3, 4, 5$ are shown in figure \ref{fignumericsBGp3}.  Again we see convergence to the asymptotic result consistent with corollary \ref{cor:asymp}.

\begin{figure}
	\centering
	\includegraphics[scale=1, trim=150 525 100 125, clip]{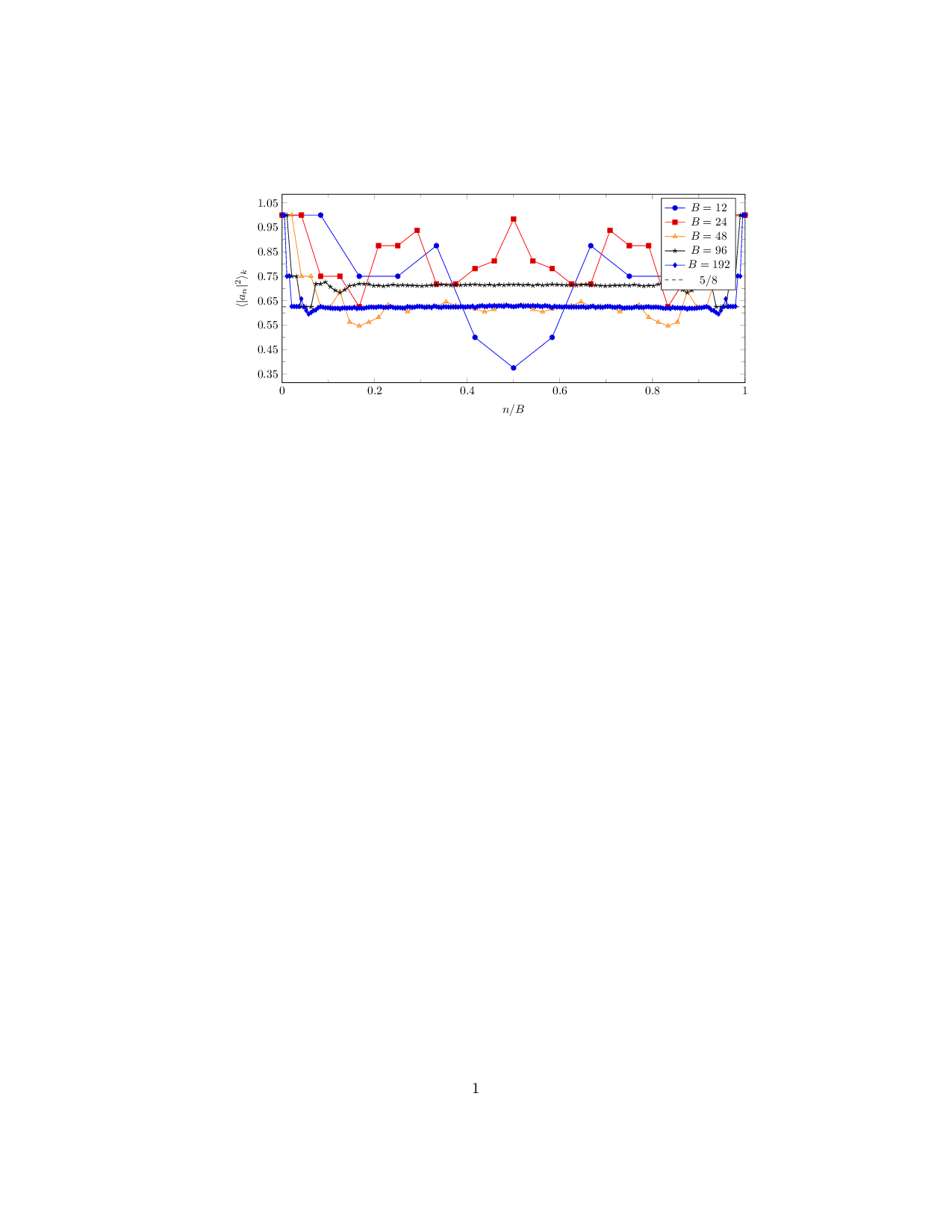}
	\caption{The numerically computed variance of the coefficients of the characteristic polynomial for the family of binary de Bruijn graphs, $p=3$. The plot shows the convergence of the variance to the asymptotic result $\langle |a_n|^2 \rangle_{k} \approx 5/8$.
	} 
	\label{fignumericsBGp3}
\end{figure}

\subsection{The Family of Binary Graphs with $p=5$}
\label{sec: example family p=5}

As a third example, we consider the family of binary graphs with $p=5$ and $V = 5\cdot 2^r$ vertices.
The binary graph from this family with 10 vertices is shown in figure \ref{figBGp5}.
The permutation $\pi$ associated to $\tilde{\A}_5$ has the fixed point associated with the eigenvalue $\lambda = 2$ and a $4$-cycle associated to the primitive fourth roots of unity.
Then $\PO_{5}(n) = \textrm{L}_2(n)$ when $n \neq 4$ and $\PO_{5}(4) = \textrm{L}_2(4) + 1$.
Thus, $C_{5} = 1 + 2^{-4} = 17/16$ and $\langle |a_n|^2 \rangle_{k} \approx 17/32$ for large  $r$.  
The numerically computed variance of the coefficients of the characteristic polynomial is shown in figure \ref{fignumericsBGp5} for $r = 2, 3, 4, 5$. Again we see convergence to the asymptotic result,  corollary \ref{cor:asymp}.

\begin{figure}[htb]
	\centering
	\includegraphics[scale=1, trim=210 500 50 155, clip]{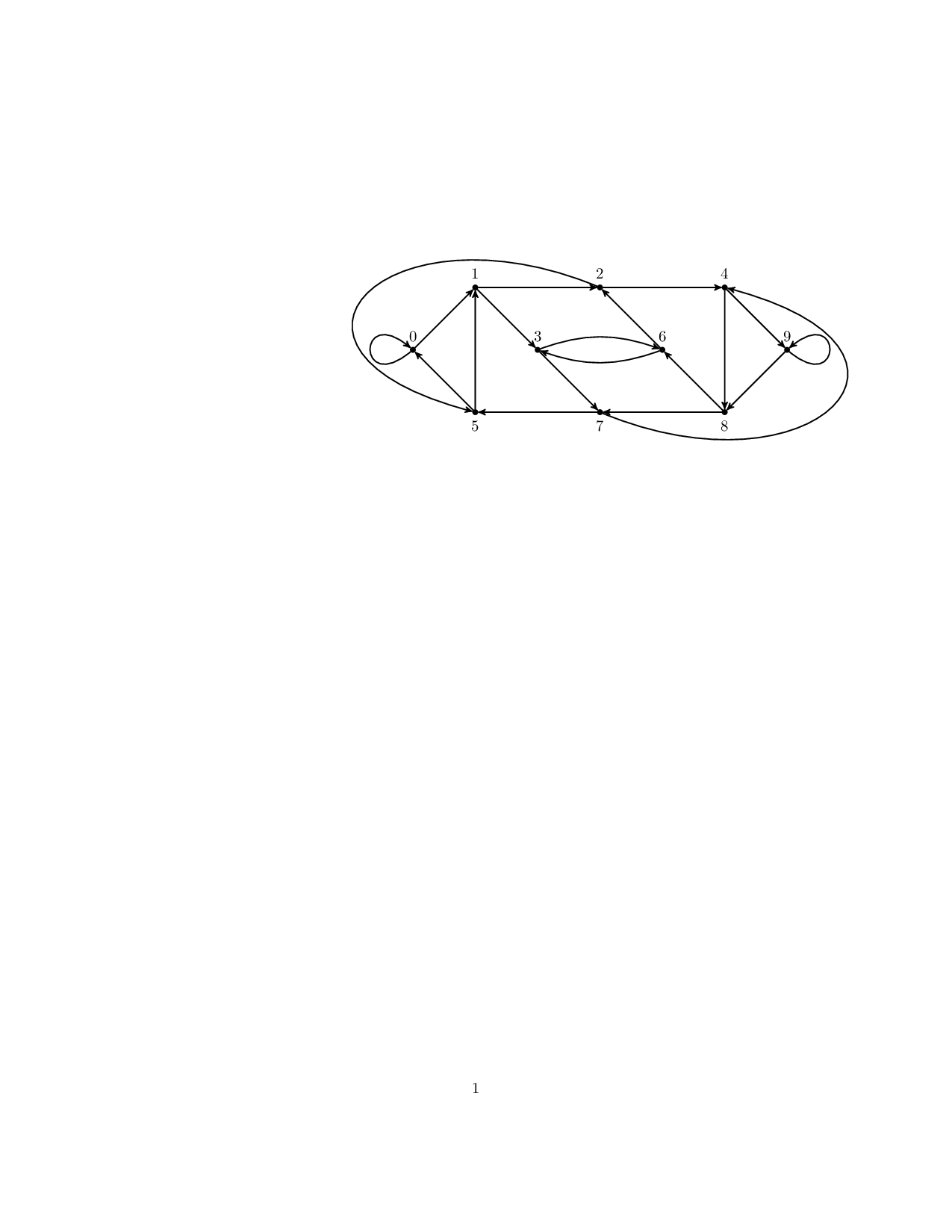}
	\caption{The binary graph with $V=5\cdot 2$ vertices and $B=5\cdot 2^2$ bonds} 
	\label{figBGp5}
\end{figure}

\begin{figure}
	\centering
	\includegraphics[scale=1, trim=150 525 100 125, clip]{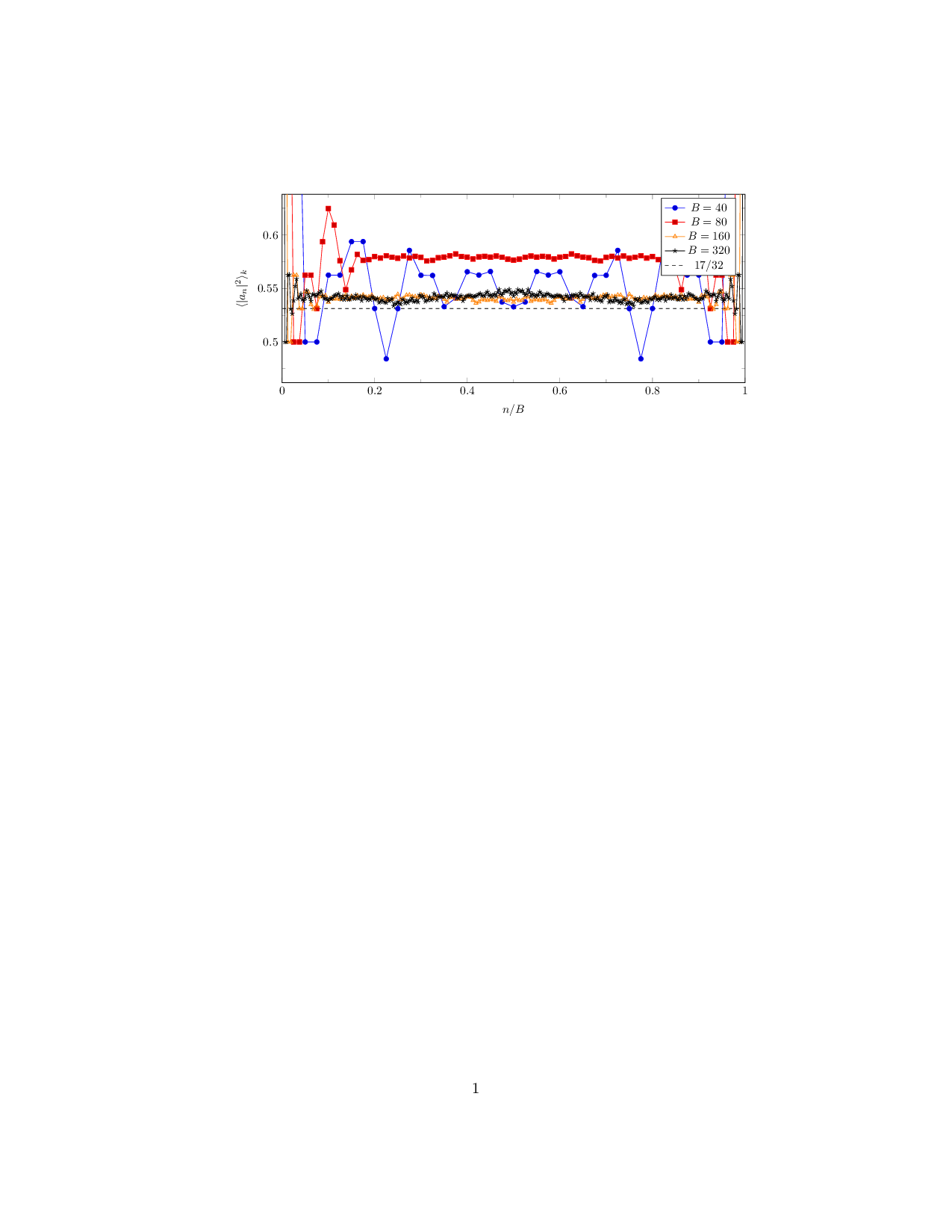}
	\caption{The numerically computed variance of the coefficients of the characteristic polynomial for the family of binary de Bruijn graphs with $p=5$. The plot shows the convergence of the variance to the asymptotic result $\langle |a_n|^2 \rangle_{k} \approx 17/32$.} 
	\label{fignumericsBGp5}
\end{figure}

\section{Conclusions}
\label{sec:conclusions}

We evaluated the first nontrivial moment, the variance, for coefficients of the characteristic polynomial of binary quantum graphs dynamically, proposition \ref{thm:main}.  The result depends on the number of primitive pseudo orbits with no self-intersections and the numbers of primitive pseudo orbits where all self-intersections are $2$-encounters of length zero.  This is the first example of a spectral statistic for a quantum system with chaotic classical dynamics where the periodic orbit formula can be evaluated without taking the semiclassical limit.  It is also important to note that the evaluation is exact; the contribution of all pseudo orbit pairs has been evaluated without using heuristic arguments to exclude any pseudo orbit pairs.

For the families of binary graphs with fixed $p$ and $V=p\cdot 2^r$ vertices and $B=p\cdot 2^{r+1}$ bonds, we evaluate the semiclassical limit of the variance, taking $r$ to infinity while fixing the ratio $n/B$, corollary \ref{cor:asymp}.  We find that the variance approaches a family-dependent constant independent of $n$.  The constant is determined from the total number of primitive pseudo orbits of length $n$, which in turn is evaluated from the cycle decomposition of a generalized $p\times p$ permutation matrix, proposition \ref{thmNumPPOs final}.  This explains how the exact variance formula agrees with approximate results obtained from diagonal arguments \cite{T02,BHS19}.
Such diagonal arguments are the starting point of most spectral analysis of quantum chaotic systems. However, we see that, for quantum binary graphs, many pseudo orbits which would contribute to a diagonal term do not appear in the formula for the variance. 
Instead the weighted sum of sets of primitive pseudo orbits where all self-intersections are of zero length turns out asymptotically to approach the total number of primitive pseudo orbits. 

In section \ref{sec:contributions} we introduced a cancellation mechanism to show that the contribution of most primitive pseudo orbits with self-intersections to the variance is zero. 
This is significant as the cancellation scheme does not follow the usual approach taken when using semiclassical arguments where the contributions of periodic orbits or pseudo orbits with a particular common structure are the classes that are grouped together to produce a contribution which can be evaluated. 
It would be interesting to see if such a cancellation mechanism extends to other quantum chaotic systems.

A first step to extend these results is to consider $q$-nary quantum graphs which are defined in a similar way to binary graphs with $V=p\cdot q^r$ vertices where $p$ and $q$ have no common factors.  The variance of $q$-nary graphs with $p=1$ was investigated in 
 \cite{BHS19} where a diagonal contribution was evaluated which agrees with numerical results in the semiclassical limit of large graphs.
  For binary graphs, the product of scattering amplitudes $A_{\bg} \, \overline{A}_{\bg'}$ is independent of how the scattering amplitudes are assigned at the vertices.  This is not the case for a $q$-nary graph.  However, it is possible to obtain some results by averaging over the assignment of vertex scattering amplitudes \cite{H20}.  It is worth noting that, while the argument used to count primitive pseudo orbits of length $n$ was explained for binary graphs, section \ref{sec:counting prim PPOs}, the same argument applies to $q$-nary graphs, see \cite{H20}.

One can also think of extending the scheme to other classes of quantum graphs.  While the binary graph structure provides a way to count orbits and pseudo orbits, proposition \ref{thm:main} holds for a regular directed graph with two incoming and two outgoing bonds at each vertex \cite{HH22}.  Obtaining the variance is then a matter of finding classes of $4$-regular graphs where it is possible to count primitive periodic orbits and primitive pseudo orbits.

\ack
The authors would like to thank Gregory Berkolaiko for helpful comments.
	JH would also like to thank Mark Pollicott and the University of Warwick for their
	hospitality during his sabbatical where some of the work was carried out, and where JH was
	supported by the Baylor University research leave program. 
	TH would like to thank Baylor University and the University of Dallas where some of the work was carried out.
	This work was partially
	supported by a grant from the Simons Foundation (354583 to Jonathan Harrison).

\appendix

\section{Pseudo Orbits for the Binary de Bruijn graph with $V=8$.}
\label{sec:appendixA}

For the binary de Bruijn graph with $V=8$ vertices, shown in figure \ref{figBGp1-2}, the variance of the coefficients of the graph's characteristic polynomial were evaluated in table \ref{figExTableVariance}.
Recall that for a binary de Bruijn graph, the primitive periodic orbits of length $n$ correspond uniquely to binary Lyndon words of length $n$.
Moreover, the primitive pseudo orbits of length $n$ correspond uniquely to the binary words of length $n$ that have a strictly decreasing Lyndon decomposition.
Table \ref{figExTableOrbits} lists the sizes of all sets of primitive pseudo orbits of the lengths corresponding to the first half of the characteristic polynomial's coefficients.
Note that, $\mathcal{P}_0^0 = \{ \bar{0} \}$, the empty pseudo orbit.

\begin{table}[htb!]
	\caption{Sizes of the sets of primitive pseudo orbits that appear in proposition \ref{thm:main} for a binary graph with $V=8$ vertices.}
	\centering
	\begin{tabular}{ c c c c c c }
		$n$ & $\PPO_2(n)$ & $|\mathcal{P}_0^n|$ & $|\widehat{\mathcal{P}}_{1}^n|$ 
		& $|\widehat{\mathcal{P}}_{2}^n|$ & $| \{ \bg : B_{\bg} = n, C_{\bg} = 0 \} |$ \\
		\hline
		0 & 1 & 1 & 0 & 0 & 0 \\
		1 & 2 & 2 & 0 & 0 & 0 \\
		2 & 2 & 2 & 0 & 0 & 0 \\
		3 & 4 & 4 & 0 & 0 & 0 \\
		4 & 8 & 8 & 0 & 0 & 0 \\
		5 & 16 & 8 & 8 & 0 & 0 \\
		6 & 32 & 8 & 20 & 0 & 4 \\
		7 & 64 & 16 & 16 & 8 & 24 \\
		8 & 128 & 16 & 16 & 24 & 72 \\
		\hline
	\end{tabular}
	\label{figExTableOrbits}
\end{table}

The non-empty sets of strictly decreasing Lyndon decompositions corresponding to primitive pseudo orbits used to produce table \ref{figExTableOrbits} are listed below. 
\begin{eqnarray}
\mathcal{P}_0^1 &= \{ (0), \ (1) \} \\
\mathcal{P}_0^2 &= \{ (01), \  (1)(0) \} \\
\mathcal{P}_0^3 &= \{ (001), \ (01)(0), \ (011), \  (1)(01) \} \\
\mathcal{P}_0^4 &= \{ (0001), \ (001)(0), \ (0011), \ (011)(0), \nonumber \\
&\qquad (0111), \ (1)(001), \ (1)(01)(0), \ (1)(011) \} \\
\mathcal{P}_0^5 &= \{ (00011), \ (0011)(0), \ (00111), \ (0111)(0), \nonumber \\
&\qquad (1)(0001), \ (1)(001)(0), \ (1)(0011), \ (1)(011)(0) \} \\
\widehat{\mathcal{P}}_{1}^5 &= \{ (00001), \ (0001)(0), \ (00101), \ (01)(001), \nonumber \\
&\qquad (01011), \  (011)(01), \ (01111), \ (1)(0111) \} \\
\mathcal{P}_0^6 &= \{ (000111), \ (001011), \ (001101), \ (1)(00011), \nonumber \\
&\qquad (00111)(0), \ (01)(0011), \ (1)(0011)(0), \ (011)(001) \} \\
\widehat{\mathcal{P}}_{1}^6 &= \{ (000011), \ (000101), \ (001111), \ (010111), \nonumber \\
&\qquad (1)(00001), \ (00011)(0), \ (00101)(0), \ (1)(00101), \nonumber \\
&\qquad (1)(00111), \ (01011)(0), \ (1)(01011), \ (01111)(0),  \nonumber \\
&\qquad (01)(0001), \ (0111)(01), \ (1)(0001)(0), \ (1)(0111)(0), \nonumber \\
&\qquad (01)(001)(0), \ (1)(01)(001), \ (011)(01)(0), \ (1)(011)(01) \} 
\end{eqnarray}
\begin{equation}
\fl \{ \bg : B_{\bg} = 6, C_{\bg} = 0 \} = \{
(000001), \ (00001)(0), \ (011111), \ (1)(01111) \}
\end{equation}
\begin{eqnarray}
\mathcal{P}_0^7 &= \{ (0001011), \ (0001101), \ (0 0 1 0 1 1)( 0), \ (0 0 1 0 1 1 1 ), \nonumber \\
&\qquad (0 0 1 1 0 1 )(0), \ (0 0 1 1 1 0 1 ) , \ (0 1)( 0 0 0 1 1), \ (0 1)( 0 0 1 1)( 0), \nonumber \\
&\qquad  (0 1)( 0 0 1 1 1 ), \ (0 1 1)( 0 0 0 1), \ (0 1 1)( 0 0 1)( 0), \ (0 1 1 1 )(0 0 1 ), \nonumber \\
&\qquad (1) (0 0 1 0 1 1 ), \  (1)( 0 0 1 1 0 1), \ (1)( 0 1 )(0 0 1 1), \ (1)( 0 1 1 )(0 0 1)  \} \\
\widehat{\mathcal{P}}_{1}^7 &= \{ (0000111), \ (000111)(0), \ (0001111), \ (0 0 1 1 1 1 )(0), \nonumber \\
&\qquad (0 1 0 1 1 1)( 0), \ (0 1 1 1)( 0 1 )(0 ), \ (1)( 0 0 0 0 1 1), \ (1)( 0 0 0 1 0 1) , \nonumber \\
&\qquad (1 )(0 0 0 1 1)( 0), \  (1)( 0 0 0 1 1 1 ), \ (1)( 0 0 1 0 1)( 0), \ (1)( 0 0 1 1 1)( 0), \  \nonumber \\
&\qquad(1 )(0 1)( 0 0 0 1), \ (1 )(0 1)( 0 0 1)( 0), \ (1)( 0 1 0 1 1)( 0) , \nonumber \\ 
&\qquad (1)( 0 1 1)( 0 1 )(0)  \} \\
\widehat{\mathcal{P}}_{2}^7 &= \{ (0000101), (000101)(0), \ (0 1)( 0 0 0 0 1), \ (0 1) (0 0 0 1)( 0), \nonumber \\
&\qquad (0 1 0 1 1 1 1 ), \ (0 1 1 1 1)( 0 1 ), \ (1)( 0 1 0 1 1 1 ), \ (1)( 0 1 1 1)( 0 1) \} 
\end{eqnarray}
%
\begin{eqnarray}
\fl \{ \bg : B_{\bg} = 7, C_{\bg} = 0 \} & =  \{ (0000001), \ (000001)(0), \ (0000011), \ (000011)(0),  \nonumber \\
&\qquad(0001001), \ (0 0 1) (0 0 0 1), \ (0 0 1)( 0 0 1 1), \ (0 0 1 0 1 0 1), \nonumber \\
&\qquad(0 0 1 1 )(0 0 1 ), \ (0 0 1 1 0 1 1 ), \  (0 0 1 1 1 1 1 ), \ (0 1)( 0 0 1 0 1 ),  \nonumber \\
&\qquad(0 1 0 1 0 1 1 ), \ (0 1 0 1 1)( 0 1 ), \ (0 1 1)( 0 0 1 1), \ (0 1 1 0 1 1 1), \nonumber \\
&\qquad(0 1 1 1)( 0 1 1 ), \ (0 1 1 1 1 1 )(0), \  (0 1 1 1 1 1 1), \ (1)( 0 0 0 0 0 1),  \nonumber \\
&\qquad(1)( 0 0 0 0 1 )(0), \ (1 )(0 0 1 1 1 1), \  (1)( 0 1 1 1 1)( 0 ), \nonumber \\
&\qquad(1)( 0 1 1 1 1 1) \} 
\end{eqnarray}
%
%
\begin{eqnarray}
\mathcal{P}_0^8 &= \{ 
(0 0 0 1 0 1 1 1), \
(0 0 0 1 1 1 0 1), \ 
(0 0 1 0 1 1 1)( 0), \ 
(0 0 1 1 1 0 1 )(0), \nonumber \\
&\qquad(0 1)( 0 0 0 1 1 1), \
(0 1)( 0 0 1 1 1 )(0), \
(0 1 1 1)( 0 0 0 1), \ 
(0 1 1 1)( 0 0 1)( 0), \nonumber \\
&\qquad(1)( 0 0 0 1 0 1 1), \ 
(1)( 0 0 0 1 1 0 1), \ 
(1)( 0 0 1 0 1 1)( 0), \ 
(1)( 0 0 1 1 0 1)( 0), \nonumber  \\
&\qquad(1)( 0 1)( 0 0 0 1 1), \ 
(1)( 0 1)( 0 0 1 1)( 0), \
(1)( 0 1 1)( 0 0 0 1 ), \nonumber \\
&\qquad(1)( 0 1 1)( 0 0 1)( 0) \}
\\
\widehat{\mathcal{P}}_{1}^8 &= \{ 
(0 0 0 0 1 0 1 1 ), \  
(0 0 0 0 1 1 0 1), \     
(0 0 0 1 0 1 1)( 0), \  
(0 0 0 1 1 0 1)( 0), \nonumber \\
&\qquad(0 0 1 0 1 1 1 1), \     
(0 0 1 1 1 1 0 1 ), \     
(0 1)( 0 0 0 0 1 1), \
(0 1)( 0 0 0 1 1)( 0), \nonumber \\
&\qquad(0 1)( 0 0 1 1 1 1 ), \  
(0 1 1) (0 0 0 0 1), \ 
(0 1 1 )(0 0 0 1 )(0 ), \ 
(0 1 1 1 1)( 0 0 1), \nonumber \\ 
&\qquad(1)( 0 0 1 0 1 1 1), \ 
(1)( 0 0 1 1 1 0 1), \ 
(1)( 0 1)( 0 0 1 1 1), \nonumber \\ 
&\qquad(1)( 0 1 1 1)( 0 0 1) \} \ 
\\
\widehat{\mathcal{P}}_{2}^8 &= \{ 
(0 0 0 0 1 1 1 1 ), \ 
(0 0 0 1 0 0 1 1 ), \ 
(0 0 0 1 1 0 0 1 ), \ 
(0 0 0 1 1 1 1)( 0), \nonumber \\
&\qquad(0 0 1 )(0 0 0 1 1), \ 
(0 0 1 0 1 1 0 1), \ 
(0 0 1 1 )(0 0 0 1), \ 
(0 0 1 1 0 1 1 1), \nonumber \\
&\qquad(0 0 1 1 1 0 1 1 ), \ 
(0 1 0 1 1 )(0 0 1), \ 
(0 1 0 1 1 1 1)( 0), \ 
(0 1 1)( 0 0 1 0 1), \nonumber \\
&\qquad(0 1 1)( 0 0 1 1 1), \ 
(0 1 1)( 0 1)( 0 0 1), \ 
(0 1 1 1)( 0 0 1 1), \ 
(0 1 1 1 1)( 0 1)( 0), \nonumber \\ 
&\qquad(1)( 0 0 0 0 1 0 1 ), \ 
(1)( 0 0 0 0 1 1 1), \
(1)( 0 0 0 1 0 1)( 0), \ 
(1)( 0 0 0 1 1 1)( 0), \nonumber \\ 
&\qquad(1)( 0 1 )(0 0 0 0 1 ), \ 
(1)( 0 1)( 0 0 0 1)( 0), \ 
(1)( 0 1 0 1 1 1)( 0), \nonumber \\ 
&\qquad(1)( 0 1 1 1)( 0 1)( 0) \}
\end{eqnarray}
%
\begin{eqnarray}
\fl \{ \bg : B_{\bg} = 8, C_{\bg} = 0 \} &=  \{ 
(0 0 0 0 0 0 0 1 ), \ 
(0 0 0 0 0 0 1)( 0), \ 
(0 0 0 0 0 0 1 1 ), \ 
(0 0 0 0 0 1 0 1 ), \nonumber \\ 
&\qquad(0 0 0 0 0 1 1)( 0), \ 
(0 0 0 0 0 1 1 1), \  
(0 0 0 0 1 0 0 1), \
(0 0 0 0 1 0 1)( 0), \nonumber \\ 
&\qquad(0 0 0 0 1 1 1 )(0), \ 
(0 0 0 1 0 0 1)( 0), \
(0 0 0 1 0 1 0 1), \  
(0 0 0 1 1 0 1 1 ), \nonumber \\ 
&\qquad(0 0 0 1 1 1 1 1), \ 
(0 0 1)( 0 0 0 0 1), \ 
(0 0 1)( 0 0 0 1 )(0), \ 
(0 0 1 0 0 1 0 1), \nonumber \\ 
&\qquad(0 0 1 0 0 1 1)( 0), \ 
(0 0 1 0 0 1 1 1), \ 
(0 0 1 0 1)( 0 0 1), \ 
(0 0 1 0 1 0 1 )(0 ), \nonumber \\ 
&\qquad(0 0 1 0 1 0 1 1), \ 
(0 0 1 1 )(0 0 1)( 0), \ 
(0 0 1 1 0 1 0 1), \ 
(0 0 1 1 0 1 1 )(0), \nonumber \\ 
&\qquad(0 0 1 1 1)( 0 0 1), \  
(0 0 1 1 1 1 1)( 0), \ 
(0 0 1 1 1 1 1 1), \ 
(0 1)( 0 0 0 0 0 1), \nonumber \\ 
&\qquad(0 1)( 0 0 0 0 1 )(0 ), \ 
(0 1)( 0 0 0 1 0 1), \ 
(0 1)( 0 0 1 0 1)( 0), \nonumber \\ 
&\qquad(0 1)( 0 0 1 0 1 1), \
(0 1)( 0 0 1 1 0 1 ), \ 
(0 1 0 1 0 1 1)( 0), \ 
(0 1 0 1 0 1 1 1 ), \nonumber \\
&\qquad(0 1 0 1 1)( 0 1)( 0), \ 
(0 1 0 1 1 0 1 1 ), \  
(0 1 0 1 1 1)( 0 1), \ 
(0 1 0 1 1 1 1 1), \nonumber \\ 
&\qquad(0 1 1)( 0 0 0 1 1), \ 
(0 1 1 )(0 0 1 1)( 0), \ 
(0 1 1)( 0 1 0 1 1), \ 
(0 1 1 0 1 1 1)( 0), \nonumber \\  
&\qquad(0 1 1 0 1 1 1 1 ), \ 
(0 1 1 1)( 0 1 1)( 0), \ 
(0 1 1 1 1)( 0 1 1), \ 
(0 1 1 1 1 1)( 0 1), \nonumber \\ 
&\qquad(0 1 1 1 1 1 1)( 0), \ 
(0 1 1 1 1 1 1 1), \  
(1) (0 0 0 0 0 0 1 ), \ 
(1)( 0 0 0 0 0 1 )(0), \nonumber \\ 
&\qquad(1)( 0 0 0 0 0 1 1), \
(1)( 0 0 0 0 1 1)( 0), \ 
(1)( 0 0 0 1 0 0 1), \ 
(1)( 0 0 0 1 1 1 1), \nonumber \\   
&\qquad(1)( 0 0 1 )(0 0 0 1), \ 
(1)( 0 0 1 0 0 1 1), \ 
(1)( 0 0 1 0 1 0 1), \ 
(1)( 0 0 1 1 )(0 0 1), \nonumber \\ 
&\qquad(1)( 0 0 1 1 0 1 1), \ 
(1)( 0 0 1 1 1 1)( 0), \  
(1)( 0 0 1 1 1 1 1), \ 
(1)( 0 1)( 0 0 1 0 1), \nonumber \\ 
&\qquad(1)( 0 1 0 1 0 1 1), \ 
(1)( 0 1 0 1 1 )(0 1), \ 
(1)( 0 1 0 1 1 1 1), \
(1)( 0 1 1)( 0 0 1 1), \nonumber \\ 
&\qquad(1)( 0 1 1 0 1 1 1), \
(1)( 0 1 1 1 )(0 1 1, \
(1)( 0 1 1 1 1)( 0 1), \nonumber \\  
&\qquad(1)( 0 1 1 1 1 1)( 0), \
(1)( 0 1 1 1 1 1 1) \} 
\end{eqnarray}

\section{Pseudo Orbits for the Binary Graph with $V=6$.}
\label{sec:appendixB}

For the binary graph with $V=6$ vertices shown in figure \ref{figBGp3}, the variance of the coefficients of the graph's characteristic polynomial were evaluated in table \ref{figExTableVariance-2}.
For the binary graph family with $p=3$, primitive periodic orbits of length $n$ are not in bijection with binary Lyndon words of length $n$.
Consequently, we use words over the vertex label alphabet $\mathcal{V} = \{ 0, 1, 2, 3, 4, 5 \}$ to represent a closed path; rotations of the word are closed paths that belong to the same periodic orbit.
A word of length $n$ corresponds to a closed path of length $n$.
For example, the word $013$ labels the path of length of three that traverses vertices 0, 1, 3, and returns to 0.
We use parentheses to mark different primitive periodic orbits in a primitive pseudo orbit.
Table \ref{figExTableOrbits-2} shows the sizes of the sets of primitive pseudo orbits that appear in proposition \ref{thm:main} with lengths corresponding to the first half of the characteristic polynomial's coefficients.
As $C_3= 5/4$, the number of primitive pseudo orbits of length $n$ is $\PPO_3(n) = 5\cdot 2^{n-3}$ for $n > 3$, by proposition \ref{thmNumPPOs}.
Note that, $\mathcal{P}_0^0 = \{ \bar{0} \}$, the empty pseudo orbit.

\begin{table}[htb!]
	\caption{Sizes of the sets of primitive pseudo orbits that appear in proposition \ref{thm:main} for a binary graph with $V=6$ vertices.}
	\centering
	\begin{tabular}{ c c c c c c }
		$n$ & $\PPO_2(n)$ & $|\mathcal{P}_0^n|$ & $|\widehat{\mathcal{P}}_{1}^n|$ 
		& $| \{ \bg : B_{\bg} = n, C_{\bg} = 0 \} |$ \\
		\hline
		0 & 1 & 1 & 0 & 0 \\
		1 & 2 & 2 & 0 & 0 \\
		2 & 3 & 3 & 0 & 0 \\
		3 & 6 & 6 & 0 & 0 \\
		4 & 10 & 6 & 4 & 0 \\
		5 & 20 & 8 & 4 & 8 \\
		6 & 40 & 8 & 8 & 24 \\
		\hline
	\end{tabular}
	\label{figExTableOrbits-2}
\end{table}

The non-empty sets of primitive pseudo orbits in table \ref{figExTableOrbits-2}, are listed below. 
\begin{eqnarray}
\mathcal{P}_0^1 &= \{ (0), \ (5) \} \\
\mathcal{P}_0^2 &= \{ (13), \  (24), \ (5)(0) \} \\
\mathcal{P}_0^3 &= \{ (013), \ (13)(0), \ (24)(0), \ (254),  \  (5)(13),  \ (5)(24) \} \\
\mathcal{P}_0^4 &= \{ (1243), \ (24)(13), \ (254)(0), \ (5)(013), \ (5)(13)(0), \ (5)(24)(0) \} 
\end{eqnarray}
\begin{equation}
\widehat{\mathcal{P}}_0^4 = \{
(0013), \ (013)(0), \ (2554), \ (5)(254) \}
\end{equation}
\begin{eqnarray} 
\mathcal{P}_0^5 &= \{ 
(01243), \ (12543), \ (1243)(0), \ (5)(1243), \nonumber \\
&\qquad(24)(013), \ (254)(13), \ (24)(13)(0), \ (5)(24)(13) \} \\
\widehat{\mathcal{P}}_{1}^5 &= \{ (5)(0013), \ (2554)(0), \ (5)(013)(0), \ (5)(254)(0) \} 
\end{eqnarray}
\begin{eqnarray}
\fl \{ \bg : B_{\bg} = 5, C_{\bg} = 0 \} = \{
&(00013), \ (0013)(0), \ (01313), \ (13)(013), \nonumber \\
&(24254), \ (254)(24), \ (25554), \ (5)(2554) \}
\end{eqnarray}
\begin{eqnarray}
\mathcal{P}_0^6 &= \{ (012543), \ (12543)(0), \ (254)(013), \ (254)(13)(0), \nonumber \\
&\qquad(5)(01243), \ (5)(1243)(0), \ (5)(24)(013), \ (5)(24)(13)(0) \} \\
\widehat{\mathcal{P}}_{1}^6 &= \{ 
(001243), \ (01243)(0), \ (24)(013)(0), \ (24)(0013), \nonumber \\ 
&\qquad (2554)(13), \ (5)(12543), \ (5)(254)(24), \ (554312),  \} 
\end{eqnarray}
\begin{eqnarray}
\fl \{ \bg : B_{\bg} = 6, C_{\bg} = 0 \} &= \{
(000013), \ (00013)(0), \ (001313), \ (01313)(0), \nonumber\\
&\qquad(124243), \ (124313), \ (13)(0013), \ (13)(013)(0), \nonumber \\
&\qquad(13)(1243), \ (24)(1243), \ (24254)(0), \ (254)(24)(0), \nonumber \\
&\qquad(2554)(24), \ (25554)(0), \ (5)(00013), \ (5)(0013)(0), \nonumber \\
&\qquad(5)(01313), \ (5)(13)(013), \ (5)(24254), \ (5)(254)(24), \nonumber \\
&\qquad(5)(2554)(0), \ (5)(25554), \ (554242), \ (555542) \}
\end{eqnarray}

\section*{References}


\begin{thebibliography}{99}
\bibitem{Aetal00}
Akkermans E, Comtet A, Debois J, Montanbaux G and Texier C
2000
Spectral determinant on quantum graphs
{\it Ann. Phys.}
{\bf 284}
10--51

\bibitem{BHJ12}
Band R, Harrison JM and CH Joyner
2012
Finite pseudo orbit expansions for spectral quantities of quantum graphs
\jpa
{\bf 45}
325204

\bibitem{BHS19}
Band R, Harrison JM and Sepanski M
2019
Lyndon word decompositions and pseudo orbits on $q$-nary graphs
{\it J. Math. Anal. Appl.}
{\bf 470}
135--144

\bibitem{BBK01}
Berkolaiko G, Bogomolny EB and Keating JP 
2001
Star graphs and \v{S}eba billiards 
\JPA 
{\bf 34}
335--350

\bibitem{BHN08}
Berkolaiko G, Harrison JM and Novaes M
2008
Full counting statistics of chaotic cavities from classical action correlations
\jpa
{\bf 41}
365102

\bibitem{BK13}
Berkolaiko G and Kuchment P
2013
{\it Introduction to Quantum Graphs (Mathematical Surveys and Monographs}
vol~186)
(Providence, RI: American Mathematical Society)

\bibitem{BK12}
Berkolaiko G and Kuipers J 
2012
Universality in chaotic quantum transport: The concordance between random matrix and semiclassical theories 
{\it Phys. Rev. E}
{\bf 85} 
045201(R)

\bibitem{BSW02} 
Berkolaiko G, Schanz H and Whitney RS
2002
Leading off-diagonal correction to the form factor of large graphs
\PRL 
{\bf 88}
104101

\bibitem{BSW03}
Berkolaiko G, Schanz H and Whitney RS
2003
Form factor for a family of quantum graphs: an expansion to third order
\JPA
{\bf 36}
8373--8392

\bibitem{B85}
Berry MV
1985
Semiclassical theory of spectral rigidity
{\it Proc. R. Soc. A}
{\bf 400}
229–-251

\bibitem{BK90}
Berry MV and Keating JP
1990 
A rule for quantizing chaos? 
{\JPA}
{\bf 23}
4839--4849

\bibitem{BK92}
Berry MV and Keating JP 
1992
A new asymptotic representation for $\zeta (1/2+\rmi t)$ and quantum spectral determinants 
{\it Proc. R. Soc. Lond. A} 
{\bf 437}
151--173



\bibitem{B92}
Bogomolny E
1992
Semiclassical quantization of multidimensional systems 
{\it Nonlinearity}
{\bf 5} 
805


\bibitem{BE09}
Bolte J and Endres S
2009
The trace formula for quantum graphs with general self adjoint boundary conditions
{\it Ann. Henri Poincar{\'{e}}}
{\bf 10}
189--223 

\bibitem{BH03}
Bolte J and Harrison JM
2003
The spin contribution to the form factor of quantum graphs
\JPA
{\bf 36}
L433--L440

\bibitem{CFL58}
Chen KT, Fox RH and Lyndon RC
1958
Free differential calculus, IV
{\it Ann. Math.}
{\bf 68}
81--95

\bibitem{D79}
Davis PJ, 
1979 
{\it Circulant Matrices} 
(New York, NY: Wiley)

\bibitem{DOW05}
Degli Esposti M, O'Keefe S, Winn B 
2005 
A semi-classical study of the Casati-Prosen triangle map
{\it Nonlinearity} 
{\bf 18}
1073--1094

\bibitem{D83}
Duval JP
1983
Factorizing Words over an Ordered Alphabet
{\it J. Algorithms}
{\bf 4} 
363--381

\bibitem{F19-1}
Faal HT
2019
A multiset version of determinants and the coin arrangements lemma
{\it Theor. Comput. Sci.}
{\bf 793}
36--43	

\bibitem{F19-2}
Faal HT
2019
A multiset version of even-odd permutations identity
{\it Int. J. Found. Comput.}
{\bf 30}
683--691

\bibitem{GM15}
Garcia-Planas MI and Magret M
2015
Eigenvalues and eigenvectors of monomial matrices
{\it Proceedings of the XXIV Congress on Differential Equations and Applications} 
963--966

\bibitem{GS06}
Gnutzmann S and Smilansky U
2006
Quantum graphs: applications to quantum chaos and universal spectral statistics
{\it Adv. Phys.}
{\bf 55}
527--625

\bibitem{G71}
Gutzwiller MC
1971
Periodic orbits and classical quantization conditions 
{\it J. Math. Phys.} 
{\bf 12} 
343-–358

\bibitem{G90}    
Gutzwiller MC
1990
{\it Chaos in Classical and Quantum Mechanics} 
(Springer, New York)

\bibitem{HA84}
Hannay JH and Ozorio de Almeida AM 
1984
Periodic orbits and a correlation function for the semiclassical density of states 
\JPA
{\bf 17}
3429--3440


\bibitem{HSW07}
Harrison JM, Smilansky U and Winn B
2007
Quantum graphs where back-scattering is prohibited
\jpa
{\bf 40}
14181--14193

\bibitem{HH22}
Harrison JM and Hudgins TK
2022
Periodic-orbit evaluation of a spectral statistic of quantum graphs without the semiclassical limit
{\it EPL}
{\bf 138}
30002

\bibitem{H20}
Hudgins TK
2020
Orbits, pseudo orbits, and the characteristic polynomial of $q$-nary quantum graphs
{\it ProQuest Dissertations Publishing}
28028730

\bibitem{K92}
Keating JP 
1992
Periodic orbit resummation and the quantumzation of chaos 
{\it Proc. R. Soc. A} 
{\bf 436}
99--108

\bibitem{KM00}
Keating JP and Mezzadri F
2000
Pseudo-symmetries of Anosov maps and spectral statistics
{\it Nonlinearity}
{\bf 13}
747--775

\bibitem{KMM01}
Keppeler S, Marklof J and Mezzadri F
2001
Quantum cat maps with spin $1/2$
{\it Nonlinearity}
{\bf 14}
719--738

\bibitem{KPS07}
Kostrykin V, Potthoff J, and Schrader R 
2007
Heat kernels on metric graphs and a trace formula
Adventure in Mathematical Physics (F. Germinet and P.D. Hislop, eds.)
{\it Contemp. Math.}
{\bf 447} (Amer. Math. Soc., Providence, RI)
175--198

\bibitem{KS99}
Kottos T and Smilansky U
1999
Periodic orbit theory and spectral statistics for quantum graphs
{\it Ann. Phys.} 
{\bf 274}
76--124

\bibitem{KOR14}
Kurasov P, Ogik R and Rauf A
2014
On reflectionless equi-transmitting matrices
{\it Opuscula Math.}
{\bf 34}
483--501

\bibitem{L83}
Lothaire M 
1983
{\it Combinatorics on Words}
(Reading, Massachusetts: Addison-Wesley)

\bibitem{Metal04}
M\"uller S, Heusler S, Braun P, Haake F, and Altland A
2004
Semiclassical foundation of universality in quantum chaos
\PRL
{\bf 93}
014103

\bibitem{Metal05}
M\"uller S, Heusler S, Braun P, Haake F, and Altland A
2005
Periodic-orbit theory of universality in quantum chaos
{\it Phys. Rev. E}	
{\bf 72}
046207

\bibitem{NM09}
Nagao T and M\"uller
2009
The $n$-level spectral correlations for chaotic systems 	
{\jpa}	
{\bf 42}
375102


\bibitem{R83}
Roth J-P 
1983
Spectre du Laplacien sur un graphe 
{\it C. R. Acad. Sci. Paris S\'er. I Math.} 
{\bf 296}
793--795


\bibitem{S60}
Sherman S
1960
Combinatorial aspects of the Ising model for ferromagnetism.  I. A conjecture of Feynman on paths and graphs
\JMP
{\bf 1}
202--217

\bibitem{S02} 
Sieber M 
2002 
Leading off-diagonal approximation for the spectral form factor for uniformly hyperbolic systems
\JPA
{\bf 35} 
L613--L619

\bibitem{SR01}
Sieber M and Richter K
2001	
Correlations between periodic orbits and their role in spectral statistics
\PS
{\bf T90}
128

\bibitem{T00}
Tanner G
2000	
Spectral statistics for unitary transfer matrices of binary graphs
\JPA
{\bf 33}
3567--3585

\bibitem{T01}
Tanner G
2001
Unitary-stochastic matrix ensembles and spectral statistics
\JPA
{\bf 34}
8485--8500

\bibitem{T02}
Tanner G
2002	
The autocorrelation function for spectral determinants of quantum graphs
\JPA
{\bf 35}
5985--5995

\bibitem{TC11}
Turek O and Cheon T
2011
Quantum graph vertices with permutation-symmetric scattering probabilities
{\it Phys. Lett. A}
{\bf 375}
3775--3780	

\end{thebibliography}
\end{document}